%

\documentclass[11pt,a4paper]{article}
\pdfoutput=1

\usepackage[margin=1.0in]{geometry}

\usepackage{url}
\usepackage{enumitem}
\usepackage[utf8]{inputenc}

%
%

\usepackage{amsmath,amssymb,euscript}
\usepackage{amscd} 

\usepackage{graphicx}
\usepackage{color}

\usepackage{ntheorem}

\usepackage{float}

\floatstyle{ruled}
\newfloat{structure}{Hhtb}{los}
\floatname{structure}{Structure}

\usepackage{subfig}
\usepackage{algorithmic}
\usepackage{algorithm}
\usepackage{wrapfig}


\newcounter{prob}
        {\end{list}}
%
 %
\newenvironment{proof}[1][{}]{
  \begin{trivlist}\item[]\textit{Proof #1}\quad}%
  {\hfill\hspace*{\fill}~$\square$\end{trivlist}}


\theorembodyfont{\normalfont}
\newtheorem{thm}{Theorem} 

\newtheorem{lem}[thm]{Lemma}
\newtheorem{de}[thm]{Definition}

\newtheorem{mainthm}{Theorem} 
\newtheorem{hyps}{Hypotheses}

\theoremstyle{empty}
\newtheorem{dup}{Duplicate} 

%

\definecolor{turquoise}{cmyk}{0.65,0,0.1,0.1}


\newcommand{\rawdef}[1]{\emph{#1}} 
\newcommand{\defn}[1]{\rawdef{#1}\index{#1}}

\newcommand{\Defref}[1]{Definition~\ref{#1}}
\newcommand{\Eqnref}[1]{Equation~\eqref{#1}}
\newcommand{\Figref}[1]{Figure~\ref{#1}}
\newcommand{\Lemref}[1]{Lemma~\ref{#1}}

\newcommand{\Secref}[1]{Section~\ref{#1}}

\newcommand{\Thmref}[1]{Theorem~\ref{#1}}

\DeclareMathOperator{\convh}{conv}
\newcommand{\convhull}[1]{\convh(#1)}

\newcommand{\bigo}[1]{O(#1)} 

\newcommand{\R}{\mathbb{R}}
\newcommand{\reel}{\mathbb{R}}

\newcommand{\rem}{\reel^m}

\newcommand{\norm}[1]{\lVert#1\rVert}
\newcommand{\abs}[1]{\lvert#1\rvert}
\newcommand{\transp}[1]{{#1}^\mathsf{T}}

\newcommand{\card}[1]{\#{#1}} 

\newcommand{\inv}[1]{{#1}^{-1}}
\newcommand{\invtransp}[1]{{#1}^{-T}}


\newcommand{\bdry}[1]{\partial{#1}}

\DeclareMathOperator{\starr}{star}
\newcommand{\str}[1]{\starr(#1)}

\newcommand{\asimplex}[1]{\{#1\}} 
\newcommand{\simplex}[1]{[#1]} 
\newcommand{\carrier}[1]{\abs{#1}} 




\newcommand{\ambdim}{N}

\newcommand{\amb}{\reel^{\ambdim}} 

\newcommand{\gdist}{d} 
\newcommand{\dist}[2]{\gdist(#1,#2)}
\newcommand{\gdistG}[1]{\gdist_{#1}} 
\newcommand{\distG}[3]{\gdistG{#1}(#2,#3)}


\newcommand{\distEm}[2]{\distG{\rem}{#1}{#2}}


\newcommand{\gdistM}{\gdistG{\man}}
\newcommand{\distM}[2]{\distG{\man}{#1}{#2}}


\newcommand{\close}[1]{\overline{#1}} 
 

\newcommand{\ball}[2]{B(#1,#2)} 
\newcommand{\cball}[2]{\close{B}(#1,#2)} 
\newcommand{\spaceball}[3]{B_{#1}(#2,#3)} 

\newcommand{\ballEm}[2]{\spaceball{\rem}{#1}{#2}} 
\newcommand{\cballEm}[2]{\cspaceball{\rem}{#1}{#2}} 

\newcommand{\ballM}[2]{\spaceball{\man}{#1}{#2}}

\DeclareMathOperator{\aff}{aff} 
\newcommand{\affhull}[1]{\aff(#1)}


\newcommand{\pts}{\mathsf{P}}


\newcommand{\man}{\mathcal{M}}
\newcommand{\tman}{\tilde{\man}}





\DeclareMathOperator{\Del}{Del}
\newcommand{\del}[2]{\Del_{#1}(#2)} 
\newcommand{\delof}[1]{\Del(#1)}
\newcommand{\delP}{\delof{\pts}}




\newcommand{\pertconst}{\rho}

\newcommand{\samconst}{\epsilon}
\newcommand{\tsamconst}{\tilde{\epsilon}}
\newcommand{\sparseconst}{\mu_0} 
 

\newcommand{\pert}{\zeta} 



\newcommand{\sing}[2]{s_{#1}(#2)}
\newcommand{\pseudoinv}[1]{#1^\dagger}

\newcommand{\splxs}{\sigma}
\newcommand{\tsplxs}{\tilde{\sigma}}
\newcommand{\splxt}{\tau}
\newcommand{\tsplxt}{\tilde{\tau}}

\newcommand{\splxjoin}[2]{{#1}*{#2}}
\newcommand{\normhull}[1]{N(#1)}

\newcommand{\opface}[2]{#2_{#1}} 

\newcommand{\splxtp}{\opface{p}{\splxt}}

\newcommand{\thickbnd}{\Upsilon_0}
\newcommand{\flakebnd}{\Gamma_0}
\newcommand{\tflakebnd}{\tilde{\Gamma}_0}
\newcommand{\thickness}[1]{\Upsilon(#1)}

\newcommand{\splxalt}[2]{D(#1,#2)} 
\newcommand{\longedge}[1]{\Delta(#1)}
\newcommand{\shortedge}[1]{L(#1)}
\newcommand{\circrad}[1]{R(#1)}
\newcommand{\circcentre}[1]{C(#1)}



\DeclareMathOperator{\injr}{inj}
\newcommand{\injrad}[1]{\injr(#1)}
\newcommand{\injradM}{\injrad{\man}}


\renewcommand{\distEm}[2]{\dist{#1}{#2}}
\renewcommand{\ballEm}[2]{\ball{#1}{#2}} 
\renewcommand{\cballEm}[2]{\cball{#1}{#2}} 

\newcommand{\pertbnd}{\rho_0} 
\newcommand{\tpertbnd}{\tilde{\rho}_0} 
\newcommand{\psparseconst}{\sparseconst'}
\newcommand{\psamconst}{\samconst'}
\newcommand{\ppts}{\pts'}

\newcommand{\rdelsmhull}[1]{\Del_{|}(#1)}

\newcommand{\mueps}{(\sparseconst, \samconst)}
\newcommand{\pmueps}{(\psparseconst, \psamconst)}
\newcommand{\ueset}{$\mueps$-net}
\newcommand{\pueset}{$\pmueps$-net}

\newcommand{\dgconfig}{forbidden configuration} 

\newcommand{\hoopbnd}{\alpha_0}
\newcommand{\thoopbnd}{\tilde{\alpha}_0}
\newcommand{\hoop}{$\hoopbnd$-hoop}

\newcommand{\circsphere}[1]{S(#1)}
\newcommand{\diasphere}[1]{S^{m-1}(#1)}


\newcommand{\apts}{\mathcal{N}}

\newcommand{\nbrs}[1]{\apts_{#1}}
\newcommand{\nbrsi}{\nbrs{i}}

\newcommand{\chart}[1]{\phi_{#1}}
\newcommand{\charti}{\chart{i}}
\newcommand{\pchart}[1]{\phi'_{#1}}
\newcommand{\pcharti}{\pchart{i}}

\newcommand{\fchart}[1]{\varphi_{#1}} 
\newcommand{\fcharti}{\fchart{i}}
\newcommand{\tchart}[1]{\tilde{\varphi}_{#1}}
\newcommand{\tcharti}{\tchart{i}}

\newcommand{\trans}[2]{\varphi_{#2 #1}} 
\newcommand{\transij}{\trans{i}{j}}

\newcommand{\outmesh}{\delof{\ppts}}

\newcommand{\gdisti}{\gdistG{i}}
\newcommand{\disti}[2]{\distG{i}{#1}{#2}}
\newcommand{\gdistj}{\gdistG{j}}
\newcommand{\distj}[2]{\distG{j}{#1}{#2}}
\newcommand{\gdistk}{\gdistG{k}}
\newcommand{\distk}[2]{\distG{k}{#1}{#2}}


\newcommand{\balli}[2]{\spaceball{i}{#1}{#2}} 
\newcommand{\ballj}[2]{\spaceball{j}{#1}{#2}} 

\newcommand{\metpert}{\xi}

\newcommand{\metlipconst}{\xi_0}

\newcommand{\lipsamconst}{\samconst_0}

\newcommand{\qpts}{\mathsf{S}}

\newcommand{\lpts}{\mathsf{Q}'}
\newcommand{\lptsi}{\mathsf{Q}'_i}
\newcommand{\lptsj}{\mathsf{Q}'_j}


\usepackage[pagebackref=true]{hyperref}


\title{Delaunay triangulation of manifolds
}

\author{
Jean-Daniel Boissonnat
\footnote{
This research has been partially supported by the 7th Framework
Programme for Research of the European Commission, under FET-Open
grant number 255827 (CGL Computational Geometry Learning). Partial 
support has also been provided by the Advanced Grant of the European 
Research Council GUDHI (Geometric Understanding in Higher Dimensions).
}
\\
{\normalsize Geometrica}\\
{\normalsize INRIA}\\
{\normalsize Sophia-Antipolis, France}\\
\url{Jean-Daniel.Boissonnat@inria.fr}
\and
Ramsay Dyer
\footnotemark[1]
\\
{\normalsize Johann Bernoulli Institute}\\
{\normalsize University of Groningen}\\ 
{\normalsize Groningen, The Netherlands}\\
\url{r.h.dyer@rug.nl}
\and
Arijit Ghosh
\footnote{
Supported by the Indo-German Max Planck Center for Computer Science (IMPECS).
}
\footnotemark[1]
\footnote{
Part of the work was done when the author was a visiting scientist at 
ACM Unit, Indian Statistical Institute, Kolkata, India.
}
\\
{\normalsize D1: Algorithms \& Complexity}\\
{\normalsize Max-Planck-Institut f\"ur Informatik}\\
{\normalsize Saarbr\"ucken, Germany}\\
\url{agosh@mpi-inf.mpg.de}
}

\begin{document}

\pagenumbering{roman}
\maketitle
%

\begin{abstract}
  We present an algorithm for producing Delaunay triangulations of manifolds.  The algorithm can accommodate abstract manifolds that are not presented as submanifolds of Euclidean space. Given a set of sample points and an atlas on a compact manifold, a manifold Delaunay complex is produced provided the transition functions are bi-Lipschitz with a constant close to 1, and the sample points meet a local density requirement; no smoothness assumptions are required. If the transition functions are smooth, the output is a triangulation of the manifold.

  The output complex is naturally endowed with a piecewise flat metric which, when the original manifold is Riemannian, is a close approximation of the original Riemannian metric.  In this case the ouput complex is also a Delaunay triangulation of its vertices with respect to this piecewise flat metric.

  \paragraph{Keywords.} Delaunay complex, bi-Lipschitz function, manifold, protection, perturbation
  
\end{abstract}


\thispagestyle{empty}

\clearpage
\tableofcontents

\clearpage

\pagenumbering{arabic}


\section{Introduction}
\label{sec:intro}

We present an algorithm for computing Delaunay triangulations of Riemannian manifolds. Not only is this the first algorithm guaranteed to produce a Delaunay triangulation of an arbitrary compact Riemannian manifold, it also provides the first theoretical demonstration of the existence of such triangulations on manifolds of dimension greater than 2 with non-constant curvature. 

The Delaunay complex is a natural structure to consider when seeking to triangulate a space equipped with a metric. It plays a central role in the development of algorithms for meshing Euclidean domains. In applications where an anisotropic mesh is desired, a standard approach is to consider a Riemannian metric defined over the domain and to construct an approximate Delaunay triangulation with respect to this metric~\cite{labelle2003,boissonnat2011aniso.tr,canas2012}. In this context we can consider the domain to be a Riemannian manifold that admits a global coordinate parameterisation. The algorithm we present here encompasses this setting, modulo boundary considerations.

In the case of surfaces, it has been shown that a Delaunay triangulation exists if the set of sample points is sufficiently dense \cite{leibon1999,dyer2008sgp}. However, in higher dimensional manifolds, contrary to previous claims~\cite{leibon2000}, sample density alone is not sufficient to ensure a triangulation \cite[App. A]{boissonnat2013stab2.inria}. 

In Euclidean space, $\rem$, the Delaunay complex on a set of points $\pts$ is a triangulation provided the points are \defn{generic}, i.e., no ball empty of points contains more than $m+1$ points of $\pts$ on its boundary \cite{delaunay1934}. A point set that is not generic is said to be \defn{degenerate}, and such configurations can be avoided with an arbitrarily small perturbation. However, when the metric is no longer homogeneous, an arbitrarily small perturbation is not sufficient to guarantee a triangulation. In previous work~\cite{boissonnat2013stab1} we have shown that genericity can be parameterised, with the parameter, $\delta$, indicating how far the point set is from degeneracy. A $\delta$-generic point set in Euclidean space yields a Delaunay triangulation that is quantifiably stable with respect to perturbations of the metric, or of the point positions. We later produced an algorithm~\cite{boissonnat2014flatpert} that, given an initial point set $\pts \subset \rem$, generates a perturbed point set $\pts'$ that is $\delta$-generic. The algorithm we present here adapts this Euclidean perturbation algorithm to the context of compact manifolds equipped with a metric that can be locally approximated by a Euclidean metric. In particular, this includes Riemannian metrics, as well as the extrinsic metric on submanifolds, which defines the so-called restricted Delaunay complex, variations of which have been exploited in algorithms for reconstructing submanifolds of Euclidean space from a finite set of sample points~\cite{cheng2005,boissonnat2014tancplx.dcg}.

The simplicial complex produced by our algorithm is naturally equipped with a piecewise flat metric that is a quantifiably good approximation to the metric on the original manifold. The stability properties of the constructed Delaunay triangulation yield additional benefits. In particular, the produced complex is a Delaunay triangulation of its vertices with respect to its own intrinsic piecewise-flat metric: a self-Delaunay complex. Such complexes are of interest in discrete differential geometry because they provide a natural setting for discrete exterior calculus~\cite{bobenko2007,dyer2010thesis,hirani2012}.

In \Secref{sec:pert.strategy} we review the main ideas involved in the perturbation algorithm~\cite{boissonnat2014flatpert} for producing $\delta$-generic point sets in Euclidean space. The extension of the algorithm to general manifolds is described in \Secref{sec:overview}, where we also state our main results. Sections~\ref{sec:analysis.outline} and \ref{sec:details} describe the analysis that leads to these results.

\paragraph{Contributions}
In this paper we provide the first proof of existence of Delaunay triangulations of arbitrary compact Riemannian manifolds. The proposed algorithm is the first triangulation algorithm that can accommodate abstract manifolds that are not presented as submanifolds of Euclidean space. 
The output complex is a good geometric approximation to the original manifold, and also posesses the self-Delaunay property. These results are summarised in \Thmref{thm:riem.del.tri}.

The algorithm accommodates more general inputs than Riemannian manifolds: strong bi-Lipschitz constraints are required on the transition functions, but they need not be smooth. In this case we cannot guarantee a triangulation, but, as stated in Theorems~\ref{thm:man.mesh} and \ref{thm:output.del.cplx}, the output complex is a manifold Delaunay complex (it is the nerve of the Voronoi diagram of the perturbed points $\ppts \subset \man$, and it is a manifold).

The framework encompasses and unifies previous algorithms for constructing anisotropic meshes, and for meshing submanifolds of Euclidean space, and the algorithm itself is conceptually simple (if not the analysis).


\section{The perturbation strategy}
\label{sec:pert.strategy}

We outline here the main ideas behind the Euclidean perturbation algorithm~\cite{boissonnat2014flatpert}, upon which the current algorithm is based.  Given a set $\pts \subset \rem$, that algorithm produces a perturbed point set $\pts'$ that is $\delta$-generic. This means that the Delaunay triangulation of $\pts'$ will not change if the metric is distorted by a small amount. 


\paragraph{Thickness and protection}
We consider a finite set $\pts \subset \rem$.  A simplex $\splxs \subset \pts$ is a finite collection of points: $\splxs = \asimplex{p_0, \ldots, p_j}$. We work with abstract simplices, and in particular $x \in \splxs$ means $x$ is a vertex of $\splxs$.  The join of two simplices, $\splxjoin{\splxt}{\splxs}$, is the union of their vertices.  By the standard abuse of notation, a point $p$ may represent the $0$-simplex $\{p\}$.  Although we prefer abstract simplices, we freely talk about standard geometric properties, such as the longest edgelength, $\longedge{\splxs}$, and the length of the shortest edge $\shortedge{\splxs}$.

For $p \in \splxs$, $\opface{p}{\splxs}$ is the facet opposite $p$, and $\splxalt{p}{\splxs}$ is the \defn{altitude} of $p$ in $\splxs$, i.e., $\splxalt{p}{\splxs}= \distEm{p}{\affhull{\opface{p}{\splxs}}}$.  The \defn{thickness} of $\splxs$ is a measure of the quality of $\splxs$, and is denoted $\thickness{\splxs}$. If $\splxs$ is a $0$-simplex, then $\thickness{\splxs}=1$. Otherwise $\thickness{\splxs}$ is the smallest altitude of $\splxs$ divided by $j\longedge{\splxs}$, where $j$ is the dimension of $\splxs$.  If $\thickness{\splxs} = 0$, then $\splxs$ is \defn{degenerate}.  We say that $\splxs$ is $\thickbnd$-thick, if $\thickness{\splxs} \geq \thickbnd$. If $\splxs$ is $\thickbnd$-thick, then so are all of its faces.

A \defn{circumscribing ball} for a simplex $\splxs$ is any $m$-dimensional ball that contains the vertices of $\splxs$ on its boundary.  A degenerate simplex may not admit any circumscribing ball.  If $\splxs$ admits a circumscribing ball, then it has a \defn{circumcentre}, $\circcentre{\splxs}$, which is the centre of the unique smallest circumscribing ball for $\splxs$. The radius of this ball is the \defn{circumradius} of $\splxs$, denoted $\circrad{\splxs}$.  

A ball $\ballEm{c}{r}$ is open, and $\cballEm{c}{r}$ is its closure.
The Delaunay complex, $\delP$ is the (abstract) simplicial complex
defined by the criterion that a simplex belongs to $\delP$ if it has a
circumscribing ball whose intersection with $\pts$ is empty. For $p \in \pts$, the \defn{star} of $p$ is the subcomplex $\str{p;\delP}$ consisting of all simplices that contain $p$, as well as the faces of those simplices. An
$m$-simplex $\splxs^m \in \delP$ is \defn{$\delta$-protected} if
$\cballEm{\circcentre{\splxs^m}}{\circrad{\splxs^m} + \delta} \cap
\pts = \splxs^m$. The point set $\pts \subset \rem$ is \defn{$\delta$-generic} if all the $m$-simplices in $\delP$ are $\delta$-protected.

\paragraph{Forbidden configurations}
The essential observation that leads to the perturbation algorithm is that if $\pts \subset \rem$ is such that there exists a Delaunay $m$-simplex that is not $\delta$-protected, then there is a \defn{forbidden configuration}: a (possibly degenerate) simplex $\splxt \subset \pts$ characterised by the properties we describe in \Lemref{lem:prop.forbid.cfg} below.
We emphasise that a forbidden configuration need not be a Delaunay simplex.  The perturbation algorithm guarantees that the Delaunay $m$-simplices will be $\delta$-protected by ensuring that each point is perturbed to a position that is not too close to the circumsphere of any of the nearby simplices in the current (perturbed) point set. A volumetric argument shows that this can be achieved.

If $D \subset \rem$, then $\pts$ is \defn{$\samconst$-dense} for $D$ if $\distEm{x}{\pts} < \samconst$ for all $x \in D$.  We refer to $\samconst$ as the \defn{sampling radius}.  The set $\pts$ is \defn{$\sparseconst \samconst$-separated} if $\distEm{p}{q} \geq \sparseconst \samconst$ for all $p,q \in \pts$, and $\pts$ is a \ueset\ (for $D$) if it is $\sparseconst \samconst$-separated, and $\samconst$-dense (for $D$).

The goal is to produce a perturbed point set $\ppts$ that contains no forbidden configurations.  A \defn{$\pertconst$-perturbation} of a \ueset\ $\pts \subset \rem$ is a bijective application $\pert: \pts \to \ppts \subset \rem$ such that $\distEm{\pert(p)}{p} \leq \pertconst$ for all $p \in \pts$.  Unless otherwise specified, a \defn{perturbation} will always refer to a $\pertconst$-perturbation, with $\pertconst = \pertbnd \samconst$ for some $\pertbnd \leq \frac{\sparseconst}{4}$.  We also refer to $\ppts$ itself as a perturbation of $\pts$.  We generally use $p'$ to denote the point $\pert(p) \in \ppts$, and similarly, for any point $q' \in \ppts$ we understand $q$ to be its preimage in $\pts$.  We observe \cite[Lemma 2.2]{boissonnat2014flatpert} that $\ppts$ is a \pueset, with $\psamconst \leq \frac{5}{4}\samconst$ and $\psparseconst \geq \frac{2}{5}\sparseconst$.

Given a positive parameter $\flakebnd \leq 1$, we say that $\splxs$ is \defn{$\flakebnd$-good} if for all $\splxs^j \subseteq \splxs$, we have $\thickness{\splxs^j} \geq \flakebnd^j$, where $\flakebnd^j$ is the $j^{\text{th}}$ power of $\flakebnd$.
A \defn{$\flakebnd$-flake} is a simplex that is not $\flakebnd$-good, but whose facets all are. A flake may be degenerate.  The altitudes of a flake are subjected to an upper bound proportional to $\flakebnd$.

If a simplex is not $\flakebnd$-good, then it necessarily contains a
face that is a flake. This follows easily from the observation that
$\thickness{\splxs} = 1$ if $\splxs$ is a $1$-simplex. If $\splxs^m
\in \delP$ is not $\delta$-protected, then there is a $q \in \pts
\setminus \splxs^m$ that is within a distance $\delta$ of the
circumsphere of $\splxs^m$. Since $\splxjoin{q}{\splxs^m}$ is
$(m+1)$-dimensional, it is degenerate, and therefore has a face
$\splxt$ that is a $\flakebnd$-flake. Such a $\splxt$ is a forbidden
configuration.

If a simplex $\splxs$ has a circumcentre, we define the \defn{diametric sphere} as the boundary of the smallest circumscribing ball: $\diasphere{\splxs} = \bdry{\ballEm{\circcentre{\splxs}}{\circrad{\splxs}}}$, and the \defn{circumsphere}: $\circsphere{\splxs} = \diasphere{\splxs} \cap \affhull{\splxs}$. If $\splxs \subset \splxt$, then $\circsphere{\splxs} \subseteq \circsphere{\splxt}$, and if $\dim \splxs = m$, then $\circsphere{\splxs} = \diasphere{\splxs}$.

The bound on the altitudes, together with the stability property of circumscribing balls of thick simplices, allows us to demonstrate that forbidden configurations have the \defn{$\hoopbnd$-hoop property}.
A $k$-simplex $\splxt$ has the $\hoopbnd$-hoop property if for every $(k-1)$-facet $\splxs \subset \splxt$ we have
\begin{equation*}
  \distEm{p}{\circsphere{\splxs}} \leq \hoopbnd
  \circrad{\splxs} < \infty,
\end{equation*}
where $p$ is the vertex of $\splxt$ not in $\splxs$.

We are concerned with forbidden configurations in the perturbed point set $\ppts$. In addition to the two parameters that describe a \pueset, forbidden configurations depend on the flake parameter $\flakebnd$, as well as the parameter $\delta_0$, which governs the protection via the requirement $\delta = \delta_0 \psparseconst \psamconst$. 
\begin{de}[Forbidden configuration]
  \label{def:forbidden.config}
  Let $\pts' \subset \rem$ be a \pueset.  A $(k+1)$-simplex $\splxt
  \subseteq \pts'$, is a \defn{\dgconfig} in $\ppts$ if it is a
  $\flakebnd$-flake, with $k\leq m$, and there exists a $p \in \splxt$
  such that $\opface{p}{\splxt}$ has a circumscribing ball $B =
  \ballEm{C}{R}$ with $R < \psamconst$, and $\abs{\distEm{p}{C} - R}
  \leq \delta$, where $\delta = \delta_0 \psparseconst \psamconst$. 
\end{de}
The definition itself is awkward, but for most purposes we can simply refer to the following summary \cite[Theorem 3.10]{boissonnat2014flatpert} of properties of forbidden configurations in $\ppts$ in terms of the parameters of the original point set $\pts$:
\begin{lem}[Properties of forbidden configurations]
  \label{lem:prop.forbid.cfg} 
  Suppose that $\pts \subset \rem$ is a \ueset\ and that $\ppts$ is a
  $\pertbnd\samconst$-perturbation of $\pts$, with $\pertbnd \leq \frac{\sparseconst}{4}$.
  If
  \begin{equation}
    \label{eq:dnobnd}
    \delta_0 \leq \flakebnd^{m+1} \quad \text{and} \quad \flakebnd
    \leq \frac{2\sparseconst^2 }{75},
  \end{equation}
  then every forbidden configuration $\splxt \subset \ppts$
  satisfies \emph{all} of the following properties:
  \begin{enumerate}[label={$\mathcal{P}$\arabic*}]
  \item \label{hyp:clean.hoop.bnd}
    Simplex $\splxt$ has the \hoop\ property, with
    $\hoopbnd = \frac{ 2^{13} \flakebnd }{\sparseconst^3}$.
  \item \label{hyp:clean.facet.rad.bnd}
    For all $p \in \splxt$,
    $\circrad{\splxtp} < 2\samconst$.
  \item \label{hyp:diam.bnd}
   $\longedge{\splxt} < \frac{5}{2}(1 +
   \frac{1}{2}\delta_0\sparseconst)\samconst$.
  \item \label{hyp:good.facets}
    Every facet of $\splxt$ is $\flakebnd$-good.
  \end{enumerate}
\end{lem}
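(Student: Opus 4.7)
The plan is to verify the four properties in turn, each following from the definition of a forbidden configuration together with standard relationships between thickness, altitude, and circumradius of $\flakebnd$-good simplices. The statement is essentially a transcription of [boissonnat2014flatpert, Thm.~3.10] into the present notation, so at the sketch level we indicate how each property arises rather than recomputing every constant.

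Property $\mathcal{P}4$ is immediate from the definition of a $\flakebnd$-flake: every proper face of a flake is $\flakebnd$-good, and the facets of $\splxt$ are proper faces. For Property $\mathcal{P}3$, the defining inequality $\abs{\distEm{p}{C} - R} \leq \delta$ gives $\distEm{p}{C} \leq R + \delta$, while every other vertex of $\splxt$ lies on $\bdry B$ at distance exactly $R$ from $C$. By the triangle inequality every edge of $\splxt$ has length at most $2R + \delta$; substituting $R < \psamconst \leq \tfrac{5}{4}\samconst$, $\delta = \delta_0 \psparseconst \psamconst$, and the standard \pueset\ estimate $\psparseconst \psamconst \leq \sparseconst \samconst$ then yields the stated bound.

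For Property $\mathcal{P}2$, the case of the distinguished facet is immediate since $\circrad{\splxtp} \leq R < \psamconst \leq \tfrac{5}{4}\samconst < 2\samconst$. For any other facet $\splxtq$, Property $\mathcal{P}4$ ensures $\splxtq$ is $\flakebnd$-good; combining the thickness lower bound $\thickness{\splxtq} \geq \flakebnd^{\dim \splxtq}$ with the longest-edge bound obtained from $\mathcal{P}3$ via the standard inequality relating circumradius to longest edge in a thick simplex, one obtains $\circrad{\splxtq} < 2\samconst$, where the constraint $\flakebnd \leq \tfrac{2\sparseconst^2}{75}$ in \eqref{eq:dnobnd} is exactly what forces the numerical bound.

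Property $\mathcal{P}1$, the hoop property, is the heart of the argument. For the distinguished facet $\splxtp$, the defining condition combined with the lower bound $\circrad{\splxtp} \geq \psparseconst \psamconst / 2$ produces a ratio $\distEm{p}{\circsphere{\splxtp}}/\circrad{\splxtp}$ of order $\delta_0$, comfortably below $\hoopbnd$. For any other facet $\splxtq$, the essential observation is that since $\splxt$ itself fails to be $\flakebnd$-good while each of its facets is, the altitude of the opposite vertex within $\splxt$ satisfies a bound of order $\flakebnd \cdot \longedge{\splxt}$. Transferring this altitude bound through the circumsphere stability of the $\flakebnd$-good facet $\splxtq$ yields a bound $\distEm{q}{\circsphere{\splxtq}} \leq \hoopbnd \, \circrad{\splxtq}$ with $\hoopbnd = 2^{13} \flakebnd / \sparseconst^3$, the numerical constant recording the accumulated losses in the stability estimates. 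This transfer is the main obstacle: it requires a quantitatively sharp statement that the circumsphere of a thick simplex is insensitive to perturbation of its vertices, and it is precisely here that the hypotheses \eqref{eq:dnobnd} are needed to keep the argument self-consistent.
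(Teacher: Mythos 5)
The paper itself does not prove this lemma: it is stated as a restatement of \cite[Theorem 3.10]{boissonnat2014flatpert}, so strictly speaking you are reconstructing a proof the paper delegates. On that basis, two of your four items are fine and two are not.

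Your argument for \ref{hyp:good.facets} is immediate and correct, and the argument for \ref{hyp:diam.bnd} (triangle inequality through $C$, then plug in $R<\psamconst\leq\tfrac{5}{4}\samconst$ and $\delta=\delta_0\psparseconst\psamconst$) is essentially right and does produce the stated bound.

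The argument for \ref{hyp:clean.facet.rad.bnd} is broken. You propose to bound $\circrad{\splxtq}$ by combining the edge bound from \ref{hyp:diam.bnd} with the thickness bound $\thickness{\splxtq}\geq\flakebnd^{\dim\splxtq}$ via the estimate $\circrad{\splxs}\leq\longedge{\splxs}/(2\thickness{\splxs})$ (\Lemref{lem:crude.rad.bnd} in the paper). But this gives
\begin{equation*}
  \circrad{\splxtq} \;\leq\; \frac{\longedge{\splxt}}{2\,\flakebnd^{\dim\splxtq}},
\end{equation*}
which \emph{diverges} as $\flakebnd\to 0$. Tightening the hypothesis $\flakebnd\leq\tfrac{2\sparseconst^2}{75}$ (i.e., making $\flakebnd$ smaller) makes this bound worse, not better, so it cannot possibly yield $\circrad{\splxtq}<2\samconst$. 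The actual mechanism is geometric rather than a thickness/longest-edge inequality: all vertices of $\splxtq$ lie within distance $\delta$ of the sphere $\bdry B$ of radius $R<\psamconst$, so the orthogonal projection of $C$ onto $\affhull{\splxtq}$ is a near-equidistant point from the vertices of $\splxtq$; since $\splxtq$ is $\flakebnd$-good, a circumscribing-ball stability result (the analogue of \Lemref{lem:circ.ball.distort}, i.e.\ \cite[Lemma 4.3]{boissonnat2013stab1}) then places the circumcentre of $\splxtq$ near that projection and bounds $\circrad{\splxtq}$ by roughly $R+\delta$. The thickness enters as the quantity that makes the stability estimate tight, not as a denominator in a radius bound.

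For \ref{hyp:clean.hoop.bnd} you have identified the right tool (circumsphere stability for thick facets) but the writeup never actually carries out the transfer that produces the constant $2^{13}\flakebnd/\sparseconst^3$; as it stands this is an acknowledged placeholder rather than a proof. There is also a subtlety you do not address for the distinguished facet $\splxtp$: the ball $B=\ballEm{C}{R}$ in \Defref{def:forbidden.config} is merely \emph{some} circumscribing ball of $\splxtp$, so $C$ need not be the circumcentre $\circcentre{\splxtp}$ and $\bdry B$ need not contain $\circsphere{\splxtp}$ except as a subset. Passing from $\abs{\distEm{p}{C}-R}\leq\delta$ to a bound on $\distEm{p}{\circsphere{\splxtp}}$ requires the same circumscribing-ball stability machinery, together with a \emph{lower} bound on $\circrad{\splxtp}$ (coming from the $\psparseconst\psamconst$-separation) to convert the additive $\delta$ into the multiplicative ratio $\hoopbnd$.

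Summary: \ref{hyp:good.facets} and \ref{hyp:diam.bnd} are fine; \ref{hyp:clean.facet.rad.bnd} uses an inequality that goes the wrong way and cannot give the result; \ref{hyp:clean.hoop.bnd} is a sketch that defers the key estimate. The missing ingredient for both of the latter is the circumscribing-ball stability lemma applied with $C$ as an approximate centre.
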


The algorithm focuses on Property~\ref{hyp:clean.hoop.bnd} of
forbidden configurations.  A critical aspect of this property is its
symmetric nature; if we can ensure that $\splxt$ has one vertex that is
not too close to its opposite facet, then $\splxt$ cannot be a
forbidden configuration.

Using Property~\ref{hyp:diam.bnd}, we can find, for each $p \in
\pts$, a complex $\mathcal{S}_p$ consisting of all simplices $\splxs \in
\pts$ such that after perturbations $\splxjoin{p}{\splxs}$ could be a
forbidden configuration.

The algorithm proceeds by perturbing each point $p \in \pts$ in turn, such that each point is only visited once. The perturbation $p \mapsto p'$ is found by randomly trying perturbations $p \mapsto x$ until it is found that $x$ is a \defn{good perturbation}. A good perturbation is one in which $\distEm{x}{\circsphere{\splxs}} > 2\hoopbnd \samconst$ for all $\splxs \in \mathcal{S}_p(\ppts)$, where $\mathcal{S}_p(\ppts)$ is the complex in the current perturbed point set whose simplices correspond to those in $\mathcal{S}_p$. By Property~\ref{hyp:clean.hoop.bnd}, $\splxjoin{x}{\splxs}$ cannot be a forbidden configuration.

A volumetric argument based on the finite number of simplices in
$\mathcal{S}_p$, the small size of $\hoopbnd$, and the volume of the
ball $\ballEm{p}{\pertbnd \samconst}$ of possible perturbations of
$p$, reveals a high probability that $p \mapsto x$ will be a good
perturbation, and thus ensures that the algorithm will terminate.

Upon termination there will be no forbidden configurations in $\ppts$,
because every perturbation $p \mapsto p'$ ensures that there are no
forbidden configurations incident to $p'$ in the current point set,
and no new forbidden configurations are introduced.


\section{Overview and main results}
\label{sec:overview}

The extension of the perturbation algorithm to the curved setting is
accomplished by performing the perturbations, and the analysis, in
local Euclidean coordinate patches. The main idea is that forbidden configurations exhibit some stability with respect to small changes in the Euclidean metric. In particular, if we ensure there are no forbidden configurations in some region of one Euclidean coordinate patch, then, assuming a slightly smaller hoop parameter $\hoopbnd$, there will be no forbidden configurations in the corresponding region of any nearby coordinate patche. This means that the perturbed point set will be $\delta$-generic in any local Euclidean coordinate patch, and the resulting stability of the local Euclidean Delaunay triangulations ensures that they will agree on neighbouring patches.

We assume we have a finite set of points $\pts$ in a compact manifold $\man$. It is convenient to employ an index set $\apts$ of unique (integer) labels for $\pts$, thus we employ a bijection $\iota: \apts \to \pts \subset \man$. We assume that $\pts$ is sufficiently dense that we may define an atlas $\{(W_i, \fcharti)\}_{i \in \apts}$ for $\man$ such that the coordinate charts $\fcharti: W_i \to U_i \subset \rem$ have low metric distortion, as defined in \Secref{sec:input}. The set $W_i$ is required to contain a sufficiently large ball centred at the point indexed by $i$.  We refer to $U_i$ as a \defn{coordinate patch}.

We work exclusively in the Euclidean coordinate patches $U_i$, exploiting the transition functions $\transij = \fchart{j} \circ \fcharti^{-1}$ to translate between them.  We define $\pts_i = \fcharti(W_i \cap \pts)$, but given these sets, the algorithm itself makes no explicit reference to either $\pts$ or to the coordinate charts $\fcharti$, except to keep track of the labels of the points. We employ the discrete map $\charti = \fcharti \circ \iota$ to index the elements of the set $\pts_i$.

The idea is to  perturb $p_i = \charti(i) \in U_i$ in such a way, $p_i \mapsto p'_i$, that not only are there no forbidden configurations incident to $p'_i$ in $\ppts_i = \fcharti(W_i \cap \ppts)$, but there are no forbidden configurations incident to $\transij(p_i') \in \ppts_j \subset U_j$ either, where $j$ is the index of any sample point near $p_i$.

Before detailing the requirements of the input data in \Secref{sec:input}, we briefly discuss the implicit and explicit properties of the underlying manifold $\man$ in \Secref{sec:in.man.properties}. A summary of the analysis and main results is presented in \Secref{sec:analysis.outline}.

\subsection{Manifolds represented by transition functions}
\label{sec:in.man.properties}

The essential input data for the algorithm are the transition functions,
and the sample points in the coordinate patches; we do not explicitly
use the coordinate charts or the metric on the manifold. However, given that the transition functions can be defined by an atlas on a manifold, this manifold is essentially unique:
If $\tman$ has an atlas $\{(\tilde{W}_i, \tcharti)\}_{i \in \apts}$ such that $\tcharti(\tilde{W}_i) = U_i$ and $\transij = \tchart{j} \circ \tcharti^{-1}$ for all $i,j \in \apts$, then $\tman$ and $\man$ are homeomorphic. Indeed, we define the map $f: \man \to \tman$ by $f(x) = \tcharti^{-1} \circ \fcharti(x)$ if $x \in W_i$. The map is well defined, because $\fchart{j} = \transij \circ \fcharti$ on $W_i \cap W_j$, and $\tchart{j}^{-1} = \tcharti^{-1} \circ \transij^{-1}$ on $U_{ji} = \fchart{j}(W_i \cap W_j)$. It can be verified directly from the definition that $f$ is a homeomorphism, since it is bijective and locally a homeomorphism.

Although the algorithm does not make explicit reference to a metric on the manifold $\man$, the metric distortion bounds required on the transition functions imply a metric constraint. Implicitly we are using a metric on the manifold for which the coordinate charts have low metric distortion. 
If a metric on the manifold is not explicitly given then, at least in the case where the transition functions are smooth, we can be sure that such a metric exists: Given the coordinate charts an appropriate Riemannian metric on the manifold can be obtained from the coordinate patches by the standard construction employing a partition of unity subordinate to the atlas (e.g., \cite[Thm. V.4.5]{boothby1986}).

Thus, although the manifold may be presented abstractly in terms of coordinate patches and transition functions between them, this information essentially characterises the manifold. The algorithm we present is not a reconstruction algorithm, it is an algorithm to triangulate a known manifold.

\subsection{The setting and input data}
\label{sec:input}

We take as input a finite index set $\apts = \{1, \ldots, n \}$, which
we might think of as an abstract set of points (without geometry),
together with the geometric data we will now introduce. The details of
the arguments that lead to our choices in the size of the domains are
given in \Secref{sec:domain.size}.

\paragraph{Coordinate patches.}
For each $i \in \apts$ we have a neighbourhood set $\nbrsi \subseteq
\apts$, a sampling radius $\samconst_i > 0$,  and an injective
application
\begin{equation*}
  \charti: \nbrsi \to U_i \subseteq \rem,
\end{equation*}
such that $\pts_i = \charti(\nbrsi)$ is a $(\sparseconst,\samconst_i)$-net for $\ballEm{p_i}{8\samconst_i} \subseteq U_i$, where we adopt the notation $p_j = \charti(j)$ for any $j \in \nbrsi$. The separation parameter $\sparseconst$ is globally defined to be the same on all coordinate patches, but the sampling radius $\samconst_i$ may be different in different patches, subject to a mild constraint described below. We call the standard metric on $U_i \subseteq \rem$ the \defn{local Euclidean metric} for $i$, and we will denote it by $\gdisti$ to distinguish between the different local Euclidean metrics. Similarly, $\balli{c}{r}$ denotes a ball with respect to the metric $\gdisti$.

\paragraph{Transition functions.}
For each $p_j \in \balli{p_i}{6\samconst_i} \subset U_i$ we require a
neighbourhood $U_{ij} \subseteq U_i$ such that
\begin{equation*}
  \balli{p_i}{6\samconst_i} \cap \balli{p_j}{9\samconst_i} \subseteq U_{ij}
\end{equation*}
and 
\begin{equation*}
  U_{ij} \cap \charti(\nbrsi) = \charti(\nbrsi \cap \nbrs{j}).
\end{equation*}
The set $U_{ij}$ is the domain of the \defn{transition function}
$\transij$, which is a homeomorphism
\begin{equation*}
  \transij : U_{ij} \to U_{ji} \subseteq U_j,
\end{equation*}
such that $\transij = \trans{j}{i}^{-1}$ and
\begin{equation*}
  \transij \circ \charti = \chart{j} \quad \text{ on } \quad \nbrsi
  \cap \nbrs{j}.
\end{equation*}
These transition functions are required to have \emph{low metric
  distortion}:
\begin{equation}
  \label{eq:metric.distortion}
  \abs{\disti{x}{y} - \distj{\transij(x)}{\transij(y)}} \leq 
  \metlipconst \disti{x}{y}\quad \text{for all } x,y \in U_{ij},
\end{equation}
where $\metlipconst > 0$ is a small positive parameter that quantifies the metric distortion. We say that $\transij$ is a \defn{$\metlipconst$-distortion map}. 

In order to ease the notational burden, $\charti(j) \in U_{ik}$, and
$\chart{k}(j) \in U_{ki}$, are denoted by the same symbol,
$p_j$. Ambiguities are avoided by distinguishing between the Euclidean
metrics $\gdisti$ and $\gdistk$. Although $\gdisti$ is the canonical
metric on $U_{ik}$, we may consider the pullback of $\gdistk$ from the
homeomorphic domain $U_{ki}$. Thus for $x,y \in U_{ik}$ the expression
$\distk{x}{y}$ is understood to mean
$\distk{\trans{i}{k}(x)}{\trans{i}{k}(y)}$, but we also occasionally
employ the latter, redundant, notation.

Using symmetry, we observe that \Eqnref{eq:metric.distortion} implies that
\begin{equation*}
  \abs{\disti{x}{y} - \distj{x}{y}} \leq 
  \metlipconst \min \{ \disti{x}{y}, \distj{x}{y} \}.
\end{equation*}
Our analysis will require that $\metlipconst$ be very small. For
standard coordinate charts, $\metlipconst$ can be shown to be
$\bigo{\samconst}$, where $\samconst$ is a sampling radius on the
manifold. For example, this is the case when considering a smooth
submanifold of $\amb$, and using the orthogonal projection onto the
tangent space as a coordinate chart \cite[Lemma
3.7]{boissonnat2013stab2.inria}. Thus $\metlipconst$ may be made as
small as desired by increasing the sampling density. 

\paragraph{Adaptive sampling.}
We will further require a constraint on the difference between
neighbouring sampling radii:
\begin{equation*}
  \abs{\samconst_i - \samconst_j} \leq \lipsamconst
  \min\{\samconst_i,\samconst_j\},
\end{equation*}
whenever $\disti{p_i}{p_j} \leq 6 \samconst_i$.  This allows us to
work with a constant sampling radius in each coordinate frame, while
accommodating a globally adaptive sampling radius.

For example, suppose $\samconst: \man \to \reel$ is a positive, $\nu$-Lipschitz function, with respect to the metric $\gdistM$ on the manifold. Then $\samconst$ may be used as an adaptive sampling radius on $\man$, i.e., $\pts \subset \man$ is $\samconst$-dense if $\distM{x}{\pts} < \samconst(x)$ for all $x \in \man$. A popular example of such a function is $\samconst(x) = \nu f(x)$, where $f$ is the ($1$-Lipschitz) \defn{local feature size}~\cite{amenta1999vf}.

Using the $\nu$-Lipschitz continuity of $\samconst$, we can define, for any $p_i \in \man$, a constant $\tsamconst_i$, such that $\pts$ is $\tsamconst_i$-dense in some neighbourhood of $p_i$. In fact, given $c > 0$, with $c < \nu^{-1}$, we find that $\pts$ is $\tsamconst_i$-dense within the ball $\ballM{p_i}{c\tsamconst_i}$, where
\begin{equation*}
  \tsamconst_i = \frac{\samconst(p_i)}{1 - c\nu}.
\end{equation*}
For any $p_j \in \ballM{p_i}{\tsamconst_i}$, we obtain
$\abs{\tsamconst_i - \tsamconst_j} \leq \lipsamconst \tsamconst_i$, where
\begin{equation}
  \label{eq:geom.lipsam}
  \lipsamconst = \frac{c \nu}{1 - c\nu},
\end{equation}
and if $\nu \leq \frac{1}{2c}$, then $\lipsamconst \leq 1$. 

\newcommand{\hsparseconst}{\hat{\mu}_0}
Similarly, if $\pts$ is $\hsparseconst \samconst$-separated, then it
will be $\tilde{\sparseconst} \tsamconst_i$-separated on $\ballM{p_i}{c
  \tsamconst_i}$, provided $\tilde{\sparseconst} \leq (1 -
2c\nu)\hat{\sparseconst}$. The constant $\hsparseconst$ itself
may be constrained to satisfy $\hsparseconst \leq (1 + \nu)^{-1}
\leq \frac{1}{2}$. 

In our framework here, the local constant sampling radii are applied to the local Euclidean metric, rather than the metric on the manifold, but the same idea applies.  Although \Eqnref{eq:geom.lipsam} indicates that $\lipsamconst$ is expected to become small as the sampling radius decreases, our analysis does not demand this. As explained in \Secref{sec:analysis.outline}, we only require that $\lipsamconst$ be mildly bounded. 

We summarise the assumptions on the input to the extended algorithm as Hypotheses~\ref{hyps:input}, where the parameters $\lipsamconst$ and $\metlipconst$ are left free to be constrained by subsequent hypotheses.

\begin{hyps}[Input assumptions]
  \label{hyps:input}
  We have a finite index set $\apts$ representing the sample points. For each $i \in \apts$ there is associated a subset of neighbours $\apts_i \subseteq \apts$. The geometry is imposed by
  \begin{enumerate}
  \item \textbf{Coordinate patches.} For each $i \in \apts$, there is a coordinate patch $U_i \subseteq \R^m$, and an injective application $\charti \colon \nbrsi \to U_i$ such that $\pts_i = \charti(\nbrsi)$ is a $(\sparseconst, \samconst_i)$-net for $\balli{p_i}{8\samconst_i} \subseteq U_i$. We introduce a parameter $\lipsamconst \geq 0$, and demand that, if $\disti{p_i}{p_j} \leq 6\samconst_i$, then
    \begin{equation*}
      \abs{\samconst_i - \samconst_j} \leq \lipsamconst
      \min\{\samconst_i,\samconst_j\}.
    \end{equation*}
  \item \textbf{Transition functions.}  Each $p_j \in \balli{p_i}{6\samconst_i} \subset U_i$ lies in the domain $U_{ij} \subseteq U_i$ of the transition function $\transij\colon U_{ij} \stackrel{\cong}{\longrightarrow}U_{ji}$. The domains must be suffiently large:
    \begin{equation*}
      \balli{p_i}{6\samconst_i} \cap \balli{p_j}{9\samconst_i} \subseteq U_{ij},
    \end{equation*}
    and the transition functions must satisfy the compatibility conditions $\transij = \trans{j}{i}^{-1}$ and $\transij \circ \charti = \chart{j}$ on $\nbrsi \cap \nbrs{j}$. Furthermore, the metric distortion of the transition functions is bounded by a parameter $\metlipconst$:
    \begin{equation*}
      \abs{\disti{x}{y} - \distj{\transij(x)}{\transij(y)}} \leq 
      \metlipconst \disti{x}{y}\quad \text{for all } x,y \in U_{ij}.
    \end{equation*}
  \end{enumerate}
\end{hyps}

\paragraph{The extended algorithm} The algorithm we present here is the same in spirit as the algorithm for the Euclidean setting~\cite{boissonnat2014flatpert} described in \Secref{sec:pert.strategy}, and we refer to it as the \defn{extended algorithm}.  It takes an input satisfying Hypotheses~\ref{hyps:input}. For each $i \in \apts$, a $\pertbnd\samconst_i$-perturbation is repeatedly applied to the point $p_i \in \pts_i' \subset U_i$  until a good perturbation $p_i'$ is found. The \Defref{def:extended.good.pert} of a good perturbation involves something closely resembling the hoop property~\ref{hyp:clean.hoop.bnd} with a parameter $\thoopbnd > \hoopbnd$. When a good perturbation $p'_i$ is selected, then the affected point sets $\ppts_j$ are updated, as well as $\ppts_i$ itself. By demonstrating stability of the hoop property with respect to small changes in the Euclidean metric, we are able to show that when the extended algorithm terminates, there will be no forbidden configurations in the region of interest of any local Euclidean coodinate patch. Then, assuming appropriate constraints on $\metlipconst$ and $\lipsamconst$, a manifold simplicial complex whose vertex set is $\apts$ is constructed by defining the star of $i$ to correspond to $\str{p'_i;\delof{\ppts_i}}$. The stability of these stars ensures that this complex is indeed a manifold (\Thmref{thm:man.mesh}), and we call it $\delof{\ppts}$, as justified by \Thmref{thm:output.del.cplx}. We refer to $\delof{\ppts}$ as the output of the extended algorithm, thus we assume that the extended algorithm includes a final step of computing all the stars after the perturbation algorithm has completed.

%

\subsection{Outline of the analysis}
\label{sec:analysis.outline}


We have defined the point sets $\pts_i = \charti(\nbrsi)$ in the
coordinate patch for $p_i$. We will let $\ppts_i$ denote the
corresponding perturbed point set at any stage in the algorithm:
$\ppts_i$ changes during the course of the algorithm, and we do not
rename it according to the iteration as was done in the original
description of the algorithm for flat manifolds
\cite{boissonnat2014flatpert}. The perturbation of a point $p_i
\mapsto p'_i$ is performed in the coordinate patch $U_i$, and then all
the coordinate charts must be updated so that if $i \in \nbrs{j}$,
then $\pchart{j}(i) = \transij(p'_i)$. However, we will refer to the
point as $p'_i$ regardless of which coordinate frame we are
considering.  The discrete maps $\pcharti$ will change as the
algorithm progresses, but $\charti$ will always refer to the initial
map.

In order to ensure that we maintain a \pueset\ in each coordinate chart (\cite[Lemma 2.2]{boissonnat2014flatpert}), we need to constrain the point perturbation so that in any local Euclidean coordinate patch $U_j$, the cumulative perturbation is a $\tpertbnd\samconst_j$-perturbation with $\tpertbnd \leq \frac{\sparseconst}{4}$. If $p_i \mapsto \pert(p_i)$ such that $\disti{\pert(p_i)}{p_i} \leq \pertconst = \pertbnd \samconst_i$, then $\distj{\pert(p_i)}{p_i} \leq (1 + \metlipconst)\pertconst \leq (1 + \metlipconst)(1 + \lipsamconst)\pertbnd \samconst_j$, and we have
\begin{equation*}
  \tpertbnd = (1 + \metlipconst)(1 + \lipsamconst)\pertbnd.  
\end{equation*}
 Thus we demand that
\begin{equation}
  \label{eq:new.pertbnd}
  \tpertbnd \leq
  \frac{\sparseconst}{4}.
\end{equation}
In order to facilitate the analysis, we want an explicit constant to bound the ratio between $\pertbnd$ and $\tpertbnd$. We ensure that
\begin{equation}
  \label{eq:chart.change.bnd}
  (1 + \lipsamconst)(1 + \metlipconst) \leq 2
\end{equation}
by imposing the mild constraint that
\begin{equation}
  \label{eq:lipsambnd}
  \lipsamconst \leq \frac{1 - \metlipconst}{1 + \metlipconst}.
\end{equation}

We will keep the definition of forbidden configuration as in the flat
case. In other words a forbidden configuration is that which satisfies
the four properties described in \Lemref{lem:prop.forbid.cfg}, where
$\samconst$ refers to the local sampling radius $\samconst_i$. 

We do not attempt to remove the forbidden configurations from all of
$\ppts_i$. Rather, we define $\lptsi = \ppts_i \cap \balli{p_i}{6
  \samconst_i}$ as our region of interest. The reasoning behind this choice
appears in \Secref{sec:domain.size}, where we also show
that 
\cite[Lemma 3.6]{boissonnat2014flatpert} implies:
\begin{dup}[\Lemref{lem:protected.stars} (Protected stars)]
  If there are no forbidden configurations in $\lptsi$, then all the
  $m$-simplices in $\str{p'_i; \delof{\lptsi} }$ are $\flakebnd$-good
  and $\delta$-protected, with $\delta = \delta_0 \psparseconst
  \psamconst_i$.
\end{dup}

This allows us to exploit the Delaunay metric stability result
\cite[Theorem 4.17]{boissonnat2013stab1}, which we show
(\Secref{sec:domain.size})
may be stated in our current context as:
\begin{dup}[\Lemref{lem:stable.stars} (Stable stars)]
  If
  \begin{equation*}
    \metlipconst \leq \frac{\flakebnd^{2m+1}\sparseconst^2}{2^{12}},
  \end{equation*}
  and there are no forbidden configurations in $\lptsi$, then for all
  $p'_j \in \str{p'_i; \delof{\ppts_i}}$, we have
  \begin{equation*}
    \str{p'_i; \delof{\ppts_i}} \cong \str{p'_i; \delof{\ppts_j}}.
  \end{equation*}
\end{dup}

The main technical result we develop in the current analysis is the
bound on the distortion of the hoop property
(\Secref{sec:hoop.distort}) due to the transition functions:
\begin{dup}[\Lemref{lem:hoop.distort} (Hoop distortion)]
  If
  \begin{equation*}
    \metlipconst \leq \left( \frac{\flakebnd^{2m + 1}}{4} \right)^2,
  \end{equation*}
  then for any forbidden configuration $\splxt =
  \splxjoin{p'_j}{\splxs} \subset \lptsi$, there is a simplex $\tsplxs
  = \transij(\splxs) \subset \ppts_j$ such that
  $\distj{p'_j}{\diasphere{\tsplxs}} \leq 2\thoopbnd\samconst_j$,
  where
 \begin{equation*}
    \thoopbnd =  \frac{2^{16} m^{\frac{3}{2}} \flakebnd}{\sparseconst^3}.
  \end{equation*}
\end{dup}
The proof of \Lemref{lem:hoop.distort} relies heavily on the thickness
bound (Property~\ref{hyp:good.facets}) for the facets of a forbidden
configuration. In \Secref{sec:splx.distort} we show bounds on the
changes of the intrinsic properties, such as thickness and
circumradius, of a Euclidean simplex subjected to the influence of a
transition function. This leads, as shown in
\Secref{sec:distortion.maps}, to bounds on circumcentre displacement
under small changes of a Euclidean metric. These bounds could not be
recovered directly from earlier work \cite{boissonnat2013stab1},
because they involve simplices that are not full dimensional. With
these results in place, the proof of \Lemref{lem:hoop.distort} is
assembled in \Secref{sec:hoop.distort}.

By considering the diameter of a forbidden configuration subjected to
metric distortion, we can determine the size of the neighbourhood of
$p_i$ that must be considered when checking whether a perturbation
$p_i \mapsto p'_i$ creates conflicts.

Suppose $\splxt \subset \lptsj \subset U_j$ is a forbidden configuration
with $p'_i \in \splxt$. By \Lemref{lem:prop.forbid.cfg},
Property~\ref{hyp:diam.bnd}, we have $\longedge{\splxt} <
\frac{5}{2}(1 + \frac{1}{2}\delta_0 \sparseconst) \samconst_j$. It
follows then that if $\tsplxt = \trans{j}{i}(\splxt) \subset \ppts_i$, then
\begin{equation*}
  \begin{split}
    \longedge{\tsplxt} &< (1 + \metlipconst)(1 + \lipsamconst)
    \frac{5}{2} \left( 1 + \frac{1}{2}\delta_0 \sparseconst \right)
    \samconst_i\\ 
    &\leq 5 \left(1 + \frac{1}{2}\delta_0 \sparseconst \right) \samconst_i,
 \end{split}
\end{equation*}
and we find, as in \cite[Lemma 4.4]{boissonnat2014flatpert}, that
if $\delta_0 \leq \frac{2}{5}$, then
\begin{equation*}
  (\pchart{j})^{-1}(\splxt) \subset \charti^{-1}(\balli{p_i}{r} \cap \pts_i ),
  \quad \text{where} \quad r = \left( 5 + \frac{3\sparseconst}{2}
  \right) \samconst_i. 
\end{equation*}
Indeed, this is ensured by the fact that $\pts_i$ is a $(\sparseconst,
\samconst_i)$-net for $\balli{p_i}{8\samconst_i}$, and $8\samconst_i -
r >  \samconst_i$. 

Let $\mathcal{S}_i$ denote all the $m$-simplices in $\apts_i$ whose
vertices are contained in
\begin{equation*}
  \charti^{-1}( \balli{p_i}{r} \cap \pts_i)
  \setminus \{i\} \quad \text{where} \quad r = \left(5 +
    \frac{3\sparseconst}{2} \right) \samconst_i.   
\end{equation*}
Then the simple packing argument demonstrated in \cite[Lemma
5.1]{boissonnat2014flatpert} yields
\begin{equation}
  \label{eq:bnd.nbr.cplx}
  \card{\mathcal{S}_i} < \left( \frac{14}{\sparseconst} \right)^{m^2 + m}.
\end{equation}

We strengthen the definition of a good perturbation:
\begin{de}[Good perturbation]
  \label{def:extended.good.pert}
  For the extended algorithm, we say that $p_i \mapsto x$ is a
  \defn{good perturbation} of $p_i \in U_i$ if there are no
  $m$-simplices $\splxs \in \pcharti(\mathcal{S}_i)$ such that
  $\disti{x}{\diasphere{\splxs}} \leq 2\thoopbnd\samconst_i$, where
  $\thoopbnd$ is defined in \Lemref{lem:hoop.distort}.
\end{de}
It is sufficient to only consider the $m$-simplices, because if
$\splxs$ is a non-degenerate $j$-simplex, with $j<m$, then it is the
face of some non-degenerate $m$-simplex $\splxt$, and
$\circsphere{\splxs} \subset \diasphere{\splxt}$.  With this
definition of a good perturbation, the extended algorithm yields the
analogue of \cite[Lemma 4.3]{boissonnat2014flatpert}:
\begin{lem}
  After the extended algorithm terminates there will be no forbidden
  configurations in $\lptsi$, for every $i \in \apts$.
\end{lem}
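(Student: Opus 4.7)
The plan is to argue by contradiction, using a ``last-perturbed vertex'' trick together with the Hoop Distortion Lemma to transport any alleged forbidden configuration into the coordinate patch in which the good-perturbation test was most recently performed, and then derive a contradiction with that test.

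Suppose, after termination, there is a forbidden configuration $\splxt \subset \lptsi$ for some $i \in \apts$. Among the vertices of $\splxt$, let $p'_j$ be the one whose preimage was perturbed last in the execution of the algorithm (a unique such vertex exists since each point is perturbed exactly once). Writing $\splxt = \splxjoin{p'_j}{\splxs}$, the facet $\splxs$ has been frozen since the moment $p_j$ was relocated, and perturbations of points whose indices do not appear in $\splxt$ do not affect $\splxt$. Consequently, $\splxt$ was already a forbidden configuration in $\lptsi$ at the instant the perturbation $p_j \mapsto p'_j$ was accepted.

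Now I apply \Lemref{lem:hoop.distort} to $\splxt$ with distinguished vertex $p'_j$: this yields a simplex $\tsplxs = \transij(\splxs) \subset \ppts_j$ satisfying $\distj{p'_j}{\diasphere{\tsplxs}} \leq 2\thoopbnd\samconst_j$. To obtain a contradiction with \Defref{def:extended.good.pert}, I need $\tsplxs$ to be (a face of) the $\pchart{j}$-image of an element of $\mathcal{S}_j$. This is where the diameter bookkeeping enters. From Property~\ref{hyp:diam.bnd} of \Lemref{lem:prop.forbid.cfg}, the edges of $\splxt$ in $U_i$ have length less than $\tfrac{5}{2}(1 + \tfrac{1}{2}\delta_0 \sparseconst)\samconst_i$. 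Transporting through $\transij$ incurs a multiplicative factor bounded by $(1 + \metlipconst)(1 + \lipsamconst) \leq 2$ by \Eqnref{eq:chart.change.bnd}, yielding the diameter bound $\longedge{\tsplxt} \leq 5(1 + \tfrac{1}{2}\delta_0\sparseconst)\samconst_j$ already recorded in \Secref{sec:analysis.outline}. Hence the vertices of $\tsplxs$ lie within the ball $\ballj{p_j}{r}$ with $r = (5 + \tfrac{3}{2}\sparseconst)\samconst_j$, well inside the domains of the relevant transition functions by Hypotheses~\ref{hyps:input}. Since $p'_j \notin \splxs$, the index $j$ is absent from the vertex set of $\tsplxs$, so $\tsplxs$ qualifies as an element of $\pchart{j}(\mathcal{S}_j)$, or it is a proper face of such an $m$-simplex, in which case the diametric sphere of that $m$-simplex contains $\diasphere{\tsplxs}$ as remarked immediately after \Defref{def:extended.good.pert}. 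Either way, there is an $m$-simplex in $\pchart{j}(\mathcal{S}_j)$ whose diametric sphere lies within $2\thoopbnd\samconst_j$ of $p'_j$, which directly contradicts the fact that $p'_j$ was selected as a good perturbation of $p_j$.

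The main obstacle is not the contradiction itself---which is essentially packaged into \Lemref{lem:hoop.distort}---but the verification that the transported simplex $\tsplxs$ really is among the simplices scrutinised by the good-perturbation test. This boils down to checking that the neighbourhood radius $r$ defining $\mathcal{S}_j$, the ball radii prescribing domains of transition functions in Hypotheses~\ref{hyps:input}, and the diameter bound for a forbidden configuration under metric distortion, are all calibrated consistently; this calibration is what is encoded in the constants $6\samconst_i$, $8\samconst_i$, and $9\samconst_i$ appearing in \Secref{sec:input}, together with the bound on $\delta_0$ and the small-distortion inequality \eqref{eq:chart.change.bnd}.
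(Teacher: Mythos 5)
Your proof is correct and takes essentially the same approach as the paper's: the paper runs a forward induction over iterations (after step $i$, no forbidden configuration involves any $p'_j$ with $j\leq i$), while you argue by contradiction via the last-perturbed vertex of the alleged configuration; these are the same idea in different clothing. You additionally spell out the diameter bookkeeping that justifies $\tsplxs$ being scrutinised by the good-perturbation test at $p_j$, which the paper delegates to the discussion of $\mathcal{S}_i$ preceding the lemma rather than repeating in the proof. One small imprecision: the remark after \Defref{def:extended.good.pert} gives the inclusion $\circsphere{\splxs}\subset\diasphere{\splxt}$ for a face $\splxs$ of an $m$-simplex $\splxt$, not $\diasphere{\tsplxs}\subset\diasphere{\splxt}$ as you write; this does not affect the validity of your argument since you are invoking the same remark the paper relies on.
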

\begin{proof}
  We argue by induction that after the $i^{\text{th}}$ iteration, for
  any $j \leq i$, and any $k \in \apts$, there are no forbidden
  configurations in $\lpts_k$ that have $p'_j$ as a vertex.  For
  $i=1$, the assertion follows directly from
  \Defref{def:extended.good.pert}, and
  \Lemref{lem:hoop.distort}. Assume the assertion is true for
  $i-1$. Suppose $\splxt$ is a forbidden configuration in $\lpts_k$,
  after the $i^{\text{th}}$ iteration.  Then since $p'_i$ is a good
  perturbation, according to \Defref{def:extended.good.pert}, $\splxt$
  cannot contain $p'_i$. Also, $\splxt$ cannot contain any $p'_j$ with
  $j<i$, for that would contradict the induction hypothesis. Thus the
  hypothesis holds for all $i\geq 1$.
\end{proof}

We need to quantify the conditions under which the algorithm is
guaranteed to terminate. We use the same volumetric analysis that is
demonstrated in the proof of \cite[Lemma
5.4]{boissonnat2014flatpert},
with the only modifications being a change in two of the constants
involved in the calculation. In particular, the number of simplices
involved is now given by \Eqnref{eq:bnd.nbr.cplx}, and we use the
bound on $\thoopbnd$ given by \Lemref{lem:hoop.distort}, which is
$2^3 m^{\frac{3}{2}}$ times the bound
on $\hoopbnd$ used in the original calculation. 
This calculation, coupled with the criterion for \Lemref{lem:hoop.distort}, yields a constraint on $\metlipconst$ with respect to the perturbation parameter $\pertbnd$. Together with Equations \eqref{eq:new.pertbnd} and \eqref{eq:lipsambnd}, this gives us all the constraints on the parameters that will ensure the existence of good perturbations, and therefore the termination of the algorithm:
\begin{hyps}[Parameter constraints]
  \label{hyps:param.bnds}
  Define
  \begin{equation*}
    \tpertbnd = (1 + \lipsamconst) (1 + \metlipconst) \pertbnd,
  \end{equation*}
  We require
  \begin{equation*}
    \lipsamconst \leq \frac{1 - \metlipconst}{1 + \metlipconst},
    \quad \text{and} \quad
    \tpertbnd \leq \frac{\sparseconst}{4},
    \quad \text{and} \quad
    \metlipconst \leq \frac{1}{2^4} \left(\frac{\pertbnd}{C} \right)^{4m + 2},
  \end{equation*}
  where
  \begin{equation*}
    C = m^{\frac{3}{2}}\left( \frac{2}{\sparseconst} \right)^{4m^2 + 5m + 21}. 
  \end{equation*}
\end{hyps}

Thus, using \Lemref{lem:protected.stars}, the main result
\cite[Theorem 4.1]{boissonnat2014flatpert} of the original
perturbation
algorithm can be adapted to the context of the extended algorithm as:
\begin{lem}[Algorithm guarantee]
  \label{lem:loc.eucl.protect}
  If Hypotheses~\ref{hyps:input} and \ref{hyps:param.bnds} are satisfied,
  then the extended algorithm terminates, and for every $i \in \apts$,
  the set $\lptsi$ is a $(\psparseconst, \psamconst_i)$-net such that
  there are no forbidden configurations with
  \begin{equation*}
    \flakebnd = \frac{\pertbnd}{C},
    \quad \text{and} \quad 
    \delta = \flakebnd^{m+1}\psparseconst \psamconst_i,
  \end{equation*}
  where $\psparseconst = \frac{\sparseconst-2\tpertbnd}{1+ \tpertbnd}$,
  and $\psamconst_i = (1+\tpertbnd)\samconst_i$.
\end{lem}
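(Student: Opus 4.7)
The lemma is the manifold analogue of Theorem 4.1 in \cite{boissonnat2014flatpert}, so my plan is to mirror that proof, substituting the extended ingredients developed in the preceding sections. I would organise it into three stages: unpack Hypotheses~\ref{hyps:param.bnds} to match the preconditions of the earlier results; run the volumetric argument to show that a good perturbation always exists; then collect the net property and the protection bounds.

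For the first stage, setting $\flakebnd := \pertbnd/C$, the constraint $\metlipconst \leq \frac{1}{2^4}\left(\frac{\pertbnd}{C}\right)^{4m+2}$ is exactly the hypothesis $\metlipconst \leq (\flakebnd^{2m+1}/4)^2$ of \Lemref{lem:hoop.distort}. The bound $\tpertbnd \leq \sparseconst/4$ is precisely \eqref{eq:new.pertbnd}, so \cite[Lemma 2.2]{boissonnat2014flatpert} applied in each local chart gives that the cumulative $\tpertbnd\samconst_i$-perturbation of the $(\sparseconst,\samconst_i)$-net $\pts_i$ is a $(\psparseconst,\psamconst_i)$-net with the stated values. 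The bound $\lipsamconst \leq (1-\metlipconst)/(1+\metlipconst)$ yields \eqref{eq:chart.change.bnd}, which is what translates a $\pertbnd\samconst_j$-perturbation in $U_j$ into a $\tpertbnd\samconst_i$-perturbation in any neighbouring $U_i$. I would also check here that $\flakebnd = \pertbnd/C \leq 2\sparseconst^2/75$, which the large value of $C$ easily affords, so the hypotheses \eqref{eq:dnobnd} of \Lemref{lem:prop.forbid.cfg} are met with $\delta_0 = \flakebnd^{m+1}$.

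For the second stage, termination follows by the same volumetric argument as \cite[Lemma 5.4]{boissonnat2014flatpert}, with two constants updated: the number of neighbouring $m$-simplices is now bounded by \eqref{eq:bnd.nbr.cplx}, and the hoop parameter $\thoopbnd$ from \Lemref{lem:hoop.distort} replaces $\hoopbnd$, introducing an extra factor of $2^3 m^{3/2}$. The bad region inside $\balli{p_i}{\pertbnd\samconst_i}$ is the union of $2\thoopbnd\samconst_i$-tubular neighbourhoods of the diametric spheres of the $m$-simplices in $\pcharti(\mathcal{S}_i)$, each of volume at most a constant times $\thoopbnd\samconst_i \cdot (\pertbnd\samconst_i)^{m-1}$. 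Comparing with the volume of the perturbation ball and solving for the critical $\thoopbnd$ produces a requirement of the form $\thoopbnd \lesssim \pertbnd \cdot \sparseconst^{m^2 + m}$; using $\thoopbnd = 2^{16} m^{3/2} \flakebnd/\sparseconst^3$, this is met precisely by the choice $\flakebnd = \pertbnd/C$ with $C = m^{3/2}(2/\sparseconst)^{4m^2 + 5m + 21}$. The parameter bound on $\metlipconst$ in Hypotheses~\ref{hyps:param.bnds} is designed to make this calculation compatible with the Lemma~\ref{lem:hoop.distort} hypothesis on $\metlipconst$.

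For the third stage, once termination is proved, the preceding lemma (no forbidden configurations in $\lptsi$ upon termination) directly gives absence of forbidden configurations with the chosen $\flakebnd$ and $\delta = \delta_0\psparseconst\psamconst_i = \flakebnd^{m+1}\psparseconst\psamconst_i$. The net property of $\lptsi$ was already observed in stage (i). The main obstacle is the bookkeeping in stage (ii): verifying that the explicit constant $C$ given in Hypotheses~\ref{hyps:param.bnds} is simultaneously large enough for the volumetric inequality to close, for $\flakebnd \leq 2\sparseconst^2/75$ to hold, and for the $\metlipconst$-constraint of \Lemref{lem:hoop.distort} to be implied by the corresponding constraint in Hypotheses~\ref{hyps:param.bnds}. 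None of these is deep, but their simultaneous compatibility is what drives the specific exponent $4m^2 + 5m + 21$ in the definition of $C$, and I would verify each numerically by explicit substitution rather than trying to argue it abstractly.
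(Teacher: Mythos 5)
Your proposal follows essentially the same route as the paper: the paper itself does not give a free-standing proof of this lemma, but rather invokes the volumetric argument of [Lemma 5.4] in the Euclidean perturbation paper with exactly the two constant substitutions you identify (the simplex count from \eqref{eq:bnd.nbr.cplx} and the $2^3 m^{3/2}$ inflation of the hoop bound from \Lemref{lem:hoop.distort}), then notes that Hypotheses~\ref{hyps:param.bnds}, \eqref{eq:new.pertbnd}, and \eqref{eq:lipsambnd} are chosen precisely so this calculation closes. Your unpacking of the $\metlipconst$-bound as $(\flakebnd^{2m+1}/4)^2$, the reduction to [Lemma 2.2] for the net property, and the sanity check that $\flakebnd \leq 2\sparseconst^2/75$ is implied are all consistent with the paper's intent.
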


This allows us to apply \Lemref{lem:stable.stars}, and we can define
the abstract complex $\outmesh$ by the criterion that
$\pcharti(\str{i; \outmesh}) = \str{p'_i; \delof{\ppts_i}}$ for
all $i \in \apts$. This is a manifold piecewise linear\footnote{A manifold simplicial complex that admits an atlas of piecewise linear coordinate charts from the stars of the vertices is called \defn{piecewise linear}. There exists manifold simplical complexes that are not piecewise linear~\cite[Example 3.2.11]{thurston1997}, but they are not a concern for us here.} simplicial
complex.  The bound on $\metlipconst$ imposed by
\Lemref{lem:stable.stars} is met by the one imposed by
Hypotheses~\ref{hyps:param.bnds}
and we arrive at our first main result:
\begin{mainthm}[Manifold mesh]
  \label{thm:man.mesh}
  Given an input satisfying Hypotheses~\ref{hyps:input} and \ref{hyps:param.bnds}, the extended algorithm produces a
  manifold abstract simplicial complex $\outmesh$ defined by
  \begin{equation*}
    \str{i; \outmesh} \cong \str{p'_i; \delof{\ppts_i}}.
  \end{equation*}
\end{mainthm}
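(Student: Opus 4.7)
The plan is to combine the Algorithm guarantee (\Lemref{lem:loc.eucl.protect}) with the Stable stars lemma (\Lemref{lem:stable.stars}) to define $\outmesh$ through its local stars, and then verify that the resulting complex is a manifold. First I would invoke \Lemref{lem:loc.eucl.protect} to conclude that the extended algorithm terminates and that, for every $i \in \apts$, the set $\lptsi$ is a $(\psparseconst, \psamconst_i)$-net free of forbidden configurations, with the explicit values of $\flakebnd$ and $\delta$ recorded there.

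Before applying \Lemref{lem:stable.stars} I would check that its distortion hypothesis $\metlipconst \leq \flakebnd^{2m+1}\sparseconst^2/2^{12}$ is implied by the bound $\metlipconst \leq 2^{-4}(\pertbnd/C)^{4m+2}$ from Hypotheses~\ref{hyps:param.bnds}. Since $\flakebnd = \pertbnd/C \leq 1$, the exponent $4m+2$ exceeds $2m+1$, and $C$ contains enough factors of $1/\sparseconst$ to absorb $\sparseconst^2/2^8$, this is a matter of comparing exponents and constants rather than a substantive calculation; the same comparison takes care of the hypothesis of \Lemref{lem:protected.stars}.

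With those preliminaries in place, for every $i \in \apts$ and every $p'_j \in \str{p'_i;\delof{\ppts_i}}$, \Lemref{lem:stable.stars} yields $\str{p'_i;\delof{\ppts_i}} \cong \str{p'_i;\delof{\ppts_j}}$. I would then define the abstract complex $\outmesh$ on the vertex set $\apts$ by declaring $\{i_0,\ldots,i_k\} \in \outmesh$ exactly when $\{p'_{i_0},\ldots,p'_{i_k}\} \in \str{p'_{i_0};\delof{\ppts_{i_0}}}$. The independence of this criterion from the choice of base vertex $i_0$ is precisely the content of the star isomorphism: a simplex appears in one local star if and only if it appears in the local star at each of its vertices. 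By construction the injective map $\pchart{i}$ carries $\str{i;\outmesh}$ bijectively onto $\str{p'_i;\delof{\ppts_i}}$, yielding the claimed simplicial isomorphism.

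To finish, I would show that $\outmesh$ is a manifold by showing that the link of each vertex $i$ is a triangulated $(m-1)$-sphere, and the star isomorphism reduces this to the corresponding statement for $\str{p'_i;\delof{\ppts_i}} \subset \rem$. Because $\ppts_i$ is a $(\psparseconst,\psamconst_i)$-net over a ball around $p'_i$ large enough to contain the circumscribing balls of all Delaunay simplices incident to $p'_i$ (this is the reason for fixing the radius $6\samconst_i$ in \Secref{sec:domain.size}), the Voronoi cell of $p'_i$ is bounded and its Delaunay star covers a Euclidean neighbourhood of $p'_i$. Together with the $\delta$-protection supplied by \Lemref{lem:protected.stars}, this makes the star a triangulated closed $m$-ball with link a PL $(m-1)$-sphere, so $\outmesh$ is a manifold. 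The main obstacle is bookkeeping rather than conceptual: I would need to patiently verify that Hypotheses~\ref{hyps:param.bnds} really do imply the distortion hypotheses of both \Lemref{lem:stable.stars} and \Lemref{lem:protected.stars}, and to be careful that the Euclidean net around $p'_i$ is genuinely dense enough for the star-is-a-ball argument.
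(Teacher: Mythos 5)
Your proposal follows the same route as the paper: invoke \Lemref{lem:loc.eucl.protect} for termination and absence of forbidden configurations, verify that Hypotheses~\ref{hyps:param.bnds} subsume the distortion bound in \Lemref{lem:stable.stars}, use that lemma's star isomorphism to show the local stars glue consistently into a well-defined abstract complex, and conclude it is a manifold. The paper states the manifold conclusion tersely (noting it is a manifold piecewise-linear complex) whereas you correctly spell out the underlying reason --- that the Delaunay star of a protected interior point of a sufficiently dense Euclidean net is a triangulated $m$-ball with spherical link --- so your account is a slightly more explicit version of the same argument.
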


The algorithm itself makes no explicit reference to the underlying manifold $\man$ or point set $\pts \subset \man$, but we need to consider these in order to justify the name $\outmesh$ for the output of the extended algorithm.

Given $\pts \subset \man$, we define the set $\ppts \subset \man$ to be the perturbed point set produced by the algorithm, i.e., $\pts \to \ppts$ is given by $p \mapsto p' = \fcharti^{-1}(p'_i)$, where $i \in \apts$ is the label associated with $p \in \pts$.  If the metric on $\man$ is such that the coordinate maps $\fcharti$ themselves have low metric distortion, then the constructed complex $\outmesh$ is in fact the Delaunay complex of $\ppts \subset \man$.  This follows from the fact that in the local Euclidean coordinate frames we have ensured that the points have stable Delaunay triangulations.  Thus, using $\flakebnd = \frac{\pertbnd}{C}$ given by \Lemref{lem:loc.eucl.protect}, the stability result \cite[Thm 4.17]{boissonnat2013stab1} 
leads, by the same reasoning that yields \Lemref{lem:stable.stars}, to the following:
\begin{mainthm}[Delaunay complex]
  \label{thm:output.del.cplx}
  Suppose that $\{(W_i,\fcharti)\}_{i \in \apts}$ is an atlas for the compact $m$-manifold $\man$, and the finite set $\pts \subset \man$ is such that
  Hypotheses~\ref{hyps:input} and \ref{hyps:param.bnds}
  are satisfied.  Suppose also that $\man$ is equipped with a metric $\gdistM$, such that
  \begin{equation*}
    \abs{\disti{\fcharti(x)}{\fcharti(y)} - \distM{x}{y} } \leq \eta
    \disti{\fcharti(x)}{\fcharti(y)},
  \end{equation*}
  whenever $x$ and $y$ belong to
  $\fcharti^{-1}(\ballEm{p_i}{6\samconst_i})$.
  If
  \begin{equation*}
    \eta \leq \frac{\sparseconst^2}{2^{12}} \left( \frac{\pertbnd}{C} \right)^{2m+1},
  \end{equation*}
  then $\outmesh$ is the Delaunay complex of $\ppts \subset \man$ with respect to $\gdistM$.
\end{mainthm}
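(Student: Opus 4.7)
The plan is to apply, in each local Euclidean patch, the metric stability of protected Delaunay triangulations~\cite[Thm 4.17]{boissonnat2013stab1} exactly as was done in the proof of \Lemref{lem:stable.stars}, but this time comparing the local Euclidean metric $\gdisti$ against the pullback of $\gdistM$ by $\fcharti$, rather than against a neighbouring Euclidean metric $\gdistj$. The hypothesis on $\eta$ is tuned so that $\fcharti$ acts as an $\eta$-distortion map between these two metrics, with $\eta$ satisfying precisely the distortion tolerance of \Lemref{lem:stable.stars} once we substitute $\flakebnd = \pertbnd/C$.

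First I would invoke \Lemref{lem:loc.eucl.protect}, which applies because Hypotheses~\ref{hyps:input} and~\ref{hyps:param.bnds} hold, to conclude that for every $i \in \apts$, the set $\lptsi$ is a $(\psparseconst,\psamconst_i)$-net and contains no forbidden configurations. \Lemref{lem:protected.stars} then ensures that every $m$-simplex in $\str{p'_i; \delof{\lptsi}}$ is $\flakebnd$-good and $\delta$-protected with $\flakebnd = \pertbnd/C$ and $\delta = \flakebnd^{m+1}\psparseconst\psamconst_i$. The hypothesis on $\eta$ becomes $\eta \leq \flakebnd^{2m+1}\sparseconst^{2}/2^{12}$, which is precisely the metric-distortion bound appearing in \Lemref{lem:stable.stars}.

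Next, applying~\cite[Thm 4.17]{boissonnat2013stab1} with $\gdisti$ and the pullback of $\gdistM$ delivers, on the region of interest around $p'_i$, a star isomorphism
$$\str{p'_i; \delof{\lptsi}} \;\cong\; \str{p'; \delM{\ppts}},$$
where $p' = \fcharti^{-1}(p'_i)$ and the right-hand star is computed in $(\man,\gdistM)$. Combining this with the definition of $\outmesh$ provided by \Thmref{thm:man.mesh}, namely $\str{i;\outmesh} \cong \str{p'_i;\delof{\ppts_i}}$, together with the fact that the choice of region $\balli{p_i}{6\samconst_i}$ from \Secref{sec:domain.size} guarantees $\str{p'_i;\delof{\ppts_i}} = \str{p'_i;\delof{\lptsi}}$, and then assembling the resulting identifications over all vertices, yields $\outmesh \cong \delM{\ppts}$.

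The main obstacle is the standard but delicate issue of locality: one must verify that every potential Delaunay cofacet of $p'$ in $(\man,\gdistM)$ has a preimage lying inside $\balli{p_i}{6\samconst_i}$, where the protection and net estimates of \Lemref{lem:loc.eucl.protect} apply. Since $\eta$ is several orders of magnitude smaller than $\sparseconst$ and $\pertbnd$, circumradii and edge lengths are distorted by a factor in $[1-\eta,1+\eta]$, so the same packing and containment arguments used to size the patches in \Secref{sec:domain.size} still enclose every candidate neighbour when seen through the manifold metric. With this locality secured, no mechanism beyond what was already deployed in the proof of \Lemref{lem:stable.stars} is required.
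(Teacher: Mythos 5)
Your proposal is correct and matches the paper's own (very brief) argument: the paper likewise invokes \Lemref{lem:loc.eucl.protect} to get $\flakebnd=\pertbnd/C$ and protected stars, then applies the metric stability result of \cite[Thm 4.17]{boissonnat2013stab1} with $\gdisti$ versus the pullback of $\gdistM$, ``by the same reasoning that yields \Lemref{lem:stable.stars}.'' You have correctly identified that the $\eta$ bound in the theorem is exactly the $\metlipconst$ bound of \Lemref{lem:stable.stars} after substituting $\flakebnd=\pertbnd/C$, and you rightly flag the locality verification (that candidate Delaunay neighbours seen through $\gdistM$ stay inside $\balli{p_i}{6\samconst_i}$), which the paper handles implicitly through the domain-sizing analysis of \Secref{sec:domain.size} and \Lemref{lem:inclusion.star.equivalences}.
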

The required bound on $\eta$ is weaker than the bound on $\metlipconst$ demanded by Hypotheses~\ref{hyps:param.bnds}. In the standard scenario, the metric distortion of the transition functions is bounded by bounding the metric distortion of the coordinate charts, and in this case the bound on $\eta$ required by \Thmref{thm:output.del.cplx} is automatically met when Hypotheses~\ref{hyps:param.bnds} are satisfied.

%

\subsection{The Riemannian setting} If $\man$ is a Riemannian manifold, then $\outmesh$ is a Delaunay triangulation of $\man$ and is equipped with a piecewise flat metric that is a good approximation of $\gdistM$. This follows from recent results~\cite{dyer2014riemsplx.arxiv} that guarantee a homeomorphism in this setting. 

We use the exponential map to define the coordinate charts. Proposition 17 and Lemma 11 of \cite{dyer2014riemsplx.arxiv} directly imply that if
\begin{equation*}
  \transij = \exp^{-1}_{\iota(j)} \circ \exp_{\iota(i)},
\end{equation*}
then on $\balli{p_i}{r}$ we have
\begin{equation*}
  \abs{ \distj{\transij(x)}{\transij(y)} - \disti{x}{y} }
  \leq 6 \Lambda r^2 \disti{x}{y},
\end{equation*}
where $\Lambda$ is a bound on the absolute value of the sectional curvatures of $\man$.  Here we will assume a constant sampling radius, i.e., $\lipsamconst = 0$ and $\samconst_j=\samconst_i$ for all $j \in \apts$. For our purposes, we need $r = 6\samconst_i$, and thus $\metlipconst = 6^3\Lambda \samconst_i^2$, and in order to satisfy Hypotheses~\ref{hyps:param.bnds} we require $6^3 \Lambda \samconst_i^2 \leq \frac{1}{2^4}(\flakebnd^{2m+1})^2$, which is satisfied if
\begin{equation}
  \label{eq:locsam.curv.bnd}
  \samconst_i \leq \frac{\flakebnd^{2m+1}}{2^6\sqrt{\Lambda}}.
\end{equation}

We exploit \cite[Theorem 3]{dyer2014riemsplx.arxiv}, which guarantees that the output complex is homeomorphic and with a metric quantifiably close to $\gdistM$. This demands that the star of each $p \in \pts \subset \man$ be contained in a geodesic ball $\ballM{p}{h}$ with
\begin{equation*}
  h = \min \left\{ \frac{\injradM}{4}, \frac{\flakebnd^m}{6\sqrt{\Lambda}} \right\}.
\end{equation*}
Since $\exp_p$ preserves the radius of a ball centred at $p$, we have that $h = 2\samconst_i$, and we see that the constraint $h \leq \frac{\flakebnd^m}{6\sqrt{\Lambda}}$ is automatically satisfied when $\samconst_i$ satisfies \eqref{eq:locsam.curv.bnd}.

We wish to express the required sampling conditions in terms of the intrinsic metric $\gdistM$. If $\samconst$ is the sampling radius with respect to $\gdistM$, we require an upper bound on $\samconst$ such that the needed bound on $\samconst_i$ is attained when accounting for metric distortion. The Rauch Theorem (\cite[Lemma 9]{dyer2014riemsplx.arxiv}) bounds the metric distortion of the exponential map, and it implies that within a ball of radius $r$
\begin{equation}
  \label{eq:exp.distort}
  \disti{\fcharti(x)}{\fcharti(y)} \leq \left( 1 + \frac{\Lambda r^2}{3} \right) \distM{x}{y}.
\end{equation}
In order to ensure that $\pts_i$ meets the density requirement of item 1 of Hypotheses~\ref{hyps:input}, we demand that $\balli{p_i}{9\samconst_i} \subseteq U_i$. Then, using \eqref{eq:locsam.curv.bnd} to bound $r=9\samconst_i$, we use \eqref{eq:exp.distort} to find the bound on $\samconst$ required to ensure \eqref{eq:locsam.curv.bnd}. In fact, the correction is so small that it is already accommodated by the adjustment made when we rounded the constant in \eqref{eq:locsam.curv.bnd} to a power of $2$. In other words, the right hand side of \eqref{eq:locsam.curv.bnd} is already sufficient as a bound on $\samconst$.

We also need to ensure that the conditions of Hypotheses~\ref{hyps:param.bnds} are met. In particular, if $\pts$ is a \ueset\ with respect to $\gdistM$, then the effective separation parameter with respect to $\gdisti$ will be slightly smaller, due to the metric distortion of the coordinate charts. In order to compute this correction, we again use the Rauch Theorem \cite[Lemma 9]{dyer2014riemsplx.arxiv}, and we find, for $p,q \in \pts$
\begin{equation*}
  \disti{\fcharti(p)}{\fcharti(q)}
  \geq \left(1 - \frac{\Lambda r^2}{2} \right)\sparseconst \samconst
  \geq \left(\frac{1 - \frac{\Lambda r^2}{2}}{1 + \frac{\Lambda r^2}{3}} \right)\sparseconst \samconst_i,
\end{equation*}
where $r = 9\samconst_i$, as above. Using \eqref{eq:locsam.curv.bnd}, and the constraint on $\flakebnd$ imposed by \Lemref{lem:loc.eucl.protect}, we find that the correction is indeed extremely small:
\begin{equation*}
  \disti{\fcharti(p)}{\fcharti(q)} \geq (1 - 2^{-200})\sparseconst \samconst_i.
\end{equation*}
We make crude adjustments to the constraint on $\pertbnd$ and the constant defining $\flakebnd$ to accommodate this. We can now formulate \cite[Theorem 3]{dyer2014riemsplx.arxiv} in our context:

\begin{mainthm}[Riemannian Delaunay triangulation]
  \label{thm:riem.del.tri}
  Suppose $\man$ is a Riemannian manifold, and $\pts \subset \man$ is 
a \ueset\ with respect to the metric $\gdistM$, with
\begin{equation*}
  \samconst \leq \min \left\{\frac{\injradM}{4},\frac{1}{2^6 \sqrt{\Lambda}} \left( \frac{\pertbnd}{\tilde{C}} \right)^{2m+1} \right\},
\end{equation*}
where $\Lambda$ is a bound on the absolute value of the sectional curvatures and $\injradM$ is the injectivity radius, and
\begin{equation*}
  \tilde{C} = m^{\frac{3}{2}}\left( \frac{2}{\sparseconst} \right)^{5m^2 + 5m + 21}.   
\end{equation*}
If the coordinate charts are defined by
\begin{equation*}
  \fcharti = \exp_{\iota(i)}^{-1}\colon \ballM{\iota(i)}{10\samconst} \to U_i,
\end{equation*}
and $\pertbnd \leq \frac{\sparseconst}{5}$, then the output $\outmesh$ of the extended algorithm is a Delaunay triangulation: there is a natural homeomorphism $H\colon\carrier{\outmesh} \to \man$ that satisfies
\begin{equation*}
  \abs{ \distM{H(x)}{H(y)} - \distG{\text{PL}}{x}{y} }
  \leq
  \left(2^8 \Lambda \left( \frac{\tilde{C}}{\pertbnd} \right)^{2m} \samconst^2 \right) \distG{\text{PL}}{x}{y},
\end{equation*}
where $\gdistG{\text{PL}}$ is the natural piecewise flat metric on $\outmesh$ defined by geodesic distances between vertices in $\man$. In addition, $\outmesh$ is self-Delaunay: it is a Delaunay triangulation of its vertices with respect to its intrinsic metric $\gdistG{\text{PL}}$.
\end{mainthm}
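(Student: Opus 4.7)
The plan is to assemble the theorem from three main pieces: (i) verify that the Riemannian setup with exponential coordinate charts satisfies the input Hypotheses~\ref{hyps:input} and the parameter constraints Hypotheses~\ref{hyps:param.bnds} of the extended algorithm; (ii) invoke Theorems~\ref{thm:man.mesh} and \ref{thm:output.del.cplx} to conclude that $\outmesh$ is a manifold Delaunay complex of $\ppts \subset \man$; and (iii) invoke \cite[Theorem 3]{dyer2014riemsplx.arxiv} to upgrade this to a homeomorphism with quantitative piecewise flat metric control and then deduce the self-Delaunay property from the protection and stability already established.

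For step (i), I would start by setting $\lipsamconst = 0$ and $\samconst_j = \samconst$ for all $j$ (constant sampling), which reduces the adaptivity condition to a triviality. Since $\fcharti = \exp_{\iota(i)}^{-1}$, the Rauch-type estimates recalled in the discussion immediately preceding the theorem give $\metlipconst \leq 6^3 \Lambda \samconst_i^2$ on balls of radius $6\samconst_i$, and \eqref{eq:locsam.curv.bnd} translates the required bound on $\metlipconst$ (from Hypotheses~\ref{hyps:param.bnds}) into the upper bound on $\samconst$ assumed in the theorem, modulo the mild correction needed to pass from $\distM{\cdot}{\cdot}$-sampling to $\disti{\cdot}{\cdot}$-sampling (the factor absorbed by the passage from $\pertbnd/C$ to $\pertbnd/\tilde C$ via the slightly larger exponent in $\tilde C$). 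The separation parameter suffers only a $(1-2^{-200})$ correction, again absorbed by the adjustment from $C$ to $\tilde C$; the stricter requirement $\pertbnd \leq \sparseconst/5$ covers the passage \eqref{eq:new.pertbnd}. The density condition that $\pts_i$ is a \ueset\ for $\balli{p_i}{8\samconst_i}$ holds because $\exp_{\iota(i)}$ is defined and bi-Lipschitz on $\ballM{\iota(i)}{10\samconst}$, by the injectivity radius bound $\samconst \leq \injradM/4$.

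For step (ii), Theorem~\ref{thm:man.mesh} applies directly since Hypotheses~\ref{hyps:input} and \ref{hyps:param.bnds} are satisfied, giving the manifold complex $\outmesh$. The distortion condition in Theorem~\ref{thm:output.del.cplx} on the coordinate charts themselves (parameter $\eta$) is of the same form as the one on transition functions and is controlled by the same Rauch estimate, so the bound on $\eta$ required by Theorem~\ref{thm:output.del.cplx} is automatically met, and $\outmesh$ is the intrinsic Delaunay complex of $\ppts \subset \man$ with respect to $\gdistM$. For step (iii), we need to verify the hypothesis of \cite[Theorem 3]{dyer2014riemsplx.arxiv} that the star of each sample point is contained in a geodesic ball of radius $h = \min\{\injradM/4, \flakebnd^m/(6\sqrt{\Lambda})\}$; since $\exp_p$ preserves radii at $p$, this star is contained in $\ballM{p}{2\samconst}$, and the bound $\samconst \leq \injradM/4$ together with \eqref{eq:locsam.curv.bnd} (which is stronger than $2\samconst \leq \flakebnd^m/(6\sqrt Lambda)$ when $m \geq 1$) yields $2\samconst \leq h$. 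This delivers the homeomorphism $H$ and the metric inequality, after a crude computation of the constant $2^8 \Lambda (\tilde C/\pertbnd)^{2m}$ by plugging the bound on $\flakebnd$ into the error term given by \cite[Theorem 3]{dyer2014riemsplx.arxiv}.

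The self-Delaunay property is obtained for free from the protection already guaranteed by Lemma~\ref{lem:loc.eucl.protect}: in each local Euclidean coordinate patch the $m$-simplices in $\str{p'_i;\delof{\ppts_i}}$ are $\delta$-protected with $\delta = \flakebnd^{m+1}\psparseconst \psamconst_i$, and the piecewise flat metric $\gdistG{\text{PL}}$ on $\outmesh$ differs from $\gdisti$ on the star of any vertex by a factor comparable to $\Lambda \samconst^2$ (by Rauch applied to edges, whose lengths are geodesic distances in $\man$). The bound on $\samconst$ makes this distortion much smaller than $\delta$, so the empty-ball property is preserved, giving the self-Delaunay claim. The main obstacle is purely bookkeeping: tracking how the single sampling bound must be chosen so that it simultaneously controls $\metlipconst$, the hypothesis of \cite[Theorem 3]{dyer2014riemsplx.arxiv}, the $\eta$ of Theorem~\ref{thm:output.del.cplx}, and leaves the constant $\tilde C$ in a form that cleanly absorbs every small loss incurred by passing back and forth between $\gdistM$ and the local Euclidean metrics; no delicate new geometric argument is needed beyond what was already set up in the preceding subsections.
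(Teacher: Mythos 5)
Your proposal is correct and follows essentially the same route as the paper: you verify Hypotheses~\ref{hyps:input} and \ref{hyps:param.bnds} via Rauch estimates on the exponential charts (with $\lipsamconst = 0$), translate the $\metlipconst$-constraint into the stated bound on $\samconst$ absorbing the tiny $\gdistM$-to-$\gdisti$ corrections into the enlarged constant $\tilde C$, check that stars lie within geodesic balls of the required radius $h$, and invoke \cite[Theorem 3]{dyer2014riemsplx.arxiv} for the homeomorphism and metric estimate. Your explicit reduction through Theorems~\ref{thm:man.mesh} and \ref{thm:output.del.cplx} and your brief argument that the self-Delaunay property follows from $\delta$-protection plus the (much smaller than $\delta$) distortion between $\gdistG{\text{PL}}$ and $\gdisti$ make explicit what the paper leaves implicit, but these are matters of presentation, not a different method.
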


\paragraph{The general smooth case.} The homeomorphism result in the Riemannian setting can be exploited whenever the transition functions are smooth (or at least $C^3$), even if there is no explicit Riemannian metric associated with the input. The reason is that we can construct a Riemannian metric on the manifold by the standard trick using a partition of unity subordinate to the atlas~\cite[Thm. V.4.5]{boothby1986}. Then a short exercise shows that the metric distortion of the coordinate charts is bounded by $\metlipconst$.
It follows then that the constructed Riemannian metric satisfies \Thmref{thm:output.del.cplx} if Hypotheses~\ref{hyps:input} and \ref{hyps:param.bnds} are satisfied. Thus we can guarantee the existence of a Delaunay triangulation with respect to any smooth metric that can be locally approximated by a Euclidean metric to any desired accuracy (i.e., with arbitrarily small metric distortion): At a sufficiently high sampling density $\outmesh$ will satisfy \Thmref{thm:output.del.cplx} with respect to the given metric, as well as both of Theorems~\ref{thm:output.del.cplx} and \ref{thm:riem.del.tri} for the constructed Riemannian metric. 

The primary example of such a smooth, non-Riemannian metric is the metric of the ambient space $\amb$ restricted to a submanifold $\man \subset \amb$. The associated Delaunay complex is often called the restricted Delaunay complex. Sampling conditions that ensure that the Delaunay complexes associated with the restricted ambient metric and the induced Riemannian metric coincide are worked out in detail from the extrinsic point of view in \cite{boissonnat2013stab2.inria}.


%

\section{Details of the analysis}
\label{sec:details}

In this section we provide details to support the argument made in
\Secref{sec:analysis.outline}. 

\subsection{Simplex distortion}
\label{sec:splx.distort}

Our transition functions introduce a metric distortion when we move
from one coordinate chart to another. The geometric properties of a
simplex will be slightly different if we consider it with respect to
the Euclidean metric $\gdisti$ than they would be if we are using a
different Euclidean metric $\gdistj$. We wish to bound the magnitude
of the change of such properties as the thickness and the circumradius
of a simplex that is subjected to such a distortion. This is an
exercise in linear algebra.

We wish to compare two Euclidean simplices with corresponding vertices, but whose corresponding edge lengths differ by a relatively small amount. The embedding of the simplex in Euclidean space (i.e., the coordinates of the vertices) is not relevant to us. Previous results often only consider the case where the vertices of a given simplex are perturbed a small amount to obtain a new simplex. \Lemref{lem:splx.good.isometry} demonstrates the existence of an isometry that allows us to also consider the general situation in terms of vertex displacements.

We will exploit observations on the linear algebra of simplices
developed in previous work \cite{boissonnat2013stab1}. A $k$-simplex
$\splxs = \asimplex{p_0, \ldots, p_k}$ in $\rem$ can be represented by
an $m \times k$ matrix $P$, whose $i^{\text{th}}$ column is $p_i -
p_0$. We let $\sing{i}{A}$ denote the $i^{\text{th}}$ singular value
of a matrix $A$, and observe that $\norm{P} = \sing{1}{P} \leq
\sqrt{k}\longedge{\splxs}$. 

We are particularly interested in bounds on the smallest singular value of $P$, which is the inverse of the largest singular value of the pseudo-invese $\pseudoinv{P} = (\transp{P}P)^{-1}\transp{P}$. If the columns of $P$ are viewed as a basis for $\affhull{\splxs}$, then the rows of $\pseudoinv{P}$ may be viewed as the dual basis.  The magnitude of a dual vector is equal to the inverse of the corresponding altitudes in $\splxs$, and this leads directly to the desired bound on the smallest singular value of $P$, which is expressed in the following Lemma \cite[Lemma 2.4]{boissonnat2013stab1}:
\begin{lem}[Thickness and singular value]
  \label{lem:bound.skP}
  Let $\splxs = \simplex{p_0, \ldots, p_k}$ be a non-de\-generate
  $k$-simplex in $\rem$, with $k>0$, and let $P$ be the $m \times k$
  matrix whose $i^{\text{th}}$ column is $p_i - p_0$. Then
  the
  $i^{\text{th}}$ row of $\pseudoinv{P}$ is given by $\transp{w}_i$,
  where $w_i$ is orthogonal to $\affhull{\opface{p_i}{\splxs}}$, and
  \begin{equation*}
    \norm{w_i} = \splxalt{p_i}{\splxs}^{-1}.
  \end{equation*}
  We
  have the following bound on the smallest singular value of $P$:
  \begin{equation*}
    \sing{k}{P} \geq \sqrt{k} \thickness{\splxs}\longedge{\splxs}.
  \end{equation*}
\end{lem}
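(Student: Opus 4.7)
The plan is to identify the rows of $\pseudoinv{P}$ with the dual basis of the columns of $P$, extract the geometric (altitude) interpretation of that dual basis, and then convert the altitude information into a lower bound on $\sing{k}{P}$ via the Frobenius norm of $\pseudoinv{P}$.

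First I would fix $i \in \{1, \ldots, k\}$ and write $\transp{w}_i$ for the $i^{\text{th}}$ row of $\pseudoinv{P}$. The identity $\pseudoinv{P} P = I_k$ reads $\transp{w}_i(p_j - p_0) = \delta_{ij}$ for $j = 1, \ldots, k$, so $w_i$ is orthogonal to $p_j - p_0$ for every $j \neq i$ with $j \geq 1$. Since $p_0$ itself lies in $\opface{p_i}{\splxs}$, these $k-1$ vectors span the direction of $\affhull{\opface{p_i}{\splxs}}$, which gives $w_i \perp \affhull{\opface{p_i}{\splxs}}$.

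To pin down $\norm{w_i}$, I would decompose $p_i - p_0$ orthogonally as its projection onto $\affhull{\opface{p_i}{\splxs}}$ (translated by $p_0$) plus a perpendicular component; the latter has length exactly $\splxalt{p_i}{\splxs}$ and is parallel to $w_i$. The tangential part is killed by $\transp{w}_i$, so the condition $\transp{w}_i(p_i - p_0) = 1$ collapses to $\norm{w_i}\splxalt{p_i}{\splxs} = 1$, yielding the first claim.

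For the singular value bound I would use $\sing{k}{P} = \norm{\pseudoinv{P}}^{-1}$ and dominate the spectral norm by the Frobenius norm:
\begin{equation*}
  \norm{\pseudoinv{P}}^2 \;\leq\; \sum_{i=1}^{k} \norm{w_i}^2 \;=\; \sum_{i=1}^{k} \splxalt{p_i}{\splxs}^{-2} \;\leq\; \frac{k}{(k\thickness{\splxs}\longedge{\splxs})^2},
\end{equation*}
where the last step uses $\splxalt{p_i}{\splxs} \geq k\thickness{\splxs}\longedge{\splxs}$ directly from the definition of thickness. Taking reciprocals gives $\sing{k}{P} \geq \sqrt{k}\thickness{\splxs}\longedge{\splxs}$.

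The only bookkeeping subtlety is that $\pseudoinv{P}$ exposes only the $k$ altitudes from $p_1, \ldots, p_k$, whereas $\thickness{\splxs}$ is defined from the minimum over all $k+1$ vertices; but a minimum over a subset is at least the minimum over the full set, so the lower bound on $\splxalt{p_i}{\splxs}$ above is valid for $i \geq 1$ and no altitude information is lost. Beyond this, the argument is routine linear algebra, and I do not anticipate a genuine obstacle.
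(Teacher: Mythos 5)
Your argument is correct and follows the same route the paper sketches just before the lemma: identify the rows of $\pseudoinv{P}$ with the dual basis of the columns of $P$, read off their norms as inverse altitudes, and pass to the smallest singular value of $P$ by bounding the spectral norm of $\pseudoinv{P}$ with its Frobenius norm. Your remark that only $k$ of the $k{+}1$ altitudes are visible in $\pseudoinv{P}$, and that this costs nothing because the thickness bound applies to each of them, correctly dispatches the one bookkeeping subtlety.
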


We will also have use for a lower bound on the thickness of $\splxs$,
given the smallest singular value for the representative matrix
$P$. We observe that $P$ was constructed by arbitrarily choosing one
vertex, $p_0$, to serve as the origin. If there is a vertex $p_i$,
different from $p_0$, such that $\splxalt{p_i}{\splxs}$ is minimal
amongst all the altitudes of $\splxs$, then according to
\Lemref{lem:bound.skP}, $\norm{w_i} = ( k \thickness{\splxs}
\longedge{\splxs} )^{-1}$, and it follows then that
$\sing{1}{\pseudoinv{P}} \geq ( k \thickness{\splxs} \longedge{\splxs}
)^{-1}$, and therefore
\begin{equation}
  \label{eq:upper.bnd.small.sing}
  \sing{k}{P} \leq k \thickness{\splxs} \longedge{\splxs},
\end{equation}
in this case. 

We are going to be interested here in purely intrinsic properties of
simplices in $\rem$; properties that are not dependent on the choice
of embedding in $\rem$. In this context it is convenient to make use
of the \defn{Gram matrix} $\transp{P}P$, because if $\transp{Q}Q =
\transp{P}P$, then there is an orthogonal transformation $O$ such that
$P = OQ$. This assertion becomes evident when considering the singular
value decompositions of $P$ and $Q$. Indeed, the entries of the Gram
matrix can be expressed in terms of squared edge lengths, as observed
in the proof of the following:
\begin{lem}
  \label{lem:splx.gram}
  Suppose that $\splxs = \asimplex{p_0, \ldots, p_k}$ and $\tsplxs =
  \asimplex{\tilde{p}_0, \ldots, \tilde{p}_k}$ are two $k$-simplices
  in $\rem$ such that
  \begin{equation*}
    \abs{\norm{p_i - p_j} - \norm{\tilde{p}_i - \tilde{p}_j}} \leq
    \metlipconst \longedge{\splxs},
  \end{equation*}
  for all $0 \leq i < j \leq k$.  Let $P$ be the matrix whose
  $i^{\text{th}}$ column is $p_i - p_0$, and define $\tilde{P}$
  similarly. Consider the Gram matrices, and let $E$ be the matrix
  that records their difference:
  \begin{equation*}
    \transp{\tilde{P}} \tilde{P} = \transp{P} P + E.
  \end{equation*}
  If $\metlipconst \leq \frac{2}{3}$, then the entries of $E$ are
  bounded by $\abs{E_{ij}} \leq 4 \metlipconst \longedge{\splxs}^2$,
  and in particular
  \begin{equation}
    \label{eq:gram.err.bnd}
    \norm{E} \leq 4 k \metlipconst \longedge{\splxs}^2.     
  \end{equation}
\end{lem}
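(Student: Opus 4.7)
The plan is to use the polarization identity to re-express each entry of the Gram matrix in terms of squared edge lengths, so that the hypothesis on edge-length distortion translates directly into an entrywise bound on $E$, from which the operator-norm bound follows by a crude majorization.

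\textbf{Step 1: rewrite Gram entries via polarization.} The $(i,j)$ entry of $\transp{P}P$ is $(p_i-p_0)\cdot(p_j-p_0)$, and the polarization identity gives
\begin{equation*}
2(p_i-p_0)\cdot(p_j-p_0) = \norm{p_i-p_0}^2 + \norm{p_j-p_0}^2 - \norm{p_i-p_j}^2,
\end{equation*}
with the analogous identity for $\tilde{P}$. Subtracting, the $(i,j)$ entry of $E$ is a signed sum of three differences of the form $\norm{\tilde{p}_a-\tilde{p}_b}^2 - \norm{p_a-p_b}^2$.

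\textbf{Step 2: bound difference of squares of one edge length.} For any pair $a,b$, factor
\begin{equation*}
\norm{\tilde{p}_a-\tilde{p}_b}^2 - \norm{p_a-p_b}^2 = \bigl(\norm{\tilde{p}_a-\tilde{p}_b} - \norm{p_a-p_b}\bigr)\bigl(\norm{\tilde{p}_a-\tilde{p}_b} + \norm{p_a-p_b}\bigr).
\end{equation*}
The first factor is at most $\metlipconst\longedge{\splxs}$ by hypothesis, and since $\metlipconst\leq \tfrac{2}{3}$ the second factor is at most $(2+\metlipconst)\longedge{\splxs}\leq \tfrac{8}{3}\longedge{\splxs}$. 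Hence each such squared-length difference is bounded by $\tfrac{8}{3}\metlipconst\longedge{\splxs}^2$.

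\textbf{Step 3: assemble entrywise and operator-norm bounds.} Combining Steps 1 and 2, $\abs{E_{ij}} \leq \tfrac{1}{2}\cdot 3 \cdot \tfrac{8}{3}\metlipconst\longedge{\splxs}^2 = 4\metlipconst\longedge{\splxs}^2$. For the operator norm on the $k\times k$ matrix $E$, majorize by the Frobenius norm:
\begin{equation*}
\norm{E} \leq \norm{E}_F = \Bigl(\sum_{i,j} E_{ij}^2\Bigr)^{1/2} \leq k\,\max_{i,j}\abs{E_{ij}} \leq 4k\metlipconst\longedge{\splxs}^2,
\end{equation*}
which gives \eqref{eq:gram.err.bnd}.

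There is essentially no obstacle: the only place care is needed is making sure the constants from the polarization identity (three squared-length terms and the factor $\tfrac{1}{2}$) combine with the $\tfrac{8}{3}$ from the sum bound to give exactly $4$, and that the passage from the entrywise bound to the operator-norm bound is handled by an explicit inequality such as $\norm{E}\leq \norm{E}_F$.
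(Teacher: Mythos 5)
Your proof is correct and takes essentially the same route as the paper: both use the polarization identity to reduce each Gram-matrix entry to three squared-edge-length differences, bound each such difference by $(2+\metlipconst)\metlipconst\longedge{\splxs}^2$, and then pass from the entrywise bound to the operator norm with a factor of $k$ (the paper does this via column norms and then $\norm{E}\leq\sqrt{k}\max_i\norm{E_{\cdot i}}$, you via the Frobenius norm, but these are equivalent majorizations).
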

\begin{proof}
  Let $v_i = p_i - p_0$, and $\tilde{v}_i = \tilde{p}_i -
  \tilde{p}_0$.  Expanding scalar products of the form
  $\transp{(v_j - v_i)}(v_j - v_i)$, we obtain a bound on the
  magnitude of the coefficients of $E$:
  \begin{equation*}
    \begin{split}
      \abs{ \transp{\tilde{v}}_i \tilde{v}_j - \transp{v}_i v_j }
      &\leq
      \frac{1}{2} \left( \abs{ \norm{\tilde{v}_i}^2 - \norm{v_i}^2 }
        + \abs{ \norm{\tilde{v}_j}^2 - \norm{v_j}^2 }
        + \abs{ \norm{\tilde{v}_j - \tilde{v}_i}^2 - \norm{v_j - v_i}^2 }
      \right) \\
      &\leq \frac{3}{2}(2 + \metlipconst)\metlipconst
      \longedge{\splxs}^2\\
      &\leq 4 \metlipconst \longedge{\splxs}^2.
    \end{split}
  \end{equation*}

  This leads us to a bound on $\sing{1}{E} = \norm{E}$. Indeed, the
  magnitude of the column vectors of $E$ is bounded by $\sqrt{k}$
  times a bound on the magnitude of their coefficients, and the
  magnitude of $\sing{1}{E}$ is bounded by $\sqrt{k}$ times a bound on
  the magnitude of the column vectors. We obtain
  \Eqnref{eq:gram.err.bnd}.
\end{proof}

\Lemref{lem:splx.gram} enables us to bound the thickness of a
distorted simplex:
\begin{lem}[Thickness under distortion]
  \label{lem:intrinsic.thick.distortion}
  Suppose that $\splxs = \asimplex{p_0, \ldots, p_k}$ and $\tsplxs =
  \asimplex{\tilde{p}_0, \ldots, \tilde{p}_k}$ are two $k$-simplices in
  $\rem$ such that
  \begin{equation*}
    \abs{\norm{p_i - p_j} - \norm{\tilde{p}_i - \tilde{p}_j}} \leq
    \metlipconst \longedge{\splxs}
  \end{equation*}
  for all $0 \leq i < j \leq k$.  Let $P$ be the matrix whose
  $i^{\text{th}}$ column is $p_i - p_0$, and define $\tilde{P}$
  similarly.
 
  If
  \begin{equation*}
    \metlipconst \leq \left( \frac{\eta \thickness{\splxs}}{2} \right)^2
   \qquad \text{ with }  \eta^2 \leq 1,
  \end{equation*}
  then
  \begin{equation*}
    \sing{k}{\tilde{P}} \geq (1 - \eta^2) \sing{k}{P}, 
  \end{equation*}
  and
  \begin{equation*}
    \thickness{\tsplxs}\longedge{\tsplxs} \geq \frac{1}{\sqrt{k}}(1 - \eta^2)
    \thickness{\splxs}\longedge{\splxs},
  \end{equation*}
  and
  \begin{equation*}
    \thickness{\tsplxs} \geq \frac{4}{5\sqrt{k}}(1 - \eta^2)
    \thickness{\splxs}. 
  \end{equation*}
\end{lem}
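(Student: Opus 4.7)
The plan is to combine the Gram-matrix perturbation bound from Lemma \ref{lem:splx.gram} with Weyl's inequality for Hermitian matrices to control $\sing{k}{\tilde P}$, and then translate the resulting singular-value information into a thickness statement using Lemma \ref{lem:bound.skP} and \eqref{eq:upper.bnd.small.sing}. The hypothesis on $\metlipconst$, together with $\eta^2 \leq 1$ and the elementary bound $\thickness{\splxs} \leq 1/k \leq 1$ (for $k \geq 1$; the case $k = 0$ is trivial), forces $\metlipconst \leq \tfrac{1}{4}$, so in particular the hypothesis $\metlipconst \leq \tfrac{2}{3}$ of Lemma \ref{lem:splx.gram} is satisfied.

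First I would apply Lemma \ref{lem:splx.gram} to write $\transp{\tilde P}\tilde P = \transp P P + E$ with $\norm{E} \leq 4k\metlipconst\longedge{\splxs}^2$. Since $\sing{k}{\tilde P}^2$ and $\sing{k}{P}^2$ are the smallest eigenvalues of positive semidefinite symmetric matrices differing by $E$, Weyl's inequality yields $\sing{k}{\tilde P}^2 \geq \sing{k}{P}^2 - \norm{E}$. Lemma \ref{lem:bound.skP} provides $\sing{k}{P}^2 \geq k\thickness{\splxs}^2\longedge{\splxs}^2$, so the hypothesis $\metlipconst \leq (\eta\thickness{\splxs}/2)^2$ gives $\norm{E}/\sing{k}{P}^2 \leq 4\metlipconst/\thickness{\splxs}^2 \leq \eta^2$. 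Hence $\sing{k}{\tilde P}^2 \geq (1-\eta^2)\sing{k}{P}^2$, and taking square roots with $\sqrt{1-\eta^2}\geq 1-\eta^2$ (valid for $\eta^2 \in [0,1]$) produces the first conclusion.

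To pass from $\sing{k}{\tilde P}$ to $\thickness{\tsplxs}$, I would relabel the vertices so that $\tilde p_0$ is not a vertex of minimum altitude in $\tsplxs$, which is possible whenever $k \geq 1$; the conforming relabeling of $\splxs$ leaves all of the intrinsic quantities and the bound on $\norm{E}$ unchanged. With this labeling, \eqref{eq:upper.bnd.small.sing} applied to $\tilde P$ gives $\sing{k}{\tilde P} \leq k\thickness{\tsplxs}\longedge{\tsplxs}$, and combining with the lower bound on $\sing{k}{\tilde P}$ from the previous paragraph (and Lemma \ref{lem:bound.skP} for $\splxs$) delivers the second conclusion. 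For the third, the hypothesis on edge lengths yields $\longedge{\tsplxs} \leq (1+\metlipconst)\longedge{\splxs} \leq \tfrac{5}{4}\longedge{\splxs}$, and dividing the second conclusion through by $\longedge{\tsplxs}$ produces the advertised constant $\tfrac{4}{5\sqrt{k}}$. The one place where care is needed is this labeling step, since unlike the lower bound in Lemma \ref{lem:bound.skP} the upper bound \eqref{eq:upper.bnd.small.sing} holds only when $\tilde p_0$ does not attain the minimum altitude; once that is arranged, everything else is a routine chain of inequalities.
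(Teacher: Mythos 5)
Your proposal is correct and follows essentially the same route as the paper: both apply Lemma~\ref{lem:splx.gram} to control the Gram-matrix perturbation, invoke a Weyl-type inequality to bound the shift in $\sing{k}{\cdot}^2$, feed in Lemma~\ref{lem:bound.skP}, and then reorient the vertex labeling so that \eqref{eq:upper.bnd.small.sing} applies to $\tsplxs$. The only cosmetic difference is that you take square roots and use $\sqrt{1-\eta^2}\geq 1-\eta^2$, whereas the paper factors $\sing{k}{\tilde P}^2-\sing{k}{P}^2$ as a product of sums and differences and drops $\sing{k}{\tilde P}$ from the denominator; both yield the same bound.
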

\begin{proof}
  The equation $\transp{\tilde{P}}\tilde{P} = \transp{P} P + E$
  implies that
  \begin{equation*}
    \abs{\sing{k}{\tilde{P}}^2 - \sing{k}{P}^2}  \leq \sing{1}{E},
  \end{equation*}
  and so
  \begin{equation*}
    \abs{\sing{k}{\tilde{P}} - \sing{k}{P}} \leq
    \frac{\sing{1}{E}}{\sing{k}{\tilde{P}} + \sing{k}{P}}
    \leq
    \frac{\sing{1}{E}}{\sing{k}{P}}.
  \end{equation*}
  Thus
  \begin{equation*}
    \sing{k}{\tilde{P}} \geq \sing{k}{P} -
    \frac{\sing{1}{E}}{\sing{k}{P}} = \sing{k}{P} \left( 1 -
      \frac{\sing{1}{E}}{\sing{k}{P}^2} \right).
  \end{equation*}
  From \Lemref{lem:splx.gram} and the bound on $\metlipconst$ we have
  \begin{equation*}
    \sing{1}{E} \leq \eta^2 k \thickness{\splxs}^2 \longedge{\splxs}^2,
  \end{equation*}
  and so $\frac{\sing{1}{E}}{\sing{k}{P}^2} \leq \eta^2$ by
  \Lemref{lem:bound.skP}, and we obtain $\sing{k}{\tilde{P}} \geq (1 -
  \eta^2)\sing{k}{P}$. 

  For the thickness bound we assume, without loss of generality, that
  there is some vertex different from $\tilde{p}_0$ that realises the
  minimal altitude in $\tilde{\splxs}$ (our choice of ordering of the
  vertices is unimportant, other than to establish the correspondence
  between $\splxs$ and $\tsplxs$). Thus
  \Eqnref{eq:upper.bnd.small.sing} and  \Lemref{lem:bound.skP}, give
  the inequalities
  \begin{equation*}
    k \thickness{\tsplxs}\longedge{\tsplxs} \geq \sing{k}{\tilde{P}},
    \qquad \text{ and } \qquad
    \sing{k}{P} \geq \sqrt{k} \thickness{\splxs}\longedge{\splxs},
  \end{equation*}
  and we obtain
  \begin{equation*}
    k \thickness{\tsplxs} \longedge{\tsplxs} \geq 
    (1 - \eta^2) \sqrt{k}\thickness{\splxs} \longedge{\splxs}.
  \end{equation*}
  The final result follows since
  $\frac{\longedge{\splxs}}{\longedge{\tsplxs}} \geq \frac{1}{1 +
    \metlipconst} \geq \frac{4}{5}$.
\end{proof}

In order to obtain a bound on the circumradius of $\tsplxs$ with
respect to that of $\splxs$, it is convenient to find an isometry that
maps the vertices of $\splxs$ close to the vertices of
$\tsplxs$. Choosing $\tilde{p}_0$ and $p_0$ to coincide at the origin,
the displacement error for the remaining vertices is minimised by
taking the orthogonal polar factor of the linear transformation $A=
\tilde{P}P^{-1}$ that maps $\splxs$ to $\tsplxs$. In other words, if
the singular value decomposition of $A$ is $A = U_A \Sigma_A
\transp{V}_A$, then $A = \Phi S$, where $S = V_A \Sigma_A
\transp{V}_A$, and $\Phi = U_A \transp{V}_A$ is the desired linear
isometry. We have the following result, which is a special case of a
theorem demonstrated by Jim\'enez and Petrova~\cite{jimenez2013}:
\begin{lem}[Close alignment of bases]
  \label{lem:good.isometry}
  Suppose that $P$ and $\tilde{P}$ are non-degenerate $k \times k$
  matrices such that
  \begin{equation}
    \label{eq:gram.perturb}
    \transp{\tilde{P}}\tilde{P} = \transp{P} P + E.
  \end{equation}
  Then there exists a linear isometry $\Phi: \reel^k \to \reel^k$ such
  that 
  \begin{equation*}
    \norm{\tilde{P} - \Phi P} \leq \frac{\sing{1}{P}
      \sing{1}{E}}{\sing{k}{P}^2}. 
  \end{equation*}
\end{lem}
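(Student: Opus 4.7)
The plan is to use the polar decomposition of the linear map $A = \tilde{P}P^{-1}$, which is well-defined since $P$ is non-degenerate. I would write $A = \Phi S$ with $\Phi$ orthogonal and $S$ symmetric positive semi-definite; equivalently, if $A = U\Sigma\transp{V}$ is the singular value decomposition, take $\Phi = U\transp{V}$ and $S = V\Sigma\transp{V}$. Among all orthogonal matrices, this $\Phi$ is the natural candidate, as it is known to minimise $\norm{A - \Phi'}$.

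Once $\Phi$ is fixed this way, $\tilde{P} - \Phi P = (A - \Phi)P = \Phi(S - I)P$, and since $\Phi$ preserves the operator norm this yields
\begin{equation*}
\norm{\tilde{P} - \Phi P} = \norm{(S - I)P} \leq \norm{S - I}\,\sing{1}{P}.
\end{equation*}
So the lemma reduces to showing $\norm{S - I} \leq \sing{1}{E}/\sing{k}{P}^2$. To access $S$, I would square it: hypothesis \eqref{eq:gram.perturb} gives
\begin{equation*}
S^2 = \transp{A} A = \transp{P}^{-1}(\transp{P}P + E)P^{-1} = I + F, \qquad F = \transp{P}^{-1} E P^{-1},
\end{equation*}
and submultiplicativity of the operator norm, together with $\norm{P^{-1}} = 1/\sing{k}{P}$, immediately bounds $\norm{F} \leq \sing{1}{E}/\sing{k}{P}^2$.

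The one step that is not purely mechanical is passing from this bound on $\norm{S^2 - I}$ to one on $\norm{S - I}$. I would diagonalise $S$ in an orthonormal eigenbasis: its eigenvalues have the form $\sqrt{1 + \lambda}$ with $\lambda$ an eigenvalue of $F$, and the positive semi-definiteness of $S^2 = I + F$ forces $\lambda \geq -1$. The elementary inequality $|\sqrt{1 + \lambda} - 1| \leq |\lambda|$, valid on $[-1, \infty)$, then gives $\norm{S - I} \leq \norm{F}$, and combining with the previous estimates completes the proof. This square-root step is the main (if mild) obstacle, since the analogous bound $\norm{S - I} \leq \norm{S^2 - I}$ fails for general matrices; it succeeds here precisely because $S$ is positive semi-definite, a structural feature built into our choice of polar factor.
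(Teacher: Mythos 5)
Your proposal is correct and follows essentially the same approach as the paper: both take $\Phi$ to be the orthogonal polar factor (equivalently $U_A\transp{V}_A$ from the SVD) of $A = \tilde{P}\inv{P}$, reduce to bounding $\norm{\Sigma_A - I}$ (your $\norm{S - I}$ is the same quantity, since conjugation by an orthogonal matrix preserves the norm), pass through $\transp{A}A = I + F$ with $\norm{F} \leq \sing{k}{P}^{-2}\sing{1}{E}$, and finish with the observation that $\abs{\sqrt{1+\lambda} - 1} \leq \abs{\lambda}$ (the paper writes this as $\abs{\sigma_i(A) - 1} \leq \abs{\sigma_i(A)^2 - 1}/(1 + \sigma_i(A)) \leq \sing{1}{F}$, which is the same estimate applied to $\sigma_i(A) = \sqrt{1+\lambda_i}$). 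The only cosmetic difference is that you phrase the square-root step via the eigenvalues of the symmetric positive-semidefinite factor $S$, while the paper phrases it via the singular values of $A$; these coincide.
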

\begin{proof}
  Multiplying by $\invtransp{P} := \inv{ ( \transp{P})}$ on the left,
  and by $\inv{P}$ on the right, we rewrite \Eqnref{eq:gram.perturb} as
  \begin{equation}
    \label{eq:A.F}
   \transp{A}A = I + F,
  \end{equation}
  where $A = \tilde{P} \inv{P}$, and $F = \invtransp{P}E \inv{P}$.
  Using the singular value decomposition $A = U_A \Sigma_A
  \transp{V}_A$, we let $\Phi = U_A \transp{V}_A$, and we find
  \begin{equation}
    \label{eq:expand.mat.diff}
    \tilde{P} - \Phi P = (A - \Phi) P = U_A( \Sigma_A - I )
    \transp{V}_A P.
  \end{equation}
  From \Eqnref{eq:A.F} we deduce that $\sing{1}{A}^2 \leq 1 +
  \sing{1}{F}$, and also that $\sing{k}{A}^2 \geq 1 - \sing{1}{F}$. It
  follows that
  \begin{equation*}
    \max_i \abs{\sing{i}{A} - 1} \leq \frac{\sing{1}{F}}{1 +
      \sing{i}{A}} \leq \sing{1}{F},
  \end{equation*}
  and thus
 \begin{equation*}
    \norm{\Sigma_A - I} \leq \sing{1}{F} \leq \sing{1}{\inv{P}}^2
    \sing{1}{E} = \sing{k}{P}^{-2} \sing{1}{E}.
  \end{equation*}
  The result now follows from \Eqnref{eq:expand.mat.diff}.
\end{proof}
Recalling that an upper bound on the norm of a matrix also serves as
an upper bound on the norm of its column vectors, we obtain the
following immediate consequence of \Lemref{lem:good.isometry}, using
\Lemref{lem:splx.gram} and \Lemref{lem:bound.skP}:
\begin{lem}[Close alignment of simplices]
  \label{lem:splx.good.isometry}
  Suppose that $\splxs = \asimplex{p_0, \ldots, p_k}$ and $\tsplxs =
  \asimplex{\tilde{p}_0, \ldots, \tilde{p}_k}$ are two $k$-simplices
  in $\rem$ such that
  \begin{equation*}
    \abs{\norm{p_i - p_j} - \norm{\tilde{p}_i - \tilde{p}_j}} \leq
    \metlipconst \longedge{\splxs},
  \end{equation*}
  for all $0 \leq i < j \leq k$.  Let $P$ be the matrix whose
  $i^{\text{th}}$ column is $p_i - p_0$, and define $\tilde{P}$
  similarly. If $\metlipconst \leq \frac{2}{3}$, then there exists an
  isometry $\Phi: \rem \to \rem$ such that
  \begin{equation*}
    \norm{\tilde{P} - \Phi P} \leq \frac{4 \sqrt{k} \metlipconst
      \longedge{\splxs} }{\thickness{\splxs}^2},
  \end{equation*}
  and if $\hat{\splxs} = \Phi \splxs = \asimplex{\hat{p}_0,
    \ldots, \hat{p}_k}$, then $\hat{p}_0 = \tilde{p}_0$, and
  \begin{equation*}
    \norm{\hat{p}_i - \tilde{p}_i} \leq \frac{4 \sqrt{k} \metlipconst
      \longedge{\splxs} }{\thickness{\splxs}^2} \qquad \text{for all }
    1 \leq i \leq k.
  \end{equation*}
\end{lem}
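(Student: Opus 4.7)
The plan is to invoke Lemma~\ref{lem:good.isometry} applied to square representatives of $P$ and $\tilde{P}$, and then substitute the quantitative estimates from Lemmas~\ref{lem:splx.gram} and \ref{lem:bound.skP}. Because Lemma~\ref{lem:good.isometry} is phrased for $k \times k$ matrices while here $P$ and $\tilde{P}$ are $m \times k$, the first task is to reduce to the square setting. Assuming $\thickness{\splxs}>0$ (otherwise the stated bound is vacuous), $P$ has rank $k$ and I can pick linear isometries $U,\tilde{U}\colon\mathbb{R}^k \to \rem$ whose images contain the column spans of $P$ and $\tilde P$ respectively, and write $P = UQ$, $\tilde{P} = \tilde{U}\tilde{Q}$ for $k \times k$ matrices $Q,\tilde Q$. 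Since $U$ and $\tilde U$ are isometries, $\transp{Q}Q = \transp{P}P$ and $\transp{\tilde{Q}}\tilde{Q} = \transp{\tilde{P}}\tilde{P}$, so the Gram identity of Lemma~\ref{lem:splx.gram} transfers verbatim with the same error matrix $E$, and the singular values of $Q$ (resp.\ $\tilde Q$) coincide with those of $P$ (resp.\ $\tilde P$).

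Next I apply Lemma~\ref{lem:good.isometry} to $Q,\tilde{Q}$ to obtain a linear isometry $\Phi_0$ of $\mathbb{R}^k$ satisfying
\begin{equation*}
  \norm{\tilde{Q} - \Phi_0 Q} \leq \frac{\sing{1}{Q}\,\sing{1}{E}}{\sing{k}{Q}^2}.
\end{equation*}
I then lift $\Phi_0$ to a linear isometry $\Phi$ of $\rem$ by setting $\Phi = \tilde{U}\Phi_0 U^{-1}$ on the image of $U$ and extending by any isometry between the two orthogonal complements. This gives $\Phi P = \tilde{U}\Phi_0 Q$, hence $\tilde{P} - \Phi P = \tilde{U}(\tilde{Q} - \Phi_0 Q)$, and since $\tilde U$ is an isometry the operator norms agree.

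With the reduction complete, the bound is pure substitution. The hypothesis $\metlipconst \leq \tfrac{2}{3}$ triggers Lemma~\ref{lem:splx.gram}, giving $\sing{1}{E} = \norm{E} \leq 4k\metlipconst\longedge{\splxs}^2$. Lemma~\ref{lem:bound.skP} yields $\sing{k}{P}^2 \geq k\thickness{\splxs}^2\longedge{\splxs}^2$. Finally, since every column of $P$ has norm at most $\longedge{\splxs}$, we have the trivial estimate $\sing{1}{P} = \norm{P} \leq \sqrt{k}\longedge{\splxs}$. Plugging these into the bound from Lemma~\ref{lem:good.isometry} produces
\begin{equation*}
  \norm{\tilde{P} - \Phi P} \leq \frac{\sqrt{k}\longedge{\splxs}\cdot 4k\metlipconst\longedge{\splxs}^2}{k\thickness{\splxs}^2\longedge{\splxs}^2} = \frac{4\sqrt{k}\metlipconst\longedge{\splxs}}{\thickness{\splxs}^2},
\end{equation*}
exactly as claimed.

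For the per-vertex statement, I compose $\Phi$ with a translation to form the affine isometry $\hat{\Phi}(x) = \Phi(x - p_0) + \tilde{p}_0$, and set $\hat{\splxs} = \hat{\Phi}(\splxs)$. Then $\hat{p}_0 = \tilde{p}_0$ by construction, and for $i \geq 1$ the difference $\hat{p}_i - \tilde{p}_i = \Phi(p_i - p_0) - (\tilde{p}_i - \tilde{p}_0)$ is the $i^{\text{th}}$ column of $\Phi P - \tilde{P}$, whose norm is bounded by $\norm{\Phi P - \tilde{P}}$. The main (and really only) obstacle is the bookkeeping of the non-square reduction; once one verifies that passing through $U,\tilde{U}$ preserves both the Gram perturbation and the singular values, the remainder of the argument is a single chain of substitutions into already-established estimates.
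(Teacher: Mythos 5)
Your proof is correct and follows the same route the paper intends: it treats the lemma as a direct consequence of Lemma~\ref{lem:good.isometry} combined with the estimates from Lemmas~\ref{lem:splx.gram} and \ref{lem:bound.skP}, together with $\sing{1}{P}\leq\sqrt{k}\longedge{\splxs}$ and the fact that an operator-norm bound dominates the column norms. The paper presents the lemma without a written proof, so your explicit reduction of the $m\times k$ matrices to $k\times k$ representatives via the isometries $U,\tilde{U}$ (preserving the Gram matrices, hence the error matrix $E$ and the singular values) simply makes precise a step the paper leaves implicit.
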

Using \Lemref{lem:splx.good.isometry} together with
\cite[Lemma 4.3]{boissonnat2013stab1} we obtain a bound on the difference
in the circumradii of two simplices whose edge lengths are almost the same:
\begin{lem}[Circumradii under distortion]
  \label{lem:circ.rad.distort}
  Suppose that $\splxs = \asimplex{p_0, \ldots, p_k}$ and $\tsplxs =
  \asimplex{\tilde{p}_0, \ldots, \tilde{p}_k}$ are two $k$-simplices
  in $\rem$ such that
  \begin{equation*}
    \abs{\norm{p_i - p_j} - \norm{\tilde{p}_i - \tilde{p}_j}} \leq
    \metlipconst \longedge{\splxs},
  \end{equation*}
  for all $0 \leq i < j \leq k$. If
  \begin{equation*}
    \metlipconst \leq \left( \frac{\thickness{\splxs}}{4}
    \right)^2, 
  \end{equation*}
  then
  \begin{equation*}
    \abs{\circrad{\tsplxs} - \circrad{\splxs} } \leq \frac{16
      k^{\frac{3}{2}} \circrad{\splxs} \metlipconst }{\thickness{\splxs}^3 }.
  \end{equation*}
\end{lem}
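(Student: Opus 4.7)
The plan is to reduce the hypothesis (small change in pairwise distances) to the better-studied setting of small vertex displacements, apply an off-the-shelf circumradius stability result for the latter, and then absorb the resulting error through the displacement bound. The key idea is that although $\splxs$ and $\tsplxs$ have no common ambient reference frame, \Lemref{lem:splx.good.isometry} tells us that there is an isometry $\Phi$ that brings $\splxs$ essentially on top of $\tsplxs$, and isometries preserve circumradii.

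Concretely, I first invoke \Lemref{lem:splx.good.isometry} (whose hypothesis $\metlipconst \leq (2/3)^2$ is implied by $\metlipconst \leq (\thickness{\splxs}/4)^2$ since $\thickness{\splxs}\leq 1$): this produces an isometry $\Phi \colon \rem \to \rem$ and the simplex $\hat{\splxs} = \Phi\splxs = \asimplex{\hat{p}_0, \dots, \hat{p}_k}$ with $\hat{p}_0 = \tilde{p}_0$ and
\begin{equation*}
  \norm{\hat{p}_i - \tilde{p}_i} \leq \frac{4 \sqrt{k}\metlipconst \longedge{\splxs}}{\thickness{\splxs}^2}
  \quad \text{for all } 1 \leq i \leq k.
\end{equation*}
Since $\Phi$ is an isometry, $\hat{\splxs}$ has exactly the same thickness, edge lengths and circumradius as $\splxs$; in particular $\circrad{\hat{\splxs}} = \circrad{\splxs}$.

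Next I compare the two simplices $\hat{\splxs}$ and $\tsplxs$ by a pure vertex-perturbation argument, using the cited \cite[Lemma 4.3]{boissonnat2013stab1}. That result bounds $\abs{\circrad{\tsplxs} - \circrad{\hat{\splxs}}}$ in terms of the maximum vertex displacement $\zeta = 4\sqrt{k}\metlipconst \longedge{\splxs}/\thickness{\splxs}^2$, the dimension $k$, the thickness of $\hat{\splxs}$ (which equals $\thickness{\splxs}$), and its circumradius; the resulting estimate is of the form $C k \zeta \circrad{\splxs} / (\thickness{\splxs}\longedge{\splxs})$. Plugging in the value of $\zeta$, the factor $\longedge{\splxs}$ cancels and one inherits an extra $\sqrt{k}/\thickness{\splxs}$, producing a bound of order $k^{3/2}\metlipconst \circrad{\splxs}/\thickness{\splxs}^3$. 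Absorbing the numerical constants (and a small slack factor coming from $\longedge{\splxs}/\longedge{\tsplxs} \leq (1-\metlipconst)^{-1} \leq 5/4$) yields the constant $16$ in the statement.

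Two technical points need care. First, the hypothesis of \cite[Lemma 4.3]{boissonnat2013stab1} requires the vertex displacement $\zeta$ to be small relative to $\thickness{\hat{\splxs}}\longedge{\hat{\splxs}}$, so that $\tsplxs$ itself is not degenerate and its circumradius is well defined; this is exactly where the standing assumption $\metlipconst \leq (\thickness{\splxs}/4)^2$ is used, because it makes $\zeta/(\thickness{\splxs}\longedge{\splxs})$ a small multiple of $\metlipconst/\thickness{\splxs}^2$, which is then of order $1/16$. Second, one must keep track that the ``reference simplex'' in the vertex-perturbation lemma is $\hat{\splxs}$, not $\tsplxs$, so all ratios (thickness, circumradius, longest edge) should be expressed in terms of $\splxs$, which is automatic by isometry. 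The only minor obstacle is bookkeeping the constants carefully enough to land on $16$; there is no conceptual difficulty, since the linear algebra has all been done in \Lemref{lem:splx.good.isometry} and in the cited result.
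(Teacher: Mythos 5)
Your proposal is correct and follows the same architecture as the paper's proof: invoke \Lemref{lem:splx.good.isometry} to obtain the aligning isometry $\Phi$ with $\hat{p}_0 = \tilde{p}_0$, then apply \cite[Lemma 4.3]{boissonnat2013stab1} and \Lemref{lem:intrinsic.thick.distortion} to track the constants, cancelling $\longedge{\splxs}$ and landing on $16k^{3/2}\metlipconst \circrad{\splxs}/\thickness{\splxs}^3$. One small imprecision: the cited \cite[Lemma 4.3]{boissonnat2013stab1} is a circumcentre-locating lemma (it bounds $\norm{\circcentre{\hat\splxs}-\circcentre{\tsplxs}}$ given near-equidistant distances to the vertices of $\tsplxs$), not a direct bound on $\abs{\circrad{\tsplxs}-\circrad{\hat\splxs}}$; the reduction to circumcentre displacement is the one extra observation $\abs{\circrad{\tsplxs}-\circrad{\splxs}}\leq\norm{\circcentre{\hat\splxs}-\circcentre{\tsplxs}}$, which is exactly where the fact $\hat{p}_0=\tilde{p}_0$ that you flagged is used. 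With that intermediate step made explicit, your argument coincides with the paper's.
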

\begin{proof}
  We define $\hat{\splxs} = \Phi \splxs$, where $\Phi: \splxs \to
  \affhull{\tsplxs}$ is the isometry described in
  \Lemref{lem:splx.good.isometry}.  Since $\hat{p}_0 = \tilde{p}_0$,
  and $\circrad{\hat{\splxs}} = \circrad{\splxs}$, we have
  $\abs{\circrad{\tsplxs} - \circrad{\splxs}} \leq
  \norm{\circcentre{\hat{\splxs}} - \circcentre{\tsplxs}}$.  By
  \Lemref{lem:good.isometry}, the distances between
  $\circcentre{\splxs}$ and the vertices of $\tsplxs$ are all bounded
  by
  \begin{equation*}
    \circrad{\splxs} + \frac{4 \sqrt{k} \metlipconst
      \longedge{\splxs} }{\thickness{\splxs}^2}
    \leq
    (1 + \frac{\sqrt{k}}{2}) \circrad{\splxs} \leq \frac{3
      \sqrt{k}}{2} \circrad{\splxs},
  \end{equation*}
  and these distances differ by no more than
  \begin{equation*}
    \frac{ 8 \sqrt{k} \metlipconst \longedge{\splxs}
    }{\thickness{\splxs}^2}. 
  \end{equation*}
  It follows then from \cite[Lemma 4.3]{boissonnat2013stab1} that
  \begin{equation*}
    \begin{split}
      \norm{\circcentre{\hat{\splxs}} - \circcentre{\tsplxs}}
      &\leq
      \frac{ \frac{3 \sqrt{k}}{2}\circrad{\splxs}} 
      { \thickness{\tsplxs} \longedge{\tsplxs} }
      \left(     \frac{ 8 \sqrt{k} \metlipconst \longedge{\splxs}
        }{\thickness{\splxs}^2} \right)\\
      &\leq
      \frac{ 12 k\circrad{\splxs} \metlipconst } 
      { \frac{3}{4\sqrt{k}}\thickness{\splxs}^3 } \qquad
      \text{by
        \Lemref{lem:intrinsic.thick.distortion}, with } \eta = \frac{1}{2}\\
      &\leq
      \frac{ 16 k^{\frac{3}{2}} 
        \circrad{\splxs} \metlipconst} 
      {  \thickness{\splxs}^3 }.
    \end{split}
  \end{equation*}
\end{proof}

\subsection{Circumcentres and distortion maps}
\label{sec:distortion.maps}

It is convenient to introduce the affine space $\normhull{\splxs}$,
which is the space of centres of circumscribing balls for a simplex
$\splxs \in \rem$. If $\splxs$ is a non-degenerate $k$-simplex, then
$\normhull{\splxs}$ is an affine space of dimension $m-k$
perpendicular to $\affhull{\splxs}$ and containing
$\circcentre{\splxs}$.

The transition functions introduce a small metric distortion, which
motivated our interest in the properties of perturbed simplices. In
order to extend the perturbation algorithm
\cite{boissonnat2014flatpert} to the setting of curved
manifolds, we are interested in quantifying how the test for the hoop
property behaves under a perturbation of the interpoint
distances. Specifically, if a point $p$ is at a distance $\hoopbnd R$
from the diametric sphere of a simplex $\splxs$ in one coordinate
frame, what can we say about the distance of $p$ from
$\diasphere{\splxs}$ when measured by the metric of another coordinate
frame? To this end, we are interested in the behaviour of the
circumcentre under the influence of a mapping that is not distance
preserving. As a first step in this direction, we observe another
consequence of \cite[Lemma 4.3]{boissonnat2013stab1}:
\begin{lem}[Circumscribing balls under distortion]
  \label{lem:circ.ball.distort}
  Suppose $\phi: \rem \supset U \to V \subset \rem$ is a homeomorphism
  such that, for some positive $\metlipconst$,
  \begin{equation*}
    \abs{ \distEm{x}{y} - \distEm{\phi(x)}{\phi(y)} } \leq 
    \metlipconst \distEm{x}{y} \qquad \text{ for all } \quad x,y \in U.
  \end{equation*}
  Suppose also that $\splxs \subset U$ is a $k$-simplex, and that
  $\ballEm{c}{r}$ is a circumscribing ball for $\splxs$ with $c \in
  U$. Let $\tsplxs = \phi(\splxs)$. If
  \begin{equation*}
    \metlipconst \leq \left( \frac{\thickness{\splxs}}{4} \right)^2,
  \end{equation*}
  then there is a circumscribing ball $\ballEm{\tilde{c}}{\tilde{r}}$
  for $\tsplxs$ such that
  \begin{equation*}
    \distEm{\phi(c)}{\tilde{c}} \leq \frac{3 \sqrt{k}r^2
      \metlipconst}{\thickness{\splxs} \longedge{\splxs}}, 
  \end{equation*}
  and
  \begin{equation*}
    \abs{\tilde{r} - r } \leq \frac{5 \sqrt{k}r^2
      \metlipconst}{\thickness{\splxs} \longedge{\splxs}}.
  \end{equation*}
\end{lem}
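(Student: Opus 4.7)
The plan is to carry out the distortion argument directly on $\phi(c)$ rather than via an intermediate isometric simplex. By the hypothesis on $\phi$, the distances from $\phi(c)$ to the vertices $\tilde{p}_i = \phi(p_i)$ of $\tsplxs$ satisfy
\begin{equation*}
|\distEm{\phi(c)}{\tilde{p}_i} - r| \leq \metlipconst r \quad \text{for every } i,
\end{equation*}
so they lie in $[(1-\metlipconst)r,(1+\metlipconst)r]$ and pairwise differ by at most $2\metlipconst r$. This says that $\phi(c)$ is itself an approximate circumcentre of $\tsplxs$, with defect $2\metlipconst r$. The work is then to convert this defect into a genuine circumscribing ball near $\phi(c)$ via Lemma 4.3 of \cite{boissonnat2013stab1}, the same instrument used in the proofs of \Lemref{lem:splx.good.isometry} and \Lemref{lem:circ.rad.distort}.

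First I need $\tsplxs$ to be non-degenerate so that $\normhull{\tsplxs}$ and $\circcentre{\tsplxs}$ are defined. The hypothesis $\metlipconst \leq (\thickness{\splxs}/4)^2$ is exactly the criterion of \Lemref{lem:intrinsic.thick.distortion} with $\eta = 1/2$, yielding $\thickness{\tsplxs} \geq \tfrac{3}{5\sqrt{k}}\thickness{\splxs}$; combined with $\longedge{\tsplxs} \geq (1-\metlipconst)\longedge{\splxs}$ this gives a lower bound on the product $\thickness{\tsplxs}\longedge{\tsplxs}$ that loses only a factor of order $\sqrt{k}$ relative to $\thickness{\splxs}\longedge{\splxs}$.

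I then select $\tilde{c}$ to be the orthogonal projection of $\phi(c)$ onto $\normhull{\tsplxs}$ and set $\tilde{r} = \distEm{\tilde{c}}{\tilde{p}_0}$; since $\tilde{c} \in \normhull{\tsplxs}$ this value is independent of the chosen vertex, so $\ballEm{\tilde{c}}{\tilde{r}}$ is by construction a circumscribing ball for $\tsplxs$. Lemma 4.3 of \cite{boissonnat2013stab1} converts the variation $2\metlipconst r$ in the distances $\distEm{\phi(c)}{\tilde{p}_i}$, together with the upper bound $(1+\metlipconst)r$ on them, into an estimate of the form
\begin{equation*}
\distEm{\phi(c)}{\tilde{c}} \leq \frac{(1+\metlipconst)r \cdot 2\metlipconst r}{\thickness{\tsplxs}\longedge{\tsplxs}},
\end{equation*}
and substituting the lower bound for $\thickness{\tsplxs}\longedge{\tsplxs}$ from the previous step collapses this to the claimed $\frac{3\sqrt{k}r^2\metlipconst}{\thickness{\splxs}\longedge{\splxs}}$.

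For the radius bound, the triangle inequality gives
\begin{equation*}
|\tilde{r} - r| \leq |\distEm{\tilde{c}}{\tilde{p}_0} - \distEm{\phi(c)}{\tilde{p}_0}| + |\distEm{\phi(c)}{\tilde{p}_0} - r| \leq \distEm{\phi(c)}{\tilde{c}} + \metlipconst r.
\end{equation*}
The residual $\metlipconst r$ is absorbed into the same functional form by the crude inequalities $\thickness{\splxs} \leq 1$ and $\longedge{\splxs} \leq 2r$, which together yield $\metlipconst r \leq \frac{2\sqrt{k}\metlipconst r^2}{\thickness{\splxs}\longedge{\splxs}}$, inflating the coefficient from $3\sqrt{k}$ to $5\sqrt{k}$. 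The main obstacle is not the geometry but the bookkeeping of constants: although the argument reduces entirely to standard applications of \Lemref{lem:intrinsic.thick.distortion} and Lemma 4.3 of \cite{boissonnat2013stab1}, one must verify that the $\sqrt{k}$ losses coming from the thickness degradation of $\tsplxs$ and from the packing of $\metlipconst r$ into the $r^2$-form are both absorbed within the claimed constants, which relies on the tight form of the hypothesis $\metlipconst \leq (\thickness{\splxs}/4)^2$.
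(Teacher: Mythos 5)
Your approach is essentially the paper's: observe that $\phi(c)$ is an approximate circumcentre with defect $2\metlipconst r$, invoke Lemma~4.3 of \cite{boissonnat2013stab1} to produce a nearby $\tilde{c}\in\normhull{\tsplxs}$, feed in the thickness distortion bound (\Lemref{lem:intrinsic.thick.distortion} at $\eta=\tfrac12$), and handle the radius via the triangle inequality plus the absorption trick $\thickness{\splxs}\leq 1$, $\longedge{\splxs}\leq 2r$.

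The one slip is in the intermediate bound: you go through the third conclusion of \Lemref{lem:intrinsic.thick.distortion}, namely $\thickness{\tsplxs}\geq \frac{3}{5\sqrt{k}}\thickness{\splxs}$, and multiply by $\longedge{\tsplxs}\geq(1-\metlipconst)\longedge{\splxs}$. That yields $\thickness{\tsplxs}\longedge{\tsplxs}\geq\frac{3(1-\metlipconst)}{5\sqrt{k}}\thickness{\splxs}\longedge{\splxs}$, and after substitution the coefficient is $\frac{10(1+\metlipconst)}{3(1-\metlipconst)}>3$, so the stated constant does not quite close. The paper instead uses the \emph{second} conclusion of \Lemref{lem:intrinsic.thick.distortion} directly, $\thickness{\tsplxs}\longedge{\tsplxs}\geq\frac{3}{4\sqrt{k}}\thickness{\splxs}\longedge{\splxs}$, which gives coefficient $\frac{8(1+\metlipconst)}{3}\leq 3$ once $\metlipconst\leq\frac1{16}$ (forced by the hypothesis since $\thickness{\splxs}\leq1$). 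Switch to the product form of the thickness-distortion estimate and the rest of your argument, including the radius bound, goes through verbatim.
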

\begin{proof}
  By the perturbation bounds on $\phi$, the distances between
  $\phi(c)$ and the vertices of $\tsplxs$ differ by no more than
  $2\metlipconst r$, and these distances are all bounded by $(1 +
  \metlipconst)r$. In this context \cite[Lemma 4.3]{boissonnat2013stab1}
  says that there exists a $\tilde{c} \in \normhull{\splxs}$ such that
  \begin{equation*}
    \distEm{\phi(c)}{\tilde{c}} \leq \frac{(1 + \metlipconst)r
      2\metlipconst r}{\thickness{\tsplxs}\longedge{\tsplxs}}. 
  \end{equation*}
  We apply \Lemref{lem:intrinsic.thick.distortion}, using $\eta =
  \frac{1}{2}$, to obtain
 $\thickness{\tsplxs}\longedge{\tsplxs} \geq
 \frac{3}{4\sqrt{k}}\thickness{\splxs} \longedge{\splxs}$. We find
  \begin{equation*}
    \distEm{\phi(c)}{\tilde{c}} \leq \frac{8 \sqrt{k} (1 + \metlipconst)r^2
      \metlipconst }{ 3\thickness{\splxs}\longedge{\splxs}}. 
  \end{equation*}
  The announced bound on $\distEm{\phi(c)}{\tilde{c}}$ is obtained by
  observing that $\metlipconst \leq \frac{1}{16}$. 

  Choosing a vertex $\tilde{p} = \phi(p) \in \tsplxs$, the bound on
  the difference in the radii follows:
  \begin{equation*}
    \begin{split}
      \tilde{r} = \distEm{\tilde{p}}{\tilde{c}}
      &\geq \distEm{\tilde{p}}{\phi(c)} - \distEm{\phi(c)}{\tilde{c}}\\
      &\geq r - \metlipconst r - \frac{3 \sqrt{k}r^2
        \metlipconst}{\thickness{\splxs} \longedge{\splxs}}\\ 
      &\geq r  - \frac{5 \sqrt{k}r^2
        \metlipconst}{\thickness{\splxs} \longedge{\splxs}},
    \end{split}
  \end{equation*}
  and similarly for the upper bound.
\end{proof}

We will find it convenient to have a bound on the circumradius of a
simplex, relative to its thickness and longest edge length:
\begin{lem}
  \label{lem:crude.rad.bnd}
  If $\splxs$ is a non-degenerate simplex in $\rem$, then
  \begin{equation*}
    \circrad{\splxs} \leq \frac{\longedge{\splxs}}{2 \thickness{\splxs}}.
  \end{equation*}
\end{lem}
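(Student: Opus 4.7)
My plan is to reduce the bound to a direct calculation using the pseudoinverse characterisation of the circumcentre, leveraging \Lemref{lem:bound.skP}. The trivial case where $\splxs$ is a $0$-simplex is immediate since then $\circrad{\splxs} = 0$ and $\thickness{\splxs} = 1$, so assume $k = \dim \splxs \geq 1$. Place the origin at $p_0$ and let $P$ be the $m \times k$ matrix whose $i^{\text{th}}$ column is $v_i = p_i - p_0$, as in \Lemref{lem:bound.skP}.

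The circumcentre $c = \circcentre{\splxs}$ lies in $\affhull{\splxs}$, so $c = P\alpha$ for some $\alpha \in \reel^k$. The defining equations $\norm{c - v_i}^2 = \norm{c}^2$ for $i = 1, \ldots, k$ expand to $\transp{P} c = \tfrac{1}{2} b$, where $b \in \reel^k$ is the vector with entries $b_i = \norm{v_i}^2$. Since $P$ has full column rank, this yields the closed form
\begin{equation*}
  c = \tfrac{1}{2}\, P(\transp{P}P)^{-1} b = \tfrac{1}{2}\, \pseudoinv{(\transp{P})}\, b,
\end{equation*}
and $\circrad{\splxs} = \norm{c}$.

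The key step is to recognise, via \Lemref{lem:bound.skP}, that the columns of $\pseudoinv{(\transp{P})} = (\pseudoinv{P})^T$ are exactly the vectors $w_i$ whose norms equal the reciprocal altitudes $\splxalt{p_i}{\splxs}^{-1}$. Expanding $c$ as a linear combination of these columns with coefficients $\tfrac{1}{2} b_i$, the triangle inequality gives
\begin{equation*}
  \circrad{\splxs} = \norm{c} \leq \tfrac{1}{2} \sum_{i=1}^{k} \frac{\norm{v_i}^2}{\splxalt{p_i}{\splxs}}.
\end{equation*}

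To conclude, I bound each summand uniformly using $\norm{v_i} \leq \longedge{\splxs}$ and the thickness definition, which forces $\splxalt{p_i}{\splxs} \geq k\, \thickness{\splxs}\, \longedge{\splxs}$ for every vertex $p_i$. The $k$ terms each contribute at most $\longedge{\splxs}/(k\, \thickness{\splxs})$, and summing and halving delivers the claimed inequality $\circrad{\splxs} \leq \longedge{\splxs}/(2\,\thickness{\splxs})$. There is no real obstacle here beyond correctly identifying the columns of $\pseudoinv{(\transp{P})}$ with the dual basis from \Lemref{lem:bound.skP}; the bound is otherwise a one-line triangle-inequality estimate, and it is pleasingly tight (it recovers the familiar $R = L/(2\thickness)$ behaviour for well-shaped simplices rather than producing a spurious factor of $\thickness^{-2}$ that a crude singular-value estimate would yield).
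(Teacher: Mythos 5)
Your proof is correct and takes essentially the same route as the paper's. Both set up the same linear system $\transp{P}c = \tfrac{1}{2}b$ and rely on \Lemref{lem:bound.skP}: the paper bounds $\norm{c} = \norm{\invtransp{P}b}$ directly by $\sing{k}{P}^{-1}\norm{b}$ (working in $\affhull{\splxs}$ so that $P$ is square), whereas you expand $c$ in the dual basis $\{w_i\}$ and apply the triangle inequality term by term. The two computations are equivalent and deliver the identical constant $1/(2\thickness{\splxs})$. One small correction to your closing remark: the paper's singular-value estimate is \emph{not} the crude one you allude to; it bounds $\norm{\invtransp{P}}$ exactly by $\sing{k}{P}^{-1}$, and the two factors of $\sqrt{k}$ (from $\norm{b}$ and from $\sing{k}{P}$) cancel, so it achieves the same tight $\thickness^{-1}$ dependence without any spurious $\thickness^{-2}$. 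The $\thickness^{-2}$ loss would only appear if one separately bounded $\norm{P}$ and $\norm{(\transp{P}P)^{-1}}$, which neither you nor the paper do.
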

\begin{proof}
  Let $\splxs = \asimplex{p_0, \ldots, p_k}$, We work in $\reel^k =
  \affhull{\splxs} \subset \rem$, and let $P$ be the $k \times k$
  matrix whose $i^{\text{th}}$ column is $p_i - p_0$. Then, by
  equating $\norm{\circcentre{\splxs} - p_0}^2$ with
  $\norm{\circcentre{\splxs} - p_i}^2$ and expanding, we find a system
  of equations that may be written in matrix form as
  \begin{equation*}
    \transp{P}\circcentre{\splxs} = b,
  \end{equation*}
  where the $i^{\text{th}}$ component of the vector $b$ is
  $\frac{1}{2}(\norm{p_i}^2 - \norm{p_0}^2)$. Choosing $p_0$ as the
  origin, we have $\norm{\circcentre{\splxs}} = \circrad{\splxs}$, and
  $\norm{b} \leq \frac{1}{2}\sqrt{k}\longedge{\splxs}^2$. Since
  $\sing{1}{\invtransp{P}} = \sing{k}{P}^{-1}$, the result follows
  from \Lemref{lem:bound.skP}, which says $\sing{k}{P}
  \geq \sqrt{k} \thickness{\splxs} \longedge{\splxs}$.
\end{proof}

Using the bound on
$\distEm{\phi(\circcentre{\splxs})}{\normhull{\tsplxs}}$ given by
\Lemref{lem:circ.ball.distort}, together with the circumradius bound
of \Lemref{lem:circ.rad.distort}, we obtain a bound on
$\distEm{\phi(\circcentre{\splxs})}{\circcentre{\tsplxs}}$ by means of
the Pythaogrean theorem:
\begin{lem}[Circumcentres under distortion]
  \label{lem:bound.cc}
  Suppose $\phi: \rem \supset U \to V \subset \rem$ is a homeomorphism
  such that
  \begin{equation*}
    \abs{ \distEm{x}{y} - \distEm{\phi(x)}{\phi(y)} } \leq 
    \metlipconst \distEm{x}{y} \qquad \text{ for all } \quad x,y \in
    U. 
  \end{equation*}
  Suppose also that $\splxs \subset U$ is a $k$-simplex, and let
  $\tsplxs = \phi(\splxs)$. If
  \begin{equation*}
    \metlipconst \leq \left( \frac{\thickness{\splxs}}{4}
    \right)^2, 
 \end{equation*}
  then
  \begin{equation*}
    \distEm{\phi(\circcentre{\splxs})}{\circcentre{\tsplxs}}
    \leq \left[ \left( \frac{42 k^2}{\thickness{\splxs}^3}
      \right)  \metlipconst
    \right]^{\frac{1}{2}} \circrad{\splxs}.
  \end{equation*}
\end{lem}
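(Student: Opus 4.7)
The plan is to combine Lemma~\ref{lem:circ.ball.distort}, which asserts the existence of a point $\tilde{c}$ on the normal affine space $\normhull{\tsplxs}$ lying close to $\phi(\circcentre{\splxs})$, with a Pythagorean decomposition that exploits the fact that $\normhull{\tsplxs}$ is orthogonal to $\affhull{\tsplxs}$. Write $c = \circcentre{\splxs}$ and $r = \circrad{\splxs}$, and let $\tilde{c}$ and $\tilde{r}$ be as produced by Lemma~\ref{lem:circ.ball.distort}. Since $\tilde{c}\in\normhull{\tsplxs}$, for every vertex $\tilde{p}\in\tsplxs$ the vector $\tilde{p}-\circcentre{\tsplxs}$ lies in the direction of $\affhull{\tsplxs}$ with magnitude $\circrad{\tsplxs}$, while $\tilde{c}-\circcentre{\tsplxs}$ is orthogonal to $\affhull{\tsplxs}$. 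The Pythagorean theorem therefore yields
\begin{equation*}
  \tilde{r}^2 \;=\; \distEm{\tilde{c}}{\tilde{p}}^2 \;=\; \distEm{\tilde{c}}{\circcentre{\tsplxs}}^2 + \circrad{\tsplxs}^2,
\end{equation*}
so that $\distEm{\tilde{c}}{\circcentre{\tsplxs}}^2 = (\tilde{r}-\circrad{\tsplxs})(\tilde{r}+\circrad{\tsplxs})$. This is the identity that converts the linear-in-$\metlipconst$ radius estimates into the $\sqrt{\metlipconst}$ centre estimate appearing in the conclusion.

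Next I bound the two factors. For the difference, the triangle inequality combined with the radius bound of Lemma~\ref{lem:circ.ball.distort} ($|\tilde{r}-r|$ controlled in terms of $\sqrt{k}\,\metlipconst\,r^2/(\thickness{\splxs}\longedge{\splxs})$) and Lemma~\ref{lem:circ.rad.distort} ($|\circrad{\tsplxs}-r|$ controlled in terms of $k^{3/2}\,\metlipconst\,r/\thickness{\splxs}^3$) produces
\begin{equation*}
  |\tilde{r}-\circrad{\tsplxs}| \;\leq\; \text{const}\cdot\frac{k^{3/2}\,\metlipconst\,r}{\thickness{\splxs}^3},
\end{equation*}
where I use Lemma~\ref{lem:crude.rad.bnd} to replace $r/\longedge{\splxs}$ by $1/(2\thickness{\splxs})$ and to absorb the smaller $\sqrt{k}/\thickness{\splxs}^2$ term into the larger one using $\thickness{\splxs}\leq 1$. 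The sum $\tilde{r}+\circrad{\tsplxs}$ is controlled by $2r$ up to $O(\metlipconst)$ corrections, which under the hypothesis $\metlipconst\leq(\thickness{\splxs}/4)^2$ are dominated. Multiplying the two factors gives $\distEm{\tilde{c}}{\circcentre{\tsplxs}}^2 \lesssim k^{3/2}\metlipconst r^2/\thickness{\splxs}^3$, hence $\distEm{\tilde{c}}{\circcentre{\tsplxs}} \lesssim \sqrt{\metlipconst/\thickness{\splxs}^3}\,r$.

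To conclude, I assemble the target using the triangle inequality
\begin{equation*}
  \distEm{\phi(c)}{\circcentre{\tsplxs}} \;\leq\; \distEm{\phi(c)}{\tilde{c}} + \distEm{\tilde{c}}{\circcentre{\tsplxs}}.
\end{equation*}
The first summand is $O(\metlipconst)$ by Lemma~\ref{lem:circ.ball.distort}, while the second is $O(\sqrt{\metlipconst})$ by the analysis above; under the standing hypothesis on $\metlipconst$, the $\sqrt{\metlipconst}$ term dominates and absorbs the linear one into the final constant. Squaring, taking the square root, and tracking the constants and powers of $k$ and $\thickness{\splxs}$ produces the claimed bound $\distEm{\phi(\circcentre{\splxs})}{\circcentre{\tsplxs}}\leq [(42 k^2/\thickness{\splxs}^3)\metlipconst]^{1/2}\circrad{\splxs}$.

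The conceptual argument is short; the main obstacle is the bookkeeping: aligning the hypothesis $\metlipconst\leq(\thickness{\splxs}/4)^2$ with the requirements of the earlier lemmas, and carefully combining the $\sqrt{k}$ and $k^{3/2}$ contributions (together with the factor lost when passing from $\tilde{r}+\circrad{\tsplxs}\leq 2r(1+\text{small})$ to a clean bound) to land on the stated exponent of $k$ and the explicit constant $42$. This is a purely mechanical accounting step, and using generous but harmless upper bounds throughout (e.g. $\thickness{\splxs}\leq 1$, $\metlipconst\leq 1/16$) makes the constants come out cleanly.
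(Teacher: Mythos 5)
Your overall strategy is a genuinely different decomposition from the paper's. The paper picks the point $c^*\in\normhull{\tsplxs}$ closest to $\phi(\circcentre{\splxs})$ and decomposes $\distEm{\phi(\circcentre{\splxs})}{\circcentre{\tsplxs}}^2 = z^2 + w^2$ orthogonally, then bounds $z^2$ via the projection of $\phi(\circcentre{\splxs})$ into $\affhull{\tsplxs}$. You instead pass through the point $\tilde{c}$ produced by Lemma~\ref{lem:circ.ball.distort}, invoke the (correct) Pythagorean identity $\distEm{\tilde c}{\circcentre{\tsplxs}}^2 = \tilde{r}^2-\circrad{\tsplxs}^2$, and close with a triangle inequality. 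Both routes use the same auxiliary lemmas, and yours is conceptually workable, but there is a genuine gap in the middle step.

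The gap is the claim that the sum $\tilde{r}+\circrad{\tsplxs}$ is ``controlled by $2r$ up to $O(\metlipconst)$ corrections, which under the hypothesis are dominated,'' so that the product of the two separately bounded factors is $\lesssim k^{3/2}\metlipconst r^2/\thickness{\splxs}^3$. Under the standing hypothesis $\metlipconst \leq (\thickness{\splxs}/4)^2$, the corrections are \emph{not} small: writing $s = 16k^{3/2}\metlipconst/\thickness{\splxs}^3$ for the relative change in circumradius from Lemma~\ref{lem:circ.rad.distort}, the sum is only bounded by $(2+s)r$ up to lower-order terms, and $s$ can be as large as $k^{3/2}/\thickness{\splxs}$, which is typically $\gg 1$. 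Multiplying the two factors therefore produces a term of size $\approx s^2 r^2 \sim k^3\metlipconst^2 r^2/\thickness{\splxs}^6$, and after applying the hypothesis this is $\sim k^3\metlipconst r^2/\thickness{\splxs}^4$ --- larger than the target $42k^2\metlipconst r^2/\thickness{\splxs}^3$ by a factor of order $k/\thickness{\splxs}$. The factoring $\tilde r^2 - \circrad{\tsplxs}^2 = (\tilde r - \circrad{\tsplxs})(\tilde r + \circrad{\tsplxs})$ followed by independent bounds on the two factors is lossy precisely here.

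The fix is to estimate $\tilde{r}^2-\circrad{\tsplxs}^2$ without decoupling the factors: bound $\tilde{r} \le (1+\metlipconst)r + w'$ (with $w'$ from Lemma~\ref{lem:circ.ball.distort}), bound $\circrad{\tsplxs}^2 \geq (1-s)^2 r^2$, and expand. The crucial point is that $(1-s)^2 r^2 = r^2 - 2sr^2 + s^2r^2$, so the dangerous $s^2r^2$ contribution appears with a \emph{minus} sign and can simply be dropped. This is exactly the step the paper performs in its chain $R^2 - \tR^2 \leq R^2 - (1-s)^2 R^2$, where the $-s^2 R^2$ is discarded. Once you do this, the leading terms are $2sr^2 + 2rw' + (2\metlipconst+\metlipconst^2)r^2$ plus the small cross term $2srw'$, and the constant $42$ is within reach, though your final triangle inequality $\distEm{\phi(\circcentre{\splxs})}{\circcentre{\tsplxs}} \le w' + \distEm{\tilde c}{\circcentre{\tsplxs}}$ costs you an extra $2zw'$ relative to the paper's exact orthogonal decomposition. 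Since $z = \Theta(\sqrt{\metlipconst})$ dominates $w' = \Theta(\metlipconst)$, that premium is small, but whether it still fits under the explicit constant $42$ requires checking; as written, the bookkeeping has not been done and the decisive claim about the sum is false.
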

\begin{figure}[ht]
  \begin{center}
    \includegraphics[width=.7\columnwidth]{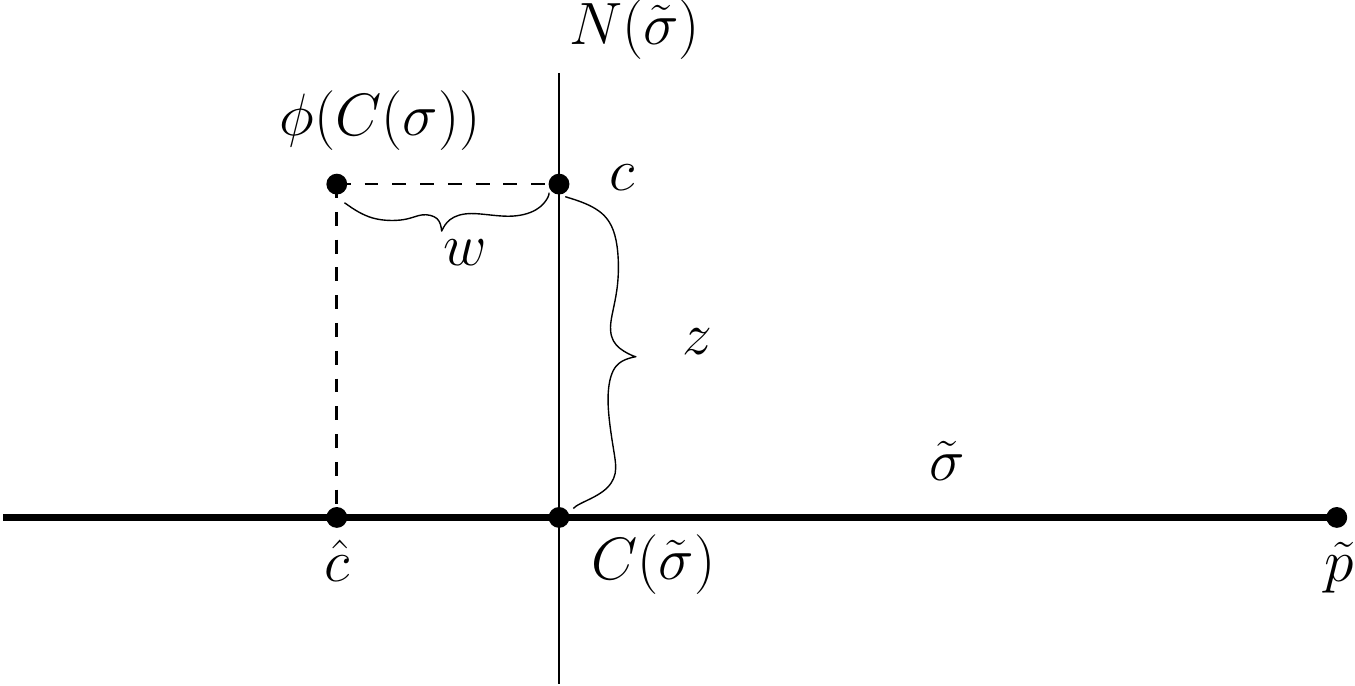} 
  \end{center}
  \caption{Diagram for the proof of \Lemref{lem:bound.cc}.}
  \label{fig:bound.cc}
\end{figure}
\newcommand{\tR}{\tilde{R}}
\begin{proof}
  Let $c$ be the closest point in $\normhull{\tsplxs}$ to
  $\phi(\circcentre{\splxs})$, and let $w$ be the distance from $c$ to
  $\phi(\circcentre{\splxs})$. Setting $z$ as the distance between $c$
  and $\circcentre{\tsplxs}$, we have that
  $\distEm{\phi(\circcentre{\splxs})}{\circcentre{\tsplxs}}^2 = z^2 +
  w^2$; see \Figref{fig:bound.cc}. Let $\hat{c}$ be the orthogonal
  projection of $\phi(\circcentre{\splxs})$ into
  $\affhull{\tsplxs}$. Then, letting $R = \circrad{\splxs}$, and
  $\tR = \circrad{\tsplxs}$, and choosing $\tilde{p} = \phi(p) \in
  \tsplxs$, we have
  \begin{equation*}
    \begin{split}
      z^2 &= \distEm{\phi(\circcentre{\splxs})}{\tilde{p}}^2 -
      \distEm{\tilde{p}}{\hat{c}}^2\\
      &\leq (1 + \metlipconst)^2R^2 - (\tR - w)^2\\
      &= R^2 - \tR^2 + 2\tR w + 2R^2\metlipconst + \metlipconst^2 R^2 - w^2.
    \end{split}
  \end{equation*}
  Using \Lemref{lem:circ.rad.distort}, we write $\tR$ in terms of $R$,
  as $\abs{R - \tR} \leq sR$, where
  \begin{equation*}
    s = \frac{16 k^{\frac{3}{2}}
      \metlipconst}{\thickness{\splxs}^3}.
  \end{equation*}
  Then using \Lemref{lem:circ.ball.distort} to bound $w$, and writing
  $\Delta$, and $\Upsilon$, instead of $\longedge{\splxs}$ and
  $\thickness{\splxs}$, we find
  \begin{equation*}
    \begin{split}
      \distEm{\phi(\circcentre{\splxs})}{\circcentre{\tsplxs}}^2
      &\leq R^2 - (1-s)^2 R^2 + 2w(1+s)R  + 2\metlipconst R^2
      + \metlipconst^2 R^2\\
      &\leq  2(sR + w +ws)R + (2 + \metlipconst)\metlipconst R^2\\ 
      &\leq \left[ 2\left( \frac{16 k^{\frac{3}{2}}}{\Upsilon^3}
          + \frac{3 \sqrt{k}R}{\Upsilon \Delta} 
          + \frac{54 k^2 R \metlipconst}{\Upsilon^4 \Delta} \right)
        + \left(2 + \metlipconst \right) \right] R^2 \metlipconst\\
      &\leq \left[ 2\left( \frac{16 k^{\frac{3}{2}}}{\Upsilon^3}
          + \frac{3 \sqrt{k}}{2\Upsilon^2} 
          + \frac{27 k^2  }{16 \Upsilon^3} \right)
        + 3 \right] R^2 \metlipconst
      \qquad \text{using \Lemref{lem:crude.rad.bnd}}\\
      &\leq \left[ \frac{42 k^2  }{ \Upsilon^3} \right] R^2 \metlipconst.
   \end{split}
  \end{equation*}
\end{proof}

%

\subsection{The size of the domains}
\label{sec:domain.size}

The domains $U_{ij}$ on which the transition functions are defined need to be large enough to accommodate two distinct requirements.  First, the domain of the transition function $\transij$ must contain a large enough neighbourhood of $p'_i$ that we can apply the metric stability result of \cite{boissonnat2013stab1} to ensure that $\str{p'_i; \delof{\ppts_i}}$ will be the same as $\str{p'_i; \delof{\ppts_j} }$ whenever $p'_j \in \str{p'_i; \delof{\ppts_i}}$. The second requirement is that any potential forbidden configuration in the region of interest must lie entirely within the domain of the transition function associated with each of its vertices.

We recall the stability result \cite[Theorem
4.17]{boissonnat2013stab1} that we will use:
\begin{lem}[Delaunay stability under metric perturbation]
  \label{lem:del.metric.stability}
  Suppose $\lptsi$ is a $(\psparseconst, \psamconst_i)$-net and
  $\convhull{\lptsi} \subseteq U \subset \rem$ and $\gdistj:
  U \times U \to \reel$ is such that $\abs{\disti{x}{y} -
    \distj{x}{y}} \leq \metpert$ for all $x,y \in U$. Suppose also
  that $\qpts \subseteq \lptsi$ is a set of interior points such that
  every $m$-simplex $\splxs \in \str{\qpts}$ is $\flakebnd^m$-thick
  and $\delta$-protected and satisfies $\disti{p}{\bdry{U}} \geq
  2\psamconst_i$ for every vertex $p \in \splxs$. If
  \begin{equation*}
    \metpert \leq \frac{\flakebnd^m \psparseconst}{36}\delta,
  \end{equation*}
  then
  \begin{equation*}
    \str{\qpts;\del{\gdistj}{\lptsi}} = \str{\qpts;\del{\gdisti}{\lptsi}}.
  \end{equation*}
\end{lem}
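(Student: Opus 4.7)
The plan is to verify each inclusion of stars separately, using the Euclidean circumcentre of a protected thick simplex as a robust anchor under the additive metric perturbation. For the forward inclusion, fix an $m$-simplex $\splxs \in \str{\qpts;\del{\gdisti}{\lptsi}}$. The assumption that every vertex of $\splxs$ lies at distance at least $2\psamconst_i$ from $\bdry{U}$, together with the bound $\circrad{\splxs} < \psamconst_i$ inherited from the net property of $\lptsi$, guarantees that the entire $\delta$-protected $\gdisti$-circumball sits inside $U$, so both $\gdisti$ and $\gdistj$ are defined on all points we need to compare. Comparing distances at the Euclidean circumcentre $c = \circcentre{\splxs}$, one has $\distj{c}{p} \leq \circrad{\splxs} + \metpert$ for vertices $p$ of $\splxs$ and $\distj{c}{q} > \circrad{\splxs} + \delta - \metpert$ for non-vertices $q \in \lptsi$, so $c$ already separates the two groups in $\gdistj$ whenever $\delta > 2\metpert$.

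To upgrade this separation to a genuine witness for $\splxs \in \del{\gdistj}{\lptsi}$, that is, a point in the common intersection of the $\gdistj$-Voronoi cells of all vertices of $\splxs$, I would adapt the linear-algebraic argument behind \Lemref{lem:bound.cc}: the thickness bound $\thickness{\splxs} \geq \flakebnd^m$ controls the smallest singular value of the vertex matrix, so after converting the absolute perturbation $\metpert$ into a relative one by dividing by the length scale $\circrad{\splxs} \geq \psparseconst \psamconst_i / 2$, the equidistance condition in $\gdistj$ can be solved perturbatively, yielding a witness $\tilde{c}$ displaced from $c$ by an amount controlled by $\metpert/(\flakebnd^m \psparseconst)$. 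The $\delta$-protection margin, magnified by the hypothesised factor $\flakebnd^m \psparseconst / 36$ on $\metpert$, absorbs all error terms and keeps every $q \in \lptsi \setminus \splxs$ strictly farther from $\tilde{c}$ than any vertex of $\splxs$ in the $\gdistj$ metric.

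For the reverse inclusion I would argue that the $\delta$-protection prevents any combinatorial change of the star: a new $m$-simplex $\splxt$ appearing in $\str{\qpts; \del{\gdistj}{\lptsi}}$ but not in $\str{\qpts; \del{\gdisti}{\lptsi}}$ would require some point of $\lptsi$ to cross the circumsphere of an existing $\gdisti$-Delaunay $m$-simplex during a continuous interpolation between the two metrics, which the protection margin forbids once $\metpert$ is below the hypothesised threshold. Lower-dimensional simplices are handled automatically by face-closure of the Delaunay complex. The main obstacle is the conversion of the absolute perturbation bound $\metpert$ into a relative form suitable for the thickness-based circumcentre stability argument: dividing by the minimum edge length $\psparseconst \psamconst_i$ of $\splxs$ is what produces the factor $\psparseconst$ in the hypothesis on $\metpert$, and tracking this length scale consistently through the calculation, together with verifying that the protection and thickness hypotheses needed for the reverse direction are transferred correctly through the metric perturbation, is the technical heart of the proof.
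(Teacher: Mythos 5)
The paper does not prove this lemma: it is imported verbatim (modulo notational translation) from Theorem~4.17 of \cite{boissonnat2013stab1}, and is simply ``recalled'' before being used. There is therefore no in-paper proof to compare against, and your sketch has to be judged on its own.

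Your forward inclusion is essentially fine, but contains an unnecessary detour. Once you have shown $\distj{c}{p} < \distj{c}{q}$ for all vertices $p$ of the protected $m$-simplex $\splxs$ and all $q \in \lptsi \setminus \splxs$ (which requires only $2\metpert < \delta$, comfortably within the hypothesis), the Euclidean circumcentre $c$ already lies in the interior of each vertex's $\gdistj$-Voronoi cell relative to $\lptsi$, hence already witnesses $\splxs \in \del{\gdistj}{\lptsi}$. There is no need, and in a general metric $\gdistj$ no way, to ``solve the equidistance condition perturbatively'' to produce an exact circumcentre; the Delaunay criterion is a Voronoi-intersection condition, not an equidistance condition, and the nonemptiness of the intersection is all that is required.

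The genuine gap is in the reverse inclusion. Your claim that a new $m$-simplex $\splxt$ appearing in $\str{\qpts;\del{\gdistj}{\lptsi}}$ ``would require some point of $\lptsi$ to cross the circumsphere of an existing $\gdisti$-Delaunay $m$-simplex'' is not correct: the circumsphere involved at the critical moment in a continuous interpolation would be the circumsphere of the \emph{new} simplex $\splxt$ (or of a degenerate extension of it), about which the $\delta$-protection hypothesis says nothing, since protection is a one-sided margin attached only to the $\gdisti$-Delaunay $m$-simplices of the star. Protection prevents the old simplices from leaving; it does not, on its own, prevent new ones from arriving. Nor have you argued that the hypotheses (thickness, protection, density) are maintained uniformly along the interpolation path, which the homotopy argument would require. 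The reverse inclusion is the delicate part of the cited theorem, and what makes it work is a covering argument: the interiority and thickness hypotheses guarantee that the $\gdisti$-Delaunay $m$-simplices at $\qpts$ already triangulate a full-dimensional neighbourhood of $\qpts$; since (by the forward inclusion) all of these survive in $\del{\gdistj}{\lptsi}$, the local $\gdistj$-Voronoi diagram of $\lptsi$ around $\qpts$ is already exhausted by their witness regions, leaving no room for an additional $\gdistj$-Delaunay $m$-simplex incident to $\qpts$. That step is entirely absent from your proposal, and without it the lemma is not proved.
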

The notation $\del{\gdistj}{\lptsi}$
in \Lemref{lem:del.metric.stability} means that the metric $\gdistj$
is used to compute the Delaunay complex of $\lptsi$. For our
purposes, $\gdistj$ is the pullback by $\transij$ of the Euclidean
metric on $U_j$.  Thus we have the identification
\begin{equation*}
  \str{\qpts ; \del{\gdistj}{\lptsi}} \cong \str{\transij(\qpts) ;
    \delof{\transij(\lptsi)}}. 
\end{equation*}
We will use $\qpts = \{p'_i\}$.  Some argument is required to ensure
that \Lemref{lem:del.metric.stability} provides a route to the desired
equivalence
\begin{equation}
  \label{eq:desired.star.equiv}
  \str{p'_i; \delof{\ppts_i}} \cong \str{p'_i ; \delof{\ppts_j}},
  \quad \text{when } p'_j \in \str{p'_i; \delof{\ppts_i}} .
\end{equation}
We first establish our ``region of interest''.  We demand, for all $i
\in \apts$, that $\pts_i$ be a $(\sparseconst, \samconst_i)$-net for
$\balli{p_i}{8 \samconst_i}$, and we define $\lptsi = \ppts_i \cap
\balli{p_i}{6 \samconst_i}$.  Since $\ppts_i$ changes as the algorithm
progresses, points may come and go from $\lptsi$, but we will ensure
that when the algorithm terminates, $\lptsi$ will contain no forbidden
configurations.
\begin{lem}
  \label{lem:inclusion.star.equivalences}
  For all $i \in \apts$ we have
  \begin{equation*}
    \str{p'_i; \delof{\lptsi}} = \str{p'_i ; \delof{\ppts_i}}.
  \end{equation*}
  and if $p \in \str{p'_i; \delof{\ppts_i}}$, then
  $\disti{p}{\bdry{\balli{p_i}{6\samconst_i}}} > 2\psamconst_i$.

  If $\balli{p_i}{6 \samconst_i} \subseteq U_{ij}$ whenever $p'_j
  \in \str{p'_i; \delof{\ppts_i}}$, then
 \begin{equation*}
   \str{p'_i ; \delof{ \transij(\lptsi)}} = \str{p'_i ; \delof{\ppts_j}}.
 \end{equation*}
\end{lem}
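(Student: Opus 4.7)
The plan is to reduce both equalities to a single fact: every $m$-simplex in $\str{p'_i; \delof{\ppts_i}}$ (and in each of the related stars considered below) has its circumball contained well inside $\balli{p_i}{6\samconst_i}$. Once this localisation is in hand, the star equalities become routine empty-ball arguments, and the distance-to-boundary estimate for individual vertices drops out along the way.

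First I would show that for any $m$-simplex $\splxs \in \str{p'_i; \delof{\ppts_i}}$, the circumradius satisfies $r \leq \psamconst_i$. Since $\ppts_i$ is a $\tpertbnd\samconst_i$-perturbation of $\pts_i$, which is a $(\sparseconst,\samconst_i)$-net for $\balli{p_i}{8\samconst_i}$, the set $\ppts_i$ is $\psamconst_i$-dense throughout a ball about $p_i$ only slightly smaller than $\balli{p_i}{8\samconst_i}$. If $r > \psamconst_i$, then the point $y$ on the diameter of the circumball through $p'_i$ at distance $\psamconst_i$ from $p'_i$ lies strictly inside the circumball and still inside the dense region; density would then supply a point of $\ppts_i$ strictly inside the ball, contradicting emptiness. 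Thus every vertex of $\splxs$ lies within $2\psamconst_i$ of $p'_i$, and within $2\psamconst_i + \pertbnd\samconst_i < 3\samconst_i$ of $p_i$, using $\tpertbnd \leq \sparseconst/4$ and hence $\psamconst_i \leq \tfrac{5}{4}\samconst_i$. In particular, each vertex lies in $\lptsi$, and its distance to $\bdry{\balli{p_i}{6\samconst_i}}$ exceeds $6\samconst_i - 3\samconst_i > 2\psamconst_i$, which is the second claim of the lemma.

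The equality $\str{p'_i; \delof{\lptsi}} = \str{p'_i; \delof{\ppts_i}}$ then follows in both directions. The inclusion from right to left is immediate, since the vertices all lie in $\lptsi$ and a circumball empty of $\ppts_i \supseteq \lptsi$ is empty of $\lptsi$. Conversely, the same circumradius argument applied to $\lptsi$ shows that any $m$-simplex in $\str{p'_i; \delof{\lptsi}}$ has circumball contained in $\balli{p_i}{3\samconst_i + \psamconst_i} \subset \balli{p_i}{6\samconst_i}$, hence disjoint from $\ppts_i \setminus \lptsi$ and therefore empty in $\ppts_i$.

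For the second equality, the hypothesis $\balli{p_i}{6\samconst_i} \subseteq U_{ij}$ allows the entire argument to be transported to $U_j$ through $\transij$. The metric distortion bound \eqref{eq:metric.distortion} shows that $\transij(\lptsi)$ is a net in $U_j$ with parameters perturbed by factors of $(1 \pm \metlipconst)$, and it fills a neighbourhood of $p'_i$ in $U_j$ of radius comparable to $3\samconst_j$. Running the circumradius argument in $U_j$ for both stars $\str{p'_i; \delof{\transij(\lptsi)}}$ and $\str{p'_i; \delof{\ppts_j}}$ again confines all vertices to a small $U_j$-neighbourhood of $p'_i$. Combining this localisation with the $\samconst_i$-density of $\pts_i$ in $\balli{p_i}{8\samconst_i}$ and the compatibility $\pchart{j} = \transij \circ \pcharti$ on $\apts_i \cap \apts_j$ shows that every such vertex of $\str{p'_i; \delof{\ppts_j}}$ is the image under $\transij$ of a point of $\lptsi$, and the empty-ball argument transfers verbatim. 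The hardest part will be precisely this last identification: verifying that every vertex near $p'_i$ in $\ppts_j$ actually carries an index in $\apts_i$, so that it really belongs to $\transij(\lptsi)$ and not merely to $\ppts_j \setminus \transij(\lptsi)$.
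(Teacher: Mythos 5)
Your proposal follows essentially the same route as the paper: bound the circumradius of the $m$-simplices in the star by a density argument, use this to localise the vertices inside $\balli{p_i}{6\samconst_i}$ (yielding the first equality and the distance-to-boundary bound at the same time), and then transport the argument to $U_j$ via $\transij$ for the second equality. The first two parts of your argument are sound and match the paper in both structure and content.

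The gap is exactly where you flag it: you do not actually carry out the identification that any $q \in \ppts_j$ that could spoil emptiness lies in $\transij(\lptsi)$. The paper's proof does supply this, and it is not hard once you see it: a point $q \in \ppts_j$ strictly inside the Delaunay ball of an $m$-simplex in $\str{p'_i; \delof{\ppts_j}}$ or $\str{p'_i; \delof{\transij(\lptsi)}}$ satisfies $\distj{p'_i}{q} < 2\psamconst_j$ (circumradius bound plus triangle inequality); transporting through $\trans{j}{i}$ with the distortion bound \eqref{eq:metric.distortion} and the sampling-radius Lipschitz bound gives $\disti{p'_i}{q} \leq 2(1+\metlipconst)(1+\lipsamconst)\tfrac{5}{4}\samconst_i \leq 5\samconst_i$, and adding $\disti{p_i}{p'_i} \leq \tfrac{1}{4}\samconst_i$ puts $q$ inside $\balli{p_i}{6\samconst_i}$, hence in $\lptsi$. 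The requirement $U_{ij} \cap \charti(\nbrsi) = \charti(\nbrsi \cap \nbrs{j})$ together with $\transij \circ \charti = \chart{j}$ then guarantees that $q$ does indeed have the label of a point of $\ppts_i$, so $q \in \transij(\lptsi)$. Since this is the crux of the second equality, your proposal would need this computation spelled out to be complete, but the approach you outline is the correct one.
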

\begin{proof}
  We have $\disti{p'_i}{\bdry{\balli{p_i}{6\samconst_i}}} \geq
  \frac{24}{5}\psamconst_i - \frac{1}{4}\psamconst_i > 4
  \psamconst_i$. The density assumption guarantees that if $\splxs^m
  \in \str{p'_i ; \delof{\lptsi}}$, then $\circrad{\splxs^m} <
  \psamconst_i$, and the observation that
  $\balli{\circcentre{\splxs^m}}{\circrad{\splxs^m}} \subset
  \balli{p_i}{4\samconst_i}$, leads to the first equality, and the
  bound on the distance from $p$ to
  $\bdry{\balli{p_i}{6\samconst_i}}$.

  The second equality follows from two observations. First we show
  that if $\splxs^m \in \str{p'_i; \delof{\transij(\lptsi)}}$, then
  $\circrad{\splxs} < \psamconst_j$. Since $\transij(\lptsi) \subset
  \ppts_j$, and $\ppts_j$ is  $\psamconst_j$-dense for
  $B = \ballj{p_j}{8\samconst_j}$, it is sufficient to show that
  $\distj{p'_i}{\bdry{B}} \geq 2\psamconst_j$. Since $p'_j \in
  \str{p'_i; \delof{\ppts_i}}$, we have $\distj{p'_i}{p'_j} \leq (1 +
  \metlipconst)\disti{p'_i}{p'_j} \leq (1 + \metlipconst)2
  \psamconst_i \leq 2(1+ \metlipconst)(1 +
  \lipsamconst)\frac{5}{4}\samconst_j \leq 5\samconst_j$. Thus since
  $\distj{p_j}{p'_j} \leq \frac{1}{4}\samconst_j$, we have
  $\distj{p'_j}{\bdry{B}} \geq 8 \samconst_j - \frac{21}{4}\samconst_j
  = \frac{11}{4}\samconst_j \geq \frac{11}{5}\psamconst_j$.
  This establishes that the Delaunay ball for $\splxs^m$ must remain
  empty when points outside of $B$ are considered.

  The second obervation required to establish the second equality is
  that if $q \in \ppts_j$ is such that $\distj{p'_i}{q} <
  \psamconst_j$, then $q \in
  \transij(\balli{p_i}{6\samconst_i})$. Indeed, we have
  $\disti{p'_i}{q} \leq 2(1 + \metlipconst)\psamconst_j \leq
  2(1+\metlipconst)(1 + \lipsamconst)\frac{5}{4}\samconst_i \leq
  5\samconst_i$. The result follows since $\disti{p_i}{p'_i} \leq
  \frac{1}{4}\samconst_i$. 
\end{proof}

If $p'_j \str{p'_i ; \delof{\ppts_i}}$, then $\disti{p_i}{p_j} <
\frac{1}{2}\samconst_i + 2\psamconst_i \leq 3\samconst_i$. Thus
\Lemref{lem:inclusion.star.equivalences} establishes the first
requirement on $U_{ij}$, namely
\begin{equation}
  \label{eq:domain.req.metric}
  \balli{p_i}{6\samconst_i} \subset U_{ij} \quad \text{if }
  \disti{p_i}{p_j} < 3 \samconst_i.
\end{equation}
The second requirement arises from the fact that we wish to ensure
that there are no forbidden configurations in $\lptsi$. This will be
sufficient for us to apply \Lemref{lem:del.metric.stability}.
\begin{lem}[Protected stars]
  \label{lem:protected.stars.appdx}
  \label{lem:protected.stars}
  If there are no forbidden configurations in $\lptsi$, then all the
  $m$-simplices in $\str{p'_i; \delof{\lptsi} }$ are $\flakebnd$-good
  and $\delta$-protected, with $\delta = \delta_0 \psparseconst
  \psamconst_i$.
\end{lem}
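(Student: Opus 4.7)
The plan is to argue the contrapositive: assume some $\splxs^m \in \str{p'_i; \delof{\lptsi}}$ fails to be $\flakebnd$-good or fails to be $\delta$-protected, and exhibit a forbidden configuration in $\lptsi$. One preliminary observation is needed first: if $B = \balli{C}{R}$ is the Delaunay ball of $\splxs^m$, then $R < \psamconst_i$. This is essentially the density argument already used in the proof of \Lemref{lem:inclusion.star.equivalences}: since $p'_i$ sits well inside the $(\psparseconst,\psamconst_i)$-net $\lptsi$, any empty ball circumscribing an $m$-simplex incident to $p'_i$ has radius bounded by the local sampling radius.

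First suppose $\splxs^m$ is not $\flakebnd$-good. I would choose a minimal non-$\flakebnd$-good face $\splxt \subseteq \splxs^m$. Minimality forces every facet of $\splxt$ to be $\flakebnd$-good, so $\splxt$ is a $\flakebnd$-flake. Every vertex of $\splxt$ lies on $\bdry B$, so for any $p \in \splxt$ the facet $\opface{p}{\splxt}$ is circumscribed by $B$, and $\abs{\disti{p}{C} - R} = 0 \leq \delta$. Hence $\splxt$ satisfies \Defref{def:forbidden.config}.

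Next suppose $\splxs^m$ is $\flakebnd$-good but not $\delta$-protected. Then I can pick $q \in \lptsi \setminus \splxs^m$ with $\abs{\disti{q}{C} - R} \leq \delta$ and consider $\splxjoin{q}{\splxs^m}$. This simplex has $m+2$ vertices in $\rem$ and is therefore degenerate, in particular not $\flakebnd$-good, so it has a minimal non-$\flakebnd$-good face $\splxt$, which is necessarily a flake. Since $\splxs^m$ itself is $\flakebnd$-good, every face of $\splxs^m$ is $\flakebnd$-good, so $\splxt$ cannot be a face of $\splxs^m$, forcing $q \in \splxt$. Then $\opface{q}{\splxt} \subseteq \splxs^m$ is circumscribed by $B$, with $R < \psamconst_i$, and by the choice of $q$ we have $\abs{\disti{q}{C} - R} \leq \delta$. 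So $\splxt$ is a forbidden configuration in $\lptsi$.

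The main bookkeeping obstacle is the preliminary radius bound $R < \psamconst_i$, required to match the definition of forbidden configuration; the dimension constraint $\dim \splxt \leq m+1$ from \Defref{def:forbidden.config} is automatic because $\splxjoin{q}{\splxs^m}$ has dimension at most $m+1$. Otherwise the argument is essentially the one used in the flat-Euclidean case (\cite[Lemma 3.6]{boissonnat2014flatpert}), with the global sampling radius replaced by the local radius $\psamconst_i$ and all geometric quantities computed in the local metric $\gdisti$.
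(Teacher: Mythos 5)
Your proof is correct but takes a genuinely different route from the paper's. The paper's own proof is essentially a reduction: it notes that $\lptsi$ is a $(\psparseconst,\psamconst_i)$-net, cites~\cite[Lemma 3.6]{boissonnat2014flatpert} as a black box to conclude that all $m$-simplices of the boundary-aware subcomplex $\rdelsmhull{\lptsi}$ are $\flakebnd$-good and $\delta$-protected, and then spends the remainder of the argument on a convex-hull distance estimate showing $\str{p'_i;\delof{\lptsi}} \subseteq \rdelsmhull{\lptsi}$. You instead unroll the content of the cited flat-case lemma from scratch: arguing the contrapositive, you locate a minimal non-$\flakebnd$-good face (hence a flake) in either $\splxs^m$ itself or in $\splxjoin{q}{\splxs^m}$, and you check the hypotheses of \Defref{def:forbidden.config} directly. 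Your preliminary bound $R < \psamconst_i$ plays the same structural role as the paper's containment in $\rdelsmhull{\lptsi}$ — both control circumradii so the forbidden-configuration machinery applies — but you derive it from a local density estimate rather than routing through the small-Delaunay subcomplex. Your version is self-contained (no reliance on the external Lemma 3.6 or the definition of $\rdelsmhull{\cdot}$), at the cost of being longer; the paper's version is shorter but opaque without the cited reference. Both are valid, and the case split in your argument (flake inside $\splxs^m$ versus flake in $\splxjoin{q}{\splxs^m}$ forced to contain $q$ by $\flakebnd$-goodness of $\splxs^m$) is the standard decomposition implicit in~\cite{boissonnat2014flatpert}.
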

\begin{proof}
  Since $\ppts_i$ is a $(\psparseconst, \psamconst_i)$-net for
  $\balli{p_i}{8\samconst_i}$, it follows that $\lptsi$ is a
  $(\psparseconst, \psamconst_i)$-net. 
  Thus if there are no forbidden configurations in $\lptsi$, then by
  \cite[Lemma 3.6]{boissonnat2014flatpert}, all the
  $m$-simplices in $\rdelsmhull{\lptsi}$ will be $\flakebnd$-good and
  $\delta$-protected, with $\delta = \delta_0 \psparseconst
  \psamconst_i$.

  The sampling criteria ensure that every point on
  $\bdry{\convhull{\lptsi}}$ must be at a distance of less than
  $2\psamconst_i$ from $\bdry{\balli{p_i}{6\samconst_i}}$. Thus
  $\disti{p_i}{\bdry{\convhull{\lptsi}}} > 6\samconst_i -
  2\psamconst_i \geq \frac{14}{5}\psamconst_i$.  Also,
  $\disti{p_i}{p'_i} \leq \frac{\psamconst_i}{4}$, and we find
  that $\disti{p'_i}{\bdry{\convhull{\lptsi}}} \geq \frac{51}{20}
  \psamconst_i$. Thus, since $\disti{p'_i}{\circcentre{\splxs}} <
  \psamconst_i$ if $\splxs$ is in $\str{p'_i ; \delof{\lptsi} }$, we
  have $\str{p'_i ; \delof{\lptsi}} \subseteq \rdelsmhull{\lptsi}$,
  and hence the result.
\end{proof}

According to \Lemref{lem:prop.forbid.cfg}~\ref{hyp:diam.bnd}, if
$\splxt$ is a forbidden configuration in $\lptsi$, then
$\longedge{\splxt} < \frac{15}{4} \samconst_i$, and it follows that if
$p'_j \in \splxt$, then $\splxt \subset \balli{p_j}{4
  \samconst_i}$. We will require that each potential forbidden
configuration in $\lptsi$ lies within the domain of any transition
function associated one of its vertices. Thus we demand that
\begin{equation}
  \label{eq:domain.req.forbid}
  \balli{p_j}{4 \samconst_i} \cap \balli{p_i}{6\samconst_i} \subset
  U_{ij} \quad \text{if } p_j \in \balli{p_i}{6\samconst_i}.
\end{equation}
For simplicity we accommodate Equations~\eqref{eq:domain.req.metric}
and \eqref{eq:domain.req.forbid} by demanding that
\begin{equation}
  \label{eq:trans.domain.req}
  \balli{p_j}{9 \samconst_i} \cap \balli{p_i}{6\samconst_i} \subset
  U_{ij} \quad \text{if } p_j \in \balli{p_i}{6\samconst_i}.
\end{equation}

In summary, Lemmas \ref{lem:del.metric.stability},
\ref{lem:inclusion.star.equivalences}, and \ref{lem:protected.stars.appdx}
combine to yield the desired equivalence of
stars~\eqref{eq:desired.star.equiv}, under the assumption that $\lptsi$
has no forbidden configurations.  We take $U =
\balli{p_i}{6\samconst_i}$ in \Lemref{lem:del.metric.stability}, and
\Eqnref{eq:metric.distortion} yields $\metpert \leq \metlipconst 12
\samconst_i$. Using $\delta = \flakebnd^{m+1}\psparseconst
\psamconst_i \geq \frac{1}{2}\flakebnd^{m+1}\sparseconst \samconst_i$
we have:
\begin{lem}[Stable stars]
  \label{lem:stable.stars.appdx}
  \label{lem:stable.stars}
  If
  \begin{equation*}
    \metlipconst \leq \frac{\flakebnd^{2m+1}\sparseconst^2}{2^{12}},
  \end{equation*}
  and there are no forbidden configurations in $\lptsi$, then for all
  $p'_j \in \str{p'_i; \delof{\ppts_i}}$, we have
  \begin{equation*}
    \str{p'_i; \delof{\ppts_i}} \cong \str{p'_i; \delof{\ppts_j}}.
  \end{equation*}
\end{lem}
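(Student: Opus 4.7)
The plan is to combine the three preceding lemmas: \Lemref{lem:protected.stars}, \Lemref{lem:inclusion.star.equivalences}, and \Lemref{lem:del.metric.stability}, in a direct way, checking only that the quantitative bound on $\metlipconst$ is strong enough to satisfy the hypothesis of the metric-stability lemma.

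First I would fix the region of interest $U = \balli{p_i}{6\samconst_i}$ and the set $\lptsi = \ppts_i \cap U$, which is a $(\psparseconst,\psamconst_i)$-net by the perturbation bounds. Since there are no forbidden configurations in $\lptsi$ by hypothesis, \Lemref{lem:protected.stars} gives that every $m$-simplex in $\str{p'_i; \delof{\lptsi}}$ is $\flakebnd$-good (in particular $\flakebnd^m$-thick) and $\delta$-protected with $\delta = \flakebnd^{m+1}\psparseconst\psamconst_i$. Moreover, by \Lemref{lem:inclusion.star.equivalences}, this star equals $\str{p'_i; \delof{\ppts_i}}$ and each of its vertices $p$ satisfies $\disti{p}{\bdry{U}} > 2\psamconst_i$, so the boundary condition demanded by \Lemref{lem:del.metric.stability} is met with $\qpts = \{p'_i\}$.

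Next I would take $\gdistj$ to be the pullback via $\transij$ of the Euclidean metric on $U_j$ (which is well-defined on $U$ provided $p'_j \in \str{p'_i; \delof{\ppts_i}}$, exactly the domain condition verified in \Lemref{lem:inclusion.star.equivalences}). The metric-distortion bound~\eqref{eq:metric.distortion} together with $\mathrm{diam}(U)=12\samconst_i$ gives $\abs{\disti{x}{y}-\distj{x}{y}} \leq \metpert$ for all $x,y \in U$ with $\metpert \leq 12\metlipconst\samconst_i$. The hypothesis of \Lemref{lem:del.metric.stability} is $\metpert \leq \frac{\flakebnd^m\psparseconst}{36}\delta$. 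Substituting and using $\delta \geq \tfrac{1}{2}\flakebnd^{m+1}\sparseconst\samconst_i$ together with $\psparseconst \geq \tfrac{2}{5}\sparseconst$, this reduces to an inequality of the form $\metlipconst \leq c\,\flakebnd^{2m+1}\sparseconst^2$ for an absolute constant $c$, and the assumed bound $\metlipconst \leq \frac{\flakebnd^{2m+1}\sparseconst^2}{2^{12}}$ is comfortably stronger.

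With the hypotheses checked, \Lemref{lem:del.metric.stability} yields
\begin{equation*}
  \str{p'_i;\del{\gdisti}{\lptsi}} \;=\; \str{p'_i;\del{\gdistj}{\lptsi}},
\end{equation*}
and since $\del{\gdistj}{\lptsi}$ is isomorphic to $\delof{\transij(\lptsi)}$, the second equality of \Lemref{lem:inclusion.star.equivalences} gives $\str{p'_i;\delof{\transij(\lptsi)}} = \str{p'_i;\delof{\ppts_j}}$. Chaining these equalities with the first equality of \Lemref{lem:inclusion.star.equivalences} on the $i$-side, $\str{p'_i;\delof{\lptsi}} = \str{p'_i;\delof{\ppts_i}}$, produces the desired isomorphism $\str{p'_i;\delof{\ppts_i}} \cong \str{p'_i;\delof{\ppts_j}}$. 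The only real obstacle is bookkeeping the constants from the $(\sparseconst,\samconst)$ to $(\psparseconst,\psamconst)$ conversion tightly enough to land inside the stated $2^{-12}$ denominator, but the slack in the perturbation bounds $\psamconst_i \leq \tfrac{5}{4}\samconst_i$ and $\psparseconst \geq \tfrac{2}{5}\sparseconst$ makes this routine.
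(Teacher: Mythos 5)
Your proposal is correct and follows essentially the same route as the paper: it combines \Lemref{lem:protected.stars}, \Lemref{lem:inclusion.star.equivalences}, and \Lemref{lem:del.metric.stability} with $U = \balli{p_i}{6\samconst_i}$, the bound $\metpert \leq 12\metlipconst\samconst_i$, and $\delta \geq \tfrac{1}{2}\flakebnd^{m+1}\sparseconst\samconst_i$ (which holds because $\psparseconst\psamconst_i = (\sparseconst - 2\tpertbnd)\samconst_i$, so the $(1+\tpertbnd)$ factors cancel). Your arithmetic with $\psparseconst \geq \tfrac{2}{5}\sparseconst$ gives a required threshold of $\flakebnd^{2m+1}\sparseconst^2/2160$, which the stated $2^{-12}$ bound indeed implies, exactly as the paper intends.
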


We have established minimal requirements on the size of the domains
$U_{ij}$, but these requirements may implicitly demand more. Although
$\transij: U_{ij} \to U_{ji}$ is close to an isometry, $\samconst_i$
may be almost twice as large as $\samconst_j$. Thus the requirement on
$U_{ij}$ may imply that $U_{ji} = \transij(U_{ij})$ is significantly
larger than \Eqnref{eq:trans.domain.req} demands.

Clearly we must have
\begin{equation*}
  \bigcup_{j \in \nbrsi} U_{ij} \subset U_i.
\end{equation*}
We have also explicitly demanded that $\pts_i$ be a $(\sparseconst,
\samconst_i)$-net for $\balli{p_i}{8\samconst_i}$. We will assume that
$\balli{p_i}{8\samconst_i} \subset U_i$.

\subsection{Hoop distortion}
\label{sec:hoop.distort}

We will rely primarily on Properties~\ref{hyp:clean.hoop.bnd} and
\ref{hyp:good.facets} of forbidden configurations (
\Lemref{lem:prop.forbid.cfg}), and the stability of the circumcentres
exhibited by \Lemref{lem:bound.cc}. We have the following observation
about the properties of forbidden configurations under the influence
of the transition functions:
\begin{lem}
  \label{lem:raw.hoop.distort}
  Assume $\metlipconst \leq \left( \frac{\flakebnd^k}{4} \right)^2$.
  If $\splxt = \splxjoin{p'_i}{\splxs} \subset \lptsj \subset U_j$ is
  a forbidden configuration, where $\splxs$ is a $k$-simplex,
  then
  $\tsplxs = \trans{j}{i}(\splxs) \subset \ppts_i$ is
  $\tflakebnd^k$-thick, with
  \begin{equation*}
    \tflakebnd^k = \frac{2}{5 \sqrt{k}} \flakebnd^k,
  \end{equation*}
  has a radius satisfying
  \begin{equation*}
    \circrad{\tsplxs} \leq  2 \left( 1 + \frac{16 k^{\frac{3}{2}}
        \metlipconst}{\flakebnd^{3k} } \right)(1 + \lipsamconst)\samconst_i,
  \end{equation*}
  and $\disti{p'_i}{\diasphere{\tsplxs}} \leq 2\thoopbnd\samconst_i$,
  where
  \begin{equation*}
    \thoopbnd = 
    \left(
          \hoopbnd(1 +  \metlipconst) +
          \left(
            \frac{ 12 k^{\frac{3}{2}} }{ \flakebnd^{2k} }
          \right)\metlipconst^{\frac{1}{2}} 
        \right)(1 + \lipsamconst).
 \end{equation*}
\end{lem}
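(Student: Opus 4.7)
The plan is to chain together the facet-goodness property of forbidden configurations with the three distortion lemmas of \Secref{sec:splx.distort} and \Secref{sec:distortion.maps}. Write $\splxt = \splxjoin{p'_i}{\splxs}$ as in the statement, all measured in $\gdistj$. First I would invoke Property~\ref{hyp:good.facets} to say that $\splxs$, being a facet of the forbidden configuration $\splxt$, is $\flakebnd$-good, so $\thickness{\splxs} \geq \flakebnd^k$. The hypothesis $\metlipconst \leq (\flakebnd^k/4)^2$ then matches the setup of \Lemref{lem:intrinsic.thick.distortion} with $\eta = 1/2$, yielding
\begin{equation*}
  \thickness{\tsplxs} \geq \tfrac{4}{5\sqrt{k}}\bigl(1 - \tfrac{1}{4}\bigr)\flakebnd^k
  = \tfrac{3}{5\sqrt{k}}\flakebnd^k \geq \tfrac{2}{5\sqrt{k}}\flakebnd^k,
\end{equation*}
which is the stated thickness bound.

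Next, the circumradius bound comes from \Lemref{lem:circ.rad.distort}: the bound on $\metlipconst$ suffices, and Property~\ref{hyp:clean.facet.rad.bnd} applied to the facet opposite $p'_i$ gives $\circrad{\splxs} < 2\samconst_j$, which I convert to $\samconst_i$ using $\samconst_j \leq (1+\lipsamconst)\samconst_i$. This yields
\begin{equation*}
  \circrad{\tsplxs} \leq \Bigl(1 + \tfrac{16 k^{3/2}\metlipconst}{\flakebnd^{3k}}\Bigr)\circrad{\splxs} \leq 2\Bigl(1 + \tfrac{16 k^{3/2}\metlipconst}{\flakebnd^{3k}}\Bigr)(1+\lipsamconst)\samconst_i,
\end{equation*}
matching the stated circumradius bound.

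The hoop bound is the main work. From Property~\ref{hyp:clean.hoop.bnd} applied to $\splxt$ with vertex $p'_i$ opposite $\splxs$, and the observation $\circsphere{\splxs} \subset \diasphere{\splxs}$, I get
\begin{equation*}
  \bigl|\distj{p'_i}{\circcentre{\splxs}} - \circrad{\splxs}\bigr| \leq \hoopbnd\circrad{\splxs}.
\end{equation*}
I then split $\disti{p'_i}{\circcentre{\tsplxs}} - \circrad{\tsplxs}$ into three telescoping pieces: the circumcentre displacement under $\trans{j}{i}$, the hoop term, and the circumradius change. For the first piece I combine the metric distortion bound \eqref{eq:metric.distortion} with \Lemref{lem:bound.cc} via
\begin{equation*}
  \bigl|\disti{p'_i}{\circcentre{\tsplxs}} - \distj{p'_i}{\circcentre{\splxs}}\bigr|
  \leq \metlipconst\distj{p'_i}{\circcentre{\splxs}} + \disti{\trans{j}{i}(\circcentre{\splxs})}{\circcentre{\tsplxs}},
\end{equation*}
so that the first summand is at most $\metlipconst(1+\hoopbnd)\circrad{\splxs}$ and the second at most $(42k^2\metlipconst/\flakebnd^{3k})^{1/2}\circrad{\splxs}$. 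The circumradius difference contributes $16k^{3/2}\metlipconst\circrad{\splxs}/\flakebnd^{3k}$ by \Lemref{lem:circ.rad.distort}.

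The last step, which is the only mildly tricky bookkeeping, is to rewrite all error terms in the common form $k^{3/2}\metlipconst^{1/2}/\flakebnd^{2k}$. Using $\metlipconst^{1/2} \leq \flakebnd^k/4$, I replace $\metlipconst$ by $(\flakebnd^k/4)\metlipconst^{1/2}$ where needed: the raw $\metlipconst$ term is bounded by $\tfrac{1}{4}k^{3/2}\metlipconst^{1/2}/\flakebnd^{2k}$, the $(42)^{1/2}k/\flakebnd^{3k/2}$ term by $7k^{3/2}\metlipconst^{1/2}/\flakebnd^{2k}$ (since $\flakebnd \leq 1$ and $k \leq k^{3/2}$), and the third term by $4k^{3/2}\metlipconst^{1/2}/\flakebnd^{2k}$. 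Adding gives a coefficient at most $12$, so
\begin{equation*}
  \bigl|\disti{p'_i}{\circcentre{\tsplxs}} - \circrad{\tsplxs}\bigr|
  \leq \Bigl[\hoopbnd(1+\metlipconst) + \tfrac{12 k^{3/2}}{\flakebnd^{2k}}\metlipconst^{1/2}\Bigr]\circrad{\splxs},
\end{equation*}
and combining with $\circrad{\splxs} < 2(1+\lipsamconst)\samconst_i$ yields the announced bound $\disti{p'_i}{\diasphere{\tsplxs}} \leq 2\thoopbnd\samconst_i$ with $\thoopbnd$ as stated. The main obstacle is purely algebraic, namely packaging three dissimilar error terms into a single compact expression; everything geometric is already supplied by the earlier distortion lemmas.
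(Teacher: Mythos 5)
Your proposal is correct and follows essentially the same route as the paper: facet-goodness gives $\thickness{\splxs}\geq\flakebnd^k$, which feeds into \Lemref{lem:intrinsic.thick.distortion} (with $\eta=1/2$), \Lemref{lem:circ.rad.distort}, and \Lemref{lem:bound.cc}, and the hoop bound is obtained by the same telescoping $\abs{\disti{p'_i}{\circcentre{\tsplxs}}-\circrad{\tsplxs}}\leq\abs{\disti{p'_i}{\circcentre{\tsplxs}}-\distj{p'_i}{\circcentre{\splxs}}}+\abs{\distj{p'_i}{\circcentre{\splxs}}-\circrad{\splxs}}+\abs{\circrad{\splxs}-\circrad{\tsplxs}}$ followed by the same $\flakebnd^{2k}$ repackaging. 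If anything your write-up is cleaner, since applying the triangle inequality directly to absolute values avoids the paper's aside about which of the two signs gives the looser bound, and you make explicit the step $\circsphere{\splxs}\subset\diasphere{\splxs}$ that the paper uses silently.
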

\begin{proof}
  The bound for $\tflakebnd^k$ follows immediately from
  \Lemref{lem:intrinsic.thick.distortion}, and the fact that $\splxs$
  is $\flakebnd^k$-thick
  (\Lemref{lem:prop.forbid.cfg}~\ref{hyp:good.facets}). Likewise, the
  radius bound is a direct consequence of
  \Lemref{lem:circ.rad.distort} and
  \Lemref{lem:prop.forbid.cfg}~\ref{hyp:clean.facet.rad.bnd}.

  The bound on $\thoopbnd$ is obtained from
  Property~\ref{hyp:clean.hoop.bnd} with the aid of Lemmas
  \ref{lem:circ.rad.distort} and \ref{lem:bound.cc}.  We have
  $\disti{p'_i}{\diasphere{\tsplxs}} =
  \abs{\disti{p'_i}{\circcentre{\tsplxs}} - \circrad{\tsplxs}}$, and
  we are able to get a tighter upper bound on $\circrad{\tsplxs} -
  \disti{p'_i}{\circcentre{\tsplxs}}$, than we can for
  $\disti{p'_i}{\circcentre{\tsplxs}} - \circrad{\tsplxs}$. Thus
  \begin{equation*}
    \begin{split}
      \disti{p'_i}{\diasphere{\tsplxs}}
      &=
      \abs{\disti{p'_i}{\circcentre{\tsplxs}} - \circrad{\tsplxs}}\\
      &\leq
      (1 + \metlipconst)\distj{p'_i}{\circcentre{\splxs}} +
      \disti{\trans{j}{i}(\circcentre{\splxs})}{\circcentre{\tsplxs}}
      - (\circrad{\splxs} - \abs{ \circrad{\tsplxs} - \circrad{\splxs} })\\
      &\leq
      (1 + \metlipconst)(\hoopbnd \circrad{\splxs} + \circrad{\splxs}) 
      - \circrad{\splxs}
      + \disti{\trans{j}{i}(\circcentre{\splxs})}{\circcentre{\tsplxs}}
      + \abs{ \circrad{\tsplxs} - \circrad{\splxs} }\\
      &\leq
      \left( \hoopbnd (1 + \metlipconst) 
        +
        \metlipconst 
        +
        \left[ \left( \frac{42 k^2}{\thickness{\splxs}^3} \right)  \metlipconst
        \right]^{\frac{1}{2}} 
        +
        \frac{16 k^{\frac{3}{2}} \metlipconst}{\thickness{\splxs}^3 }
      \right) \circrad{\splxs}\\
      &\leq
      2\left(
        \hoopbnd(1 +  \metlipconst)
        +
        \metlipconst
        +
        \left[ \left( \frac{42 k^2}{\flakebnd^{3k}} \right)  \metlipconst
        \right]^{\frac{1}{2}} 
        +
        \frac{16 k^{\frac{3}{2}} \metlipconst}{ \flakebnd^{3k}}
      \right) \samconst_j\\ 
        &\leq
        2\left(
          \hoopbnd(1 +  \metlipconst) +
          \left(
            \metlipconst^{\frac{1}{2}}
            +
            \frac{7k}{ \flakebnd^{\frac{3k}{2}} } 
            +
            \frac{16 k^{\frac{3}{2}} \metlipconst^{\frac{1}{2}}}{ \flakebnd^{3k} }
          \right)\metlipconst^{\frac{1}{2}}
        \right)(1 + \lipsamconst) \samconst_i\\
        &\leq
        2\left(
          \hoopbnd(1 +  \metlipconst) +
          \left(
            \frac{ 12 k^{\frac{3}{2}} }{ \flakebnd^{2k} }
          \right)\metlipconst^{\frac{1}{2}} 
        \right)(1 + \lipsamconst) \samconst_i.
   \end{split}
  \end{equation*}
\end{proof}
We have abused the notation slightly because $\tsplxt =
\trans{j}{i}(\splxt)$ need not actually satisfy the
$\thoopbnd$-hoop property definition
$\disti{p}{\circsphere{\opface{p}{\tsplxt}}} \leq \thoopbnd
\circrad{\opface{p}{\tsplxt}}$, because $\circrad{\tsplxt}$ may be
larger than $2\samconst_i$. However we are not concerned with the
$\thoopbnd$-property for $\tsplxt$; instead we desire a condition that
will permit the extended algorithm to emulate the original Euclidean
perturbation algorithm \cite{boissonnat2014flatpert}, and
 guarantee that forbidden configurations such as $\splxt$ cannot exist
 in any of the sets $\lptsj$.

The bounds in \Lemref{lem:raw.hoop.distort} can be further
simplified. We have announced them in this intermediate state in order
to elucidate the roles played by $\metlipconst$ and $\lipsamconst$.
In particular, there is no need to significantly constrain
$\lipsamconst$. The original perturbation algorithm for points in
Euclidean space \cite{boissonnat2014flatpert} extends to the case
of a non-constant sampling radius simply by replacing $\hoopbnd$ by
$\thoopbnd \leq (1 + \lipsamconst)\hoopbnd \leq 2 \hoopbnd$, as can be
seen by setting $\metlipconst = 0$ in the expression for $\thoopbnd$.

In the general case of interest here, we see from the espression for
$\thoopbnd$ presented in \Lemref{lem:raw.hoop.distort}, that
$\metlipconst$ must be considerably more constrained with respect to
$\flakebnd$ if we are to obtain an expression for $\thoopbnd$ that
goes to zero as $\flakebnd$ goes to zero. For the purposes of the
algorithm, we do not require the bounds on the radius or the
thickness. 
\begin{lem}[Hoop distortion]
  \label{lem:hoop.distort.appdx}
  \label{lem:hoop.distort}
  If
  \begin{equation*}
    \metlipconst \leq \left( \frac{\flakebnd^{2m + 1}}{4} \right)^2,
  \end{equation*}
  then for any forbidden configuration $\splxt =
  \splxjoin{p'_j}{\splxs} \subset \lptsi$, there is a simplex $\tsplxs
  = \transij(\splxs) \subset \ppts_j$ such that 
  $\distj{p'_j}{\diasphere{\tsplxs}} \leq 2\thoopbnd\samconst_j$,
  where
  \begin{equation*}
    \thoopbnd =  \frac{2^{16} m^{\frac{3}{2}} \flakebnd}{\sparseconst^3}.
  \end{equation*}
\end{lem}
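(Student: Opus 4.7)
My plan is to reduce the statement directly to \Lemref{lem:raw.hoop.distort} (applied with the roles of $i$ and $j$ swapped, which is legitimate since the hypotheses of that lemma are symmetric in the coordinate patches) and then simplify the resulting expression for $\thoopbnd$ using the strengthened bound on $\metlipconst$ assumed here. The only real work is constant bookkeeping; the key idea — that $\flakebnd^{2m+1}$ is chosen precisely to dominate the $\flakebnd^{2k}$ in the denominator of the error term — is already implicit in \Lemref{lem:raw.hoop.distort}.

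First I would check that the hypothesis propagates: since $\splxt = \splxjoin{p'_j}{\splxs}$ is a forbidden $(k+1)$-simplex with $k+1 \leq m+1$, the facet $\splxs$ is a $k$-simplex with $k \leq m$. Assuming $\flakebnd \leq 1$ (which is part of the input hypotheses) we have $\flakebnd^{2m+1} \leq \flakebnd^k$, so $\metlipconst \leq (\flakebnd^{2m+1}/4)^2 \leq (\flakebnd^k/4)^2$, and \Lemref{lem:raw.hoop.distort} applies, producing
\begin{equation*}
  \thoopbnd_{\text{raw}} = \left( \hoopbnd(1 + \metlipconst) + \left( \frac{12 k^{3/2}}{\flakebnd^{2k}} \right)\metlipconst^{1/2} \right)(1 + \lipsamconst).
\end{equation*}

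Next I would bound each piece. From Property~\ref{hyp:clean.hoop.bnd}, $\hoopbnd = 2^{13}\flakebnd/\sparseconst^3$. The hypothesis on $\metlipconst$ gives the cancellation
\begin{equation*}
  \frac{12 k^{3/2}}{\flakebnd^{2k}} \,\metlipconst^{1/2} \leq \frac{12 k^{3/2}}{\flakebnd^{2k}} \cdot \frac{\flakebnd^{2m+1}}{4} = 3 k^{3/2} \flakebnd^{2(m-k)+1} \leq 3 m^{3/2} \flakebnd,
\end{equation*}
where the last inequality uses $k \leq m$, so the exponent of $\flakebnd$ is at least $1$. Also, \eqref{eq:chart.change.bnd} in Hypotheses~\ref{hyps:param.bnds} gives $(1+\lipsamconst)(1+\metlipconst) \leq 2$, hence $1 + \lipsamconst \leq 2$ and $1+\metlipconst$ is a harmless factor bounded by $2$ as well.

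Putting these together,
\begin{equation*}
  \thoopbnd_{\text{raw}} \leq \left( \frac{2^{13}\flakebnd}{\sparseconst^3}\cdot 2 + 3 m^{3/2}\flakebnd\right) \cdot 2 \leq \frac{2^{16} m^{3/2}\flakebnd}{\sparseconst^3},
\end{equation*}
after absorbing the second (additive) term into the first via the $1/\sparseconst^3 \geq 1$ factor and rounding constants up to the next power of two. This is the announced value of $\thoopbnd$, so \Lemref{lem:raw.hoop.distort} delivers the conclusion $\distj{p'_j}{\diasphere{\tsplxs}} \leq 2\thoopbnd \samconst_j$ directly. The only mild obstacle is verifying that the various multiplicative factors $(1+\metlipconst)$, $(1+\lipsamconst)$ really can be absorbed cleanly; this is where the mild constraint $\metlipconst \leq 1/16$ implied by the hypothesis, together with \eqref{eq:chart.change.bnd}, is used to keep the bookkeeping painless.
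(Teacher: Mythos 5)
Your proposal is correct and matches the paper's proof in essence: both reduce directly to \Lemref{lem:raw.hoop.distort} (with the indices $i,j$ swapped, which the paper does silently), substitute the bound $\hoopbnd = 2^{13}\flakebnd/\sparseconst^3$, exploit $\metlipconst^{1/2} \leq \flakebnd^{2m+1}/4$ with $k \leq m$ to make the error term $O(m^{3/2}\flakebnd)$, absorb $(1+\lipsamconst)$ and $(1+\metlipconst)$ using \eqref{eq:lipsambnd}, and round up to $2^{16}m^{3/2}\flakebnd/\sparseconst^3$. The only cosmetic difference is that the paper tracks the factor $1+\metlipconst \leq 1 + 2^{-4}$ a bit more tightly than your blanket bound of $2$, but that slack is irrelevant to the final constant.
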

\begin{proof}
  By the properties of a forbidden configuration, $\splxs$ is a
  $k$-simplex with $k \leq m$.
  From \Lemref{lem:raw.hoop.distort},
  \begin{equation*}
    \begin{split}
      \thoopbnd
      &= 
      \left(
        \hoopbnd(1 +  \metlipconst) +
        \left(
          \frac{ 12 k^{\frac{3}{2}} }{ \flakebnd^{2k} }
        \right)\metlipconst^{\frac{1}{2}} 
      \right)(1 + \lipsamconst)\\
      &\leq
      2\left(
        \frac{2^{13}\flakebnd}{\sparseconst^3}
        (1 +  \metlipconst)
       + \left(
          \frac{ 12 m^{\frac{3}{2}} }{ \flakebnd^{2m} } \right)
        \frac{\flakebnd^{2m + 1}}{4} 
      \right)\\
      &<
      \left(
        ( 1 +  2^{-4} )
        +  m^{\frac{3}{2}}
      \right) \frac{2^{14}\flakebnd}{\sparseconst^3}\\
      &< \frac{2^{16} m^{\frac{3}{2}} \flakebnd}{\sparseconst^3}.
   \end{split}
  \end{equation*}
\end{proof}

%
%

\subsection*{Acknowledgements}

We gratefully profited from discussions with Mathijs Wintraeken.  


\phantomsection
\bibliographystyle{alpha}
\addcontentsline{toc}{section}{Bibliography}
\bibliography{delrefs}

\begin{thebibliography}{BDG13b}

\bibitem[AB99]{amenta1999vf}
N.~Amenta and M.~Bern.
\newblock Surface reconstruction by {V}oronoi filtering.
\newblock {\em Discrete and Computational Geometry}, 22(4):481--504, 1999.

\bibitem[BDG13a]{boissonnat2013stab2.inria}
J.-D. Boissonnat, R.~Dyer, and A.~Ghosh.
\newblock Constructing intrinsic {D}elaunay triangulations of submanifolds.
\newblock Research Report RR-8273, INRIA, 2013.
\newblock arXiv:1303.6493.

\bibitem[BDG13b]{boissonnat2013stab1}
J.-D. Boissonnat, R.~Dyer, and A.~Ghosh.
\newblock The stability of {D}elaunay triangulations.
\newblock {\em Int. J. Comp. Geom. \& Appl.}, 23(04n05):303--333, 2013.
\newblock eprint: arXiv:1304.2947.

\bibitem[BDG14]{boissonnat2014flatpert}
J.-D. Boissonnat, R.~Dyer, and A.~Ghosh.
\newblock {D}elaunay stability via perturbations.
\newblock {\em Int. J. Comp. Geom. \& Appl.}, to appear, 2014.
\newblock (Preprint: arXiv:1310.7696).

\bibitem[BG14]{boissonnat2014tancplx.dcg}
J.-D. Boissonnat and A.~Ghosh.
\newblock Manifold reconstruction using tangential {D}elaunay complexes.
\newblock {\em Discrete and Computational Geometry}, 51(1):221--267, 2014.

\bibitem[Boo86]{boothby1986}
W.~M. Boothby.
\newblock {\em An Introduction to Differentiable Manifolds and Riemannian
  Geometry}.
\newblock Academic Press, Orlando, Florida, second edition, 1986.

\bibitem[BS07]{bobenko2007}
A.~I. Bobenko and B.~A. Springborn.
\newblock A discrete {L}aplace-{B}eltrami operator for simplicial surfaces.
\newblock {\em Discrete and Computational Geometry}, 38(4):740--756, 2007.

\bibitem[BWY11]{boissonnat2011aniso.tr}
J.-D. Boissonnat, C.~Wormser, and M.~Yvinec.
\newblock Anisotropic {D}elaunay mesh generation.
\newblock Technical Report RR-7712, INRIA, 2011.
\newblock (To appear in SIAM J. of Computing).

\bibitem[CDR05]{cheng2005}
S.-W. Cheng, T.~K. Dey, and E.~A. Ramos.
\newblock Manifold reconstruction from point samples.
\newblock In {\em SODA}, pages 1018--1027, 2005.

\bibitem[CG12]{canas2012}
G.~D. Ca{\~n}as and S.~J. Gortler.
\newblock Duals of orphan-free anisotropic {V}oronoi diagrams are embedded
  meshes.
\newblock In {\em SoCG}, pages 219--228, New York, NY, USA, 2012. ACM.

\bibitem[Del34]{delaunay1934}
B.~Delaunay.
\newblock Sur la sph\`ere vide.
\newblock {\em Izv. Akad. Nauk SSSR, Otdelenie Matematicheskii i Estestvennyka
  Nauk}, 7:793--800, 1934.

\bibitem[DVW14]{dyer2014riemsplx.arxiv}
R.~Dyer, G.~Vegter, and M.~Wintraecken.
\newblock Riemannian simplices and triangulations.
\newblock 2014.
\newblock Preprint: arXiv:1406.3740.

\bibitem[Dye10]{dyer2010thesis}
R.~Dyer.
\newblock {\em Self-{D}elaunay meshes for surfaces}.
\newblock PhD thesis, Simon Fraser University, Burnaby, Canada, 2010.

\bibitem[DZM08]{dyer2008sgp}
R.~Dyer, H.~Zhang, and T.~M\"{o}ller.
\newblock Surface sampling and the intrinsic {V}oronoi diagram.
\newblock {\em Computer Graphics Forum (Special Issue of Symp. Geometry
  Processing)}, 27(5):1393--1402, 2008.

\bibitem[HKV12]{hirani2012}
A.~N. Hirani, K.~Kalyanaraman, and E.~B. Vander{Z}ee.
\newblock {D}elaunay {H}odge star.
\newblock {\em Computer-Aided Design}, 45(2):540--544, 2012.

\bibitem[JP13]{jimenez2013}
D.~Jim{\'e}nez and G.~Petrova.
\newblock On matching point configurations.
\newblock Preprint accessed 2013.05.17:
  \url{http://www.math.tamu.edu/~gpetrova/JP.pdf}, 2013.

\bibitem[Lei99]{leibon1999}
G.~Leibon.
\newblock {\em Random {D}elaunay triangulations, the {T}hurston-{A}ndreev
  theorem, and metric uniformization}.
\newblock PhD thesis, UCSD, 1999.
\newblock arXiv:\-math/\-0011016v1.

\bibitem[LL00]{leibon2000}
G.~Leibon and D.~Letscher.
\newblock {D}elaunay triangulations and {V}oronoi diagrams for {R}iemannian
  manifolds.
\newblock In {\em SoCG}, pages 341--349, 2000.

\bibitem[LS03]{labelle2003}
F.~Labelle and J.~R. Shewchuk.
\newblock Anisotropic {V}oronoi diagrams and guaranteed-quality anisotropic
  mesh generation.
\newblock In {\em SoCG}, pages 191--200, 2003.

\bibitem[Thu97]{thurston1997}
W.~P. Thurston.
\newblock {\em Three-Dimensional Geometry and Topology}.
\newblock Princeton University Press, 1997.

\end{thebibliography}

\end{document}